\newtheorem{theorem}{Theorem}[section]
\newtheorem{observation}[theorem]{Observation}
\newtheorem{lemma}[theorem]{Lemma}
\newtheorem{corollary}[theorem]{Corollary}
\newtheorem{claim}[theorem]{Claim}
\newtheorem{problem}[theorem]{Problem}
\newtheorem{definition}[theorem]{Definition}
\newtheorem{remark}[theorem]{Remark}
\newtheorem{fact}[theorem]{Fact}
\newtheorem*{rep@theorem}{\rep@title}
\newcommand{\newreptheorem}[2]{%
\newenvironment{rep#1}[1]{%
 \def\rep@title{#2 \ref{##1}}%
 \begin{rep@theorem}}%
 {\end{rep@theorem}}}
\newcommand{\pref}{\prettyref}
\newcommand{\E}{{\mathbb{E}}}
\newcommand{\e}{\mathrm{e}}
\newcommand{\eps}{\varepsilon}
\newcommand{\cA}{\mathcal{A}}
\newcommand{\cD}{\mathcal{D}}
\newcommand{\cI}{\mathcal{I}}
\newcommand{\cN}{\mathcal{N}}
\DeclareMathOperator{\polylog}{polylog}
\newcommand{\OPT}{{\operatorname{OPT}\xspace}}
\newcommand{\Otilde}{\tilde{O}}
\newcommand{\NP}{{\bf NP}}
\newcommand{\BPTIME}{{\bf BPTIME}}
\newcommand{\TSAT}{$\mathbf{3SAT}$\xspace}
\newcommand{\TCSP}{$\mathbf{2CSP}$\xspace}
\newcommand{\Div}[2]{{\mathsf{D_{KL}}\left(#1 \Big\| #2\right)}}
\DeclareMathOperator{\den}{den}
\newcommand{\DKSab}[2]{\mathsf{DkS}\left(#1 , #2\right)}
\newcommand{\EC}[1]{}
\newcommand{\ignore}[1]{}%
\renewcommand{\epsilon}{\varepsilon}
\date{\nonumber}
\begin{document}

\title{ETH Hardness for Densest-$k$-Subgraph with Perfect Completeness}

\author{
Mark Braverman \thanks{Department of Computer Science, Princeton University, email: mbraverm@cs.princeton.edu. Research supported in part by an   an NSF CAREER award (CCF-1149888), a 
Turing Centenary Fellowship, a Packard Fellowship in Science and Engineering, and the Simons Collaboration on Algorithms and Geometry.}
\and
Young Kun Ko \thanks{Department of Computer Science, Princeton University, email: yko@cs.princeton.edu}
\and
Aviad Rubinstein \thanks{Department of Electrical Engineering and Computer Sciences, University of California at Berkeley, email: aviad@eecs.berkeley.edu}
\and
Omri Weinstein  \thanks{Department of Computer Science, Princeton University, email: oweinste@cs.princeton.edu}
}

\thispagestyle{empty}\maketitle\setcounter{page}{0}

\begin{abstract}

\ignore{We show that, assuming the (deterministic) Exponential Time Hypothesis, distinguishing between a graph with an induced $k$-clique
and a graph where all induced subgraphs of size $k'\ll k$ have (edge) density at most $1-\eps$, requires $n^{\tilde{\Omega}(\log n)}$ time.
Our hardness result essentially matches the quasi-polynomial algorithms of Feige and Seltser \cite{fs97} and Barman \cite{Barman14}, 
and strengthens previous hardness results for this problem which were based on average-case assumptions. 
Our reduction is inspired by the recent application of the ``birthday repetition" technique \cite{AIM14,BKW15}. Our analysis relies 
on information theoretical machinery and is similar in spirit to analyzing a parallel repetition of two-prover games in which the provers
may choose to answer some challenges multiple times, while completely ignoring other challenges. 
}

We show that, assuming the (deterministic) Exponential Time Hypothesis, distinguishing between a graph with an induced $k$-clique
and a graph in which all $k$-subgraphs have density at most $1-\eps$, requires $n^{\tilde{\Omega}(\log n)}$ time.
Our result essentially matches the quasi-polynomial algorithms of Feige and Seltser \cite{fs97} and Barman \cite{Barman14} for this problem,
and is the first one to rule out an additive PTAS for Densest $k$-Subgraph.
We further strengthen this result by showing that our lower bound continues to hold when, in the soundness case, 
even subgraphs smaller by a near-polynomial factor ($k' = k \cdot 2^{-\tilde \Omega (\log n)}$) are assumed to be at most $(1-\epsilon)$-dense.

Our reduction is inspired by recent applications of the ``birthday repetition" technique \cite{AIM14,BKW15}. Our analysis relies 
on information theoretical machinery and is similar in spirit to analyzing a parallel repetition of two-prover games in which the provers
may choose to answer some challenges multiple times, while completely ignoring other challenges.

\end{abstract}

\thispagestyle{empty}
\newpage


\section{Introduction} 
\label{sec:intro}

$k$-{\sc Clique} is one of the most fundamental problems in computer
science: given a graph, decide whether it has a fully connected induced subgraph
on $k$ vertices. Since it was proven \NP-complete by Karp \cite{Karp72},
extensive research has investigated the complexity of relaxed versions of this problem.

This work focuses on two natural relaxations of $k$-{\sc Clique} which have received significant 
attention from both algorithmic and complexity communities: 
The first one is to relax ``$k$", i.e. looking for a smaller subgraph:
\begin{problem}[Approximate Max Clique, Informal] \label{prob_1}
Given an $n$-vertex graph $G$, decide whether $G$ contains a clique of size $k$, or all induced cliques of $G$ are of size at most $\delta k$
for some $1> \delta(n)>0$.
\end{problem}
\noindent 
The second natural relaxation is to relax the ``Clique'' requirement,
replacing it with the more modest goal of finding a subgraph that is almost a clique:
\begin{problem}[Densest $k$-Subgraph with perfect completeness, Informal]  \label{prob_2}
Given an $n$-vertex graph $G$ containing a clique of size $k$, find an induced subgraphs of $G$ of size $k$ with (edge) density 
at least $(1-\eps)$, for some $1> \eps > 0$.
(More modestly, given an $n$-vertex graph $G$, decide whether $G$ contains a clique of size $k$, or all induced $k$-subgraphs of $G$ have 
density at most $(1-\eps)$).
\end{problem}

Today, after a long line of research 
\cite{fglss96, AS98-PCP, ALMSS98-PCP, Hastad96-clique_n^1-eps, Khot01-clique_coloring, Zuckerman07-clique_NP}
we have a solid understanding of the inapproximability
of Problem~\ref{prob_1}. In particular, we know that it is \NP-hard
to distinguish between a graph that has a clique of size $k$, and
a graph whose largest induced clique is of size at most $k'=\delta k$ for $\delta=1/n^{1-\epsilon}$ \cite{Zuckerman07-clique_NP}. 
The computational complexity of the second relaxation (Problem~\ref{prob_2})
remained largely open. There are a couple of quasi-polynomial algorithms that 
guarantee finding a $(1-\eps)$-dense $k$ subgraph in every graph containing a $k$-clique
\cite{fs97,Barman14}\footnote{Barman \cite{Barman14} approximates the {\sc Densest $k$-Bi-Subgraph}
problem. {\sc Densest $k$-Subgraph} can be handled via a simple
modification \cite{Barman15-DkS_vs_DkBS}.%
}, suggesting that this problem is not \NP-hard. Yet we know neither
polynomial-time algorithms, nor general impossibility results for this problem.

In this work we provide a strong evidence that the
aforementioned quasi-polynomial time algorithms for Problem~\ref{prob_2} \cite{fs97,Barman14}
are essentially tight, assuming the 
(deterministic) {\em Exponential Time Hypothesis} (ETH), which postulates that 
any deterministic algorithm for {\sc 3SAT} requires $2^{\Omega(n)}$ time \cite{ETH01}.
In fact, we show that under ETH, both parameters of the above relaxations are simultaneously 
hard to approximate:

\begin{theorem}[Main Result]   \label{thm_main_result}
There exists a universal constant $\epsilon>0$ such that, assuming
the (deterministic) Exponential Time Hypothesis, distinguishing between the following requires time $n^{\tilde{\Omega}\left(\log n\right)}$, where $n$ is the number of vertices of $G$.
\begin{description}
\item [{Completeness}] $G$ has an induced $k$-clique; and
\item [{Soundness}] Every induced subgraph of $G$ size $k'=k \cdot 2^{-\Omega(\frac{\log n}{\log\log n})}$
has density at most $1-\epsilon$,
\end{description}
\end{theorem}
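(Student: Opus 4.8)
The plan is to reduce \TSAT to the gap version of \DkS via a ``birthday repetition'' of a two-prover game, arranged so that an $n^{o(\log n/\log\log n)}$-time algorithm for the \DkS gap problem on $n$-vertex graphs would yield a sub-exponential algorithm for 3SAT, contradicting ETH. The graph $G$ we build will have $n$ vertices with $\log n=\tilde\Theta(\sqrt m)$, where $m$ is the size of an intermediate gap-2CSP; this relation is the source of the quasi-polynomial blow-up.

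First I would fix the starting point: applying a quasi-linear-size PCP reduction to 3SAT and combining it with (the sparsified form of) ETH yields, for a universal constant $\epsilon_0>0$, a two-prover one-round game / 2CSP $\cG$ on $m=n\cdot\polylog n$ variables over a constant-size alphabet $\Sigma$, with perfect completeness and soundness $1-\epsilon_0$, such that ETH rules out a $2^{\tilde\Omega(m)}$-time algorithm for the resulting distinguishing problem. It is convenient to massage $\cG$ into a ``free'' game, where the two provers' questions are drawn independently and essentially uniformly from the variable set and the verifier's test depends only on the two answers (ideally with the constraint hypergraph made regular). Now fix a birthday parameter $t=\tilde\Theta(\sqrt m)$ and build $G$: its vertices are all pairs $(S,\alpha)$ with $S$ a $t$-subset of the variables and $\alpha\colon S\to\Sigma$ an assignment satisfying every constraint of $\cG$ internal to $S$; two vertices $(S,\alpha)$ and $(S',\alpha')$ are adjacent iff $\alpha$ and $\alpha'$ agree on $S\cap S'$ and jointly satisfy every constraint of $\cG$ with one endpoint in $S$ and the other in $S'$. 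Then $n:=|V(G)|\le\binom{m}{t}|\Sigma|^t$, so $\log n=\tilde\Theta(\sqrt m)$ and hence $2^{\tilde\Omega(m)}=n^{\tilde\Omega(\log n)}$, which is precisely the running-time lower bound we want. Take $k:=\binom{m}{t}$; since at most $|\Sigma|^t=2^{O(\sqrt m)}=2^{O(\log n/\log\log n)}$ vertices can share a single subset, we get $k\ge n\cdot2^{-O(\log n/\log\log n)}$, which is exactly why the soundness guarantee will only be obtainable down to $k'=k\cdot2^{-\Omega(\log n/\log\log n)}$.

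\medskip
\noindent\textbf{Completeness} is immediate: an assignment $\sigma$ satisfying $\cG$ gives the clique $\{(S,\sigma|_S):|S|=t\}$ in $G$, of size exactly $\binom{m}{t}=k$, because any two of its vertices agree on the overlap of their subsets and all constraints are satisfied by the single global $\sigma$.

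\medskip
\noindent\textbf{Soundness} is the crux. Assuming $\cG$ has value at most $1-\epsilon_0$, I want to show every $W\subseteq V(G)$ with $|W|=k'$ induces density $<1-\epsilon$, for $\epsilon$ a small constant depending on $\epsilon_0$ and $|\Sigma|$; equivalently, from a $W$ of density $\ge1-\epsilon$ I must extract a randomized strategy for $\cG$ of value $>1-\epsilon_0$. The natural strategy is: to answer a question $x$, sample $(S,\alpha)\in W$ conditioned on $x\in S$ and reply $\alpha(x)$, each prover sampling independently. If $W$ were ``spread''---every variable lying in the same number of the subsets occurring in $W$---a direct calculation converts ``density $\ge1-\epsilon$'' into ``value $\ge1-O(\epsilon)$'', since a random question for a random $W$-vertex then looks uniform and adjacency in $G$ forces the two sampled answers to satisfy their constraint. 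The obstruction, and the reason information-theoretic machinery enters, is that $W$ need not be spread: it can pile up many distinct assignments on a few subsets while ignoring most subsets---exactly a prover strategy that answers some challenges many times and ignores others, the familiar enemy of parallel-repetition analyses. Here the size lower bound helps: $W$ must touch at least $k'/|\Sigma|^t=\binom{m}{t}\cdot2^{-\tilde\Theta(\log n)}$ distinct subsets, and the $2^{-\Omega(\log n/\log\log n)}$ slack in $k'$ is calibrated so that this still leaves enough structure. Concretely, I would sample a random pair $a=(S,\alpha),b=(S',\alpha')$ from $W$ (adjacent with probability $\ge1-O(\epsilon)$), view $(S\cup S',\alpha\cup\alpha')$ as a consistent partial assignment that satisfies all constraints internal to $S\cup S'$, and run a chain-rule / subadditivity-of-entropy argument: conditioning on a few low-information events decouples the two provers' answer distributions enough that they are jointly close to a global product distribution over assignments of $\cG$, which a short averaging argument shows must satisfy a $(1-O(\epsilon))$-fraction of the constraints. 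The step I expect to be the main obstacle is performing this dependency-breaking \emph{uniformly over all} $k'$-subsets $W$---controlling the concentration and overlap structure of an arbitrary dense $W$ while losing only a subpolynomial factor of $k$---so that the gap between completeness and soundness survives all the way down to $k'=k\cdot2^{-\Omega(\log n/\log\log n)}$. The remaining pieces (the PCP-plus-ETH preprocessing, the completeness direction, and translating $2^{\tilde\Omega(m)}$-hardness of $\cG$ into $n^{\tilde\Omega(\log n)}$-hardness of \DkS) are routine parameter bookkeeping.
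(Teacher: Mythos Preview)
Your construction, completeness argument, and parameter accounting are essentially identical to the paper's, so the only substantive content is the soundness step---and there your sketch has a real gap.

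You propose to turn a dense $W$ into a randomized prover strategy (``sample $(S,\alpha)\in W$ conditioned on $x\in S$ and answer $\alpha(x)$''), then use correlated-sampling/parallel-repetition-style dependency-breaking to argue this strategy has value close to $1$. But this conflates two orthogonal failure modes of $W$, and your proposed strategy is only good when \emph{neither} occurs. The first failure mode, which you identify, is that $W$ is not spread across subsets. The second, which you do not address, is that $W$ is spread across subsets but assigns \emph{inconsistent labels}: many vertices of $W$ contain the same variable $x$ but give it different colors. In that case your strategy has each prover answering $x$ randomly and independently, so the provers disagree with each other---the strategy has low value, yet this says nothing about the 2CSP. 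The density loss in this regime comes not from 2CSP violations at all but from \emph{consistency} violations internal to $W$ (non-edges between vertices that label a shared variable differently), and no amount of ``conditioning on low-information events'' turns inconsistent labelings into a good global strategy.

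The paper's soundness proof is organized precisely around this dichotomy. Writing the entropy of a uniform vertex of $W$ as $\sum_i \alpha_i + \sum_i \beta_i = \log k'$, with $\alpha_i$ the entropy of ``is variable $i$ chosen'' and $\beta_i$ the conditional entropy of its label, one case is $\sum_i \beta_i$ large: then a careful counting of ``good'' indices and prefixes shows the accumulated consistency violations alone already drop the density below $1-\delta$ (Lemma~\ref{lem:consistency}); no prover strategy is extracted here. The other case is $\sum_i \alpha_i$ nearly $\log k$: then concavity of entropy forces most variables to be ``typical'' (appearing in roughly a $\rho/n$-fraction of $W$), and one takes the \emph{deterministic mode assignment} (each variable gets its most frequent label in $W$); since this must violate an $\eta$-fraction of the 2CSP constraints, a second-moment bound on the number of tested constraints per pair forces $\Omega(1)$ density loss (Lemma~\ref{lem:2cspvio}). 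Your sketch has neither half of this dichotomy, and the Raz--Holenstein-style decoupling you gesture at is not what does the work here.
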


Our result has 
implications for two major open problems whose computational complexity remained elusive for more than two decades:
The (general) {\sc Densest $k$-Subgraph} problem, and the {\sc Planted Clique} problem.

The {\sc Densest $k$-Subgraph} problem, $\DKSab{\eta}{\eps}$, is the same as (the decision version of) Problem \ref{prob_2}, except 
that in the ``completeness" case, $G$ has a $k$-subgraph with density $\eta$, and in the ``soundness" case, every $k$-subgraph is of 
density at most $\eps$, where $\eta \gg \eps$. Since Problem \ref{prob_2} is a special case of this problem, our main theorem can also be 
viewed as a new inapproximability result for 
$\DKSab{1}{1-\eps}$. We remark that the aforementioned quasi-polynomial algorithms for the ``perfect completeness" regime completely 
break in the sparse regime, and indeed it is believed  
that $\DKSab{n^{-\alpha}}{n^{-\beta}}$ (for $k=n^{\eps}$) in fact requires much more than quasi-polynomial time \cite{Bhaskara:2012:PIG:2095116.2095150}.
The best to-date algorithm for {\sc Densest $k$-Subgraph} due to Bhaskara et. al, is guaranteed to find a $k$-subgraph whose density is within an $\sim n^{1/4}$-multiplicative 
factor of the densest subgraph of size $k$ \cite{Bhaskara:2012:PIG:2095116.2095150}, and thus $\DKSab{\eta}{\eps}$ can be solved efficiently whenever 
$\eta \gg n^{1/4}\cdot \eps$ (this improved upon a previous $n^{1/3}$-approximation of Feige et. al \cite{FPK01}).
Making further progress on either the lower or upper bound frontier of the 
problem is a major open problem.

Several  inapproximability results for {\sc Densest $k$-Subgraph} were known
against specific classes of algorithms \cite{Bhaskara:2012:PIG:2095116.2095150} or under assumptions that
are incomparable or stronger (thus giving weaker hardness results) than ETH: $\NP\nsubseteq\bigcap_{\epsilon>0}\BPTIME\left[2^{n^{\epsilon}}\right]$
\cite{Kho06}, Unique Games with expansion \cite{RS10}, and hardness of 
random $k$-CNF
\cite{Feige02, AAMMW11}. The most closely related result is by Khot \cite{Kho06}, who shows that the {\sc Densest $k$-Subgraph}
problem has no PTAS unless SAT is in {\em randomized} subexponential time. The result of \cite{Kho06}, as well as other aforementioned works, focus on the sub-constant density regime, i.e. they show hardness for distinguishing between a graph
where every $k$-subgraph is sparse, and one where every $k$-subgraph
is extremely sparse. In contrast, our result has perfect completeness and provides the first \emph{additive} 
inapproximability for {\sc Densest $k$-Subgraph} ---  the best one can hope for as per the upper 
bound of \cite{Barman14}.

The {\sc Planted Clique} problem is a special case of our problem,  where the
inputs come from a specific distribution ($G\left(n,p\right)$ versus
$G\left(n,p\right) + $ ``a planted clique of size $k$",  where $p$ is some constant,
typically $1/2$). 
The \emph{Planted Clique Conjecture} (\cite{AlonAKMRX07,AlonKS98, Jerrum92, Kucera95, FeigeK00, DekelGP10}) 
asserts that distinguishing between the aforementioned cases for $p=1/2, k=o(\sqrt{n})$ 
cannot be done in polynomial time, and has 
served as the underlying hardness assumption in a variety of recent applications including machine-learning and cryptography (e.g. \cite{AlonAKMRX07, balcan2013finding,berthet2013complexity})
that inherently use the average-case nature of the problem, 
as well as in reductions to worst-case problems (e.g. \cite{HazanK11, AAMMW11, CLLR15-amphibious, BPRSS15-seeding}). 

The main drawback of average-case hardness assumptions is that many average-case instances (even those of worst-case-hard problems) 
are in fact tractable.
In recent years, the centrality of the planted clique conjecture inspired several works that obtain 
lower bounds in restricted models of computation  
\cite{FGRVX13-statistical, MPW15-SOS_for_planted_clique, DM15-SoS-hidden_submatrix}. 
Nevertheless, a general lower bound for the average-case planted clique problem appears out of reach for existing lower bound techniques.
Therefore,
 an important potential application of our result is replacing average-case assumptions such as the planted-clique conjecture, in applications
that do not inherently rely on the distributional nature of the inputs (e.g., when the ultimate goal is to prove a worst-case hardness result).
In such applications, there is a good chance that planted clique hardness assumptions can be replaced with a more ``conventional'' hardness assumption, such as the ETH, even when the problem has a quasi-polynomial algorithm. 
Recently, such a replacement of the planted clique conjecture with ETH was obtained for the problem of finding an approximate Nash equilibrium 
with approximately optimal social welfare \cite{BKW15}. 

We also remark that, while showing hardness for {\sc Planted Clique} from worst-case assumptions seems
beyond the reach of current techniques, our result can also be seen
as circumstantial evidence that this problem may indeed be hard. In
particular, any polynomial time algorithm (if exists) would have to inherently
use the (rich and well-understood) structure of $G\left(n,p\right)$.



\subsubsection*{Techniques}

Our simple construction is inspired by the  ``birthday repetition'' technique which appeared recently in \cite{AIM14,BKW15, BPR15-PCP_for_PPAD}:
given a 2CSP (e.g. {\sc 3COL}), we have a vertex for each $\tilde{\Omega}\left(\sqrt{n}\right)$-tuple
of variables and assignments (respectively, {\sc 3COL} vertices
and colorings). We connect two vertices by an edge whenever their
assignments are consistent and satisfy all 2CSP constraints induced on these tuples. 
In the completeness case, a clique consists of choosing all the vertices
that correspond to a fixed satisfying assignment. 
In the soundness case (where the value of the 2CSP is low), 
the ``birthday paradox'' guarantees that most pairs of vertices vertices (i.e. two $\tilde{\Omega}\left(\sqrt{n}\right)$-tuples
of variables) will have a significant intersection (nonempty CSP constraints), thus 
resulting in lower densities whenever the 2CSP does not have a satisfying assignment. In the language of two-prover games, the intuition here is that  
the verifier has a ``constant chance in catching the players in a lie if thy are trying to cheat" in the game while not satisfying the CSP.

While our construction is simple, analyzing it is intricate. The main
challenge is to rule out a ``cheating'' dense subgraph that consists
of different assignments to the same variables (inconsistent colorings
of the same vertices in {\sc 3COL}). Intuitively, this is similar
in spirit to \emph{ proving a parallel repetition theorem where the provers
can answer some questions multiple times, and completely ignore other
questions}. 
Continuing with the parallel repetition metaphor, notice that the challenge is doubled: in addition to a cheating prover correlating her answers (the standard obstacle to parallel repetition), each prover can now also correlate which questions she chooses to answer.
Our argument follows by showing that a sufficiently large
subgraph must accumulate many non-edges (violations of either 2CSP
or consistency constraints). 
To this end we introduce an information
theoretic argument that carefully counts the entropy of choosing a
random vertex in the dense subgraph.

\subsection{Open problems}

There are several interesting open problems related to our work. We henceforth list 
four of them that are of particular interest and potential applications. 

\paragraph{Strengthening the inapproximability factor}

Our result states that it is hard to distinguish between a graph containing a $k$-clique and a graph that does not contain a very dense ($1-\delta$) $k$-subgraph.
The latter $(1-\delta)$ seems to be a limitation of our technique.
None of the algorithms we know (including the two quasi-polynomial time algorithms mentioned above) can distinguish in polynomial time between
a graph containing a $k$-clique and a graph that does not contain even a slightly dense ($\delta$) $k$-subgraph;
for any constant $\delta >0$, and in fact even for some sub-constant values of $\delta$.
Furthermore, there is evidence \cite{AAMMW11} that this problem may indeed be hard.
This naturally leads to the following problem.
\begin{problem}[Hardness Amplification] \label{pro:amp}
Show that for every given constant $\delta > 0$, distinguishing between the following two cases is ETH-hard:
\begin{itemize}
\item There exists $S \subset V$ of size $k$ such that $\den(S) = 1$.
\item All $S \subset V$ of size $k$ have $\den(S) \leq \delta$.
\end{itemize}
\end{problem}


We remark that a similar amplification, from ``clique versus dense'' ($\den(S) = 1$ vs.\ $\den(S) = 1-\delta$) to ``clique versus sparse'' ($\den(S) = 1$ vs.\ $\den(S) = \delta$),
was shown by Alon et al. when the ``clique vs.\ dense'' instance is drawn at random according to the planted clique model~\cite{AAMMW11}.
(Unfortunately, their techniques do not seem to apply to our hard instance.)

An easier variant of Problem \ref{pro:amp} is to show hardness for a large gap in the imperfect completeness regime.

\begin{problem}[Hardness Amplification - imperfect completeness]\label{prob:imperfect-completeness-factor}
Show that there exist parameters $0 < \eps \ll \eta < 1$ for which 
distinguishing between the following two cases is ETH-hard:
\begin{itemize}
\item There exists $S \subset V$ of size $k$ such that $\den(S) \geq \eta$.
\item All $S \subset V$ of size $k$ have $\den(S) \leq \eps $.
\end{itemize}
\end{problem}

\noindent We note that such gaps can be obtained from average-case hardness for a random $k$-CNF~\cite{AAMMW11} 
and from Unique Games with expansion~\cite{RS10}.

\paragraph{Beyond quasi-polynomial hardness}

Another interesting challenge is to trade the perfect completeness in our main result for 
stronger notions of hardness.
Indeed, there are substantial evidences which suggest that 
the ``sparse vs.\ very-sparse" regime ($\DKSab{\eta}{\eps}$) is much harder to solve.
The gap instance in~\cite{Bhaskara:2012:PIG:2095116.2095150}
where all known linear and semidefinite programming techniques fail is a very sparse instance
and has integrality gap of $\Omega(n^{2/53 - \eps})$. 
In particular, every vertex has degree $n^{1/2+o(1)}$,
compared to almost linear average degree in our instance. 
Since no other algorithms succeed in this regime (even in quasi-polynomial time), 
it is natural to look for stronger lower bounds on the running time.

\begin{problem}[Trading-off perfect completeness for stronger lower bounds]\label{prob:imperfect-completeness}
Show that there exist parameters $0 < \eps < \eta \ll 1$ for which 
distinguishing between the following two cases is \NP-hard:
\begin{itemize}
\item There exists $S \subset V$ of size $k$ such that $\den(S) \geq \eta$.
\item All $S \subset V$ of size $k$ have $\den(S) \leq \eps $.
\end{itemize}
\end{problem}

\ignore{
We would also like to point out that the hard instance in~\cite{Kho06} 
is a $(d_1,d_2)$-biregular graph where both $d_1$ and $d_2$ are some constants depending on $\eps$, which indeed is towards the direction of~Problem~\ref{prob:imperfect-completeness}. However, there has been no work on the direction of Problem~\ref{prob:imperfect-completeness-factor}.
}

\paragraph{Finding Stable Communities} 
The problem of finding \emph{Stable Communities} is tightly related to {\sc Densest $k$-Subgraph}, and has
received recent attention in the context of social networks and learning theory \cite{arora2012finding, arora2013new, balcan2013modeling}. 


\begin{definition}[{\sc Stable Communities} \cite{balcan2013finding}]
Let $\alpha, \beta$ with $\beta < \alpha \leq 1$ be two positive parameters. Given an undirected graph, $G = (V,E)$,
$S \subset V$ is an $(\alpha, \beta)$-cluster if $S$ is :
\begin{enumerate}
\item Internally Dense: $\forall i \in S$, $|\cN(i) \cap S| \geq \alpha |S|$.
\item Externally Sparse: $\forall i \notin S$, $|\cN(i) \cap S| \leq \beta |S|$.
\end{enumerate}
\end{definition}
Currently, only planted clique based hardness is known. 
\begin{theorem}[\cite{balcan2013finding}]
For sufficiently small (constant) $\gamma$, finding a $(1,1-\gamma)$ cluster is at least as hard as {\sc Planted Clique}.
\end{theorem}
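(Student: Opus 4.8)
The plan is to prove this by a reduction from {\sc Planted Clique}: I claim that any algorithm which, given a graph $G$ and a target size $k$, finds a $(1,1-\gamma)$-cluster of size $k$ whenever one exists can be used to distinguish $G(n,1/2)$ from $G(n,1/2)$ with a clique planted on a uniformly random $k$-subset. I would fix $k = n^{\epsilon}$ for a small constant $\epsilon>0$, which places $k$ in the conjectured-hard window $\omega(\log n)\le k\le o(\sqrt n)$ and, crucially, also guarantees $k>2\log_2 n$. The reduction simply runs the cluster-finder on $(G,k)$ and declares ``planted'' iff it returns a valid $(1,1-\gamma)$-cluster of size $k$.

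For the completeness direction, suppose $K$ is the planted $k$-clique. Then $K$ induces a clique and hence is internally dense with $\alpha=1$. For external sparsity, fix $v\notin K$: the edges between $v$ and $K$ are untouched by the planting, so $|\cN(v)\cap K|\sim\mathrm{Bin}(k,1/2)$, and a Chernoff bound gives $\Pr[\,|\cN(v)\cap K|>(1-\gamma)k\,]\le e^{-\Omega_\gamma(k)}$ provided $\gamma<1/2$ — this is precisely where ``$\gamma$ sufficiently small'' enters, since a typical outside vertex already sees about half of $K$. Because $k=n^\epsilon\gg\log n$, a union bound over the (at most $n$) vertices $v\notin K$ shows that with high probability $K$ is a genuine $(1,1-\gamma)$-cluster of size $k$, so the cluster-finder must return some cluster of size $k$.

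For the soundness direction, take $G=G(n,1/2)$. The standard first-moment estimate $\E[\#\{\text{cliques of size }s\}]=\binom{n}{s}2^{-\binom{s}{2}}=o(1)$ for all $s\ge 2\log_2 n+2$ shows that with high probability $G$ contains no clique of size $k$. Since any $(\alpha,\beta)$-cluster with $\alpha=1$ induces a clique (internal density $1$ means every pair of $S$ is adjacent), $G$ then has no $(1,1-\gamma)$-cluster of size $k$ and the cluster-finder cannot return one. Thus the reduction correctly distinguishes the two {\sc Planted Clique} cases, contradicting the hardness assumption.

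The only delicate point — and the one I would expect to occupy most of the write-up — is choosing $k$ and $\gamma$ so that all three requirements hold simultaneously: external sparsity of the planted clique forces $k=\omega(\log n)$ and $\gamma$ bounded away from $1$; emptiness of large cliques in $G(n,1/2)$ forces $k>2\log_2 n$; and the {\sc Planted Clique} assumption itself forces $k=o(\sqrt n)$. The choice $k=n^\epsilon$ (any small constant $\epsilon$) together with any constant $\gamma<1/2$ satisfies all of them, and the remaining estimates are routine Chernoff, union-bound, and first-moment calculations — notably, none of the information-theoretic machinery required for the main theorem is needed here.
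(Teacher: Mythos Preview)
The paper does not prove this theorem; it is simply quoted from \cite{balcan2013finding} as background for an open problem, so there is no in-paper proof to compare against.

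That said, your argument is correct and is exactly the natural reduction one would expect. The planted $k$-clique $K$ is itself a $(1,1-\gamma)$-cluster with high probability: internal density $1$ is immediate from the planting, and external sparsity follows since each $v\notin K$ has $|\cN(v)\cap K|\sim\mathrm{Bin}(k,1/2)$, which by Chernoff exceeds $(1-\gamma)k$ with probability $e^{-\Omega_\gamma(k)}$ for any constant $\gamma<1/2$; the union bound over $n$ outside vertices succeeds once $k=\omega(\log n)$. Conversely, $G(n,1/2)$ contains no clique of size $k>2\log_2 n$ with high probability, and since a $(1,\beta)$-cluster must induce a clique, no such cluster of size $k$ exists. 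Your choice $k=n^{\epsilon}$ places $k$ simultaneously in the conjectured-hard regime for {\sc Planted Clique}, above the clique-number of $G(n,1/2)$, and large enough for the external-sparsity Chernoff/union bound --- exactly as you note. The verification that a returned set is a valid $(1,1-\gamma)$-cluster is polynomial-time, so the reduction is efficient. Nothing from the paper's information-theoretic toolkit is needed here, as you correctly point out.
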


As insinuated in the introduction, we believe it is plausible and interesting to see whether the hardness assumption of the theorem above
can be replaced with ETH.

\begin{problem}[Hardness of {\sc Stable Communities}] 
Show that for some $\alpha, \beta$ with  $\beta < \alpha \leq 1$, finding an $(\alpha,\beta)$-cluster $S$ is ETH-hard.
\end{problem}
\ignore{ 
\begin{remark}
Our current technique of mapping all possible local assignments (i.e. assignments on $\sqrt{n}$ variables) seems to fail due to following reason.
Recall that in completeness part of the reduction, we not only need a dense subset, but an `isolated' dense subset. That is, all the vertices outside this subset should not be well-connected to the subset. 
First, consider a satisfying assignment and corresponding subset $S \subset V$ on the graph.
Now flip an assignment to one of the variables, then consider corresponding subset $S' \subset V$ on the graph. 
Note that $S$ and $S'$ should largely overlap, but $S \neq S'$. 

Take $v \in S' \backslash S$. Though $v$ is indeed not in $S$, it is indeed well-connected to $S$.
\end{remark}
}

\ignore{
\paragraph{Increasing the subset size} Finally, it would be very interesting to understand to what extent can the parameter $k$, the size of the dense 
subgraph in our main theorem, be increased. 
\ignore{
{\color{blue} A: Why do we want $d = O(1)$? (This is very far from our setting...)
Also, why do we want $c,s$ to be constants?}
}

\begin{problem}[Hardness of {\sc Densest-$\Omega(n)$-Subgraph}] \label{open_prob_sse}
Show that there exist some constant $\delta > 0$ and parameters $0 < s < c < 1$, 
such that given a $d$-regular graph $G = (V,E)$, it is ETH-hard to distinguish the following:
\begin{itemize}
\item There exists $S \subset V$ such that $\den(S) \geq c$;
\item For all $S \subset V$, $\den(S) \leq s$,
\end{itemize}
where $|S| = \delta n $.
\end{problem}

Notice that this question is tightly related to the problem of approximating the small-set expansion of $G$.

\begin{definition}[Small-Set Expansion]
Given an undirected graph $G = (V,E)$, the (edge) expansion of set $S \subset V$ is defined as
\begin{equation*}
\Phi_G (S) 
= 1 - \frac{|S|^2 \den(S)}{\sum_{i \in S} \deg(i)}.
\end{equation*}
The {\em small-set expansion} of $G$ is defined the minimum expansion over all small sets:
\begin{equation*}
\Phi_G (\delta) 
 = 1- \max_{|S| = \delta n} \frac{|S|^2 \den(S)}{\sum_{i \in S} \deg(i)}.
\end{equation*}
\end{definition}

\ignore{

\ignore{ 
{\color{blue} A: I am not sure that this sentence is accurate. I think that this is the fraction of end-points that belong to cross-edges out of all end-points.
(This is slightly different than fraction of edges since we count inside edges twice.) We could just remove it...}
} 
\ignore{
Under this definition, we can define small set expansion formally, that is, 
{\color{blue} A: I think that this definition is confusing, since immediately after we define the small set expansion problem...but we don't want to call it a "problem" since that would confuse "computational problem" with "open problem". I suggest moving this part to the next definition.}
\begin{equation*}
\Phi_G (\delta) = \min_{|S| = \delta \cdot n} \Phi_G (S) = 1- \max_{|S| = \delta \cdot n} \frac{|S|^2 \den(S)}{\sum_{i \in S} \deg(i)}.
\end{equation*}
}
\end{definition}

\begin{definition}[{\sc Gap-Small Set Expansion} $\big(\delta, c, s\big)$]
Given a graph $G$
, distinguish between
\begin{equation*}
\Phi_G (\delta) \geq c ~~~~~~~~\mbox{and}~~~~~~~~~ \Phi_G (\delta) \leq s,
\end{equation*}
where 
$\Phi_G (\delta) = \min_{|S| = \delta \cdot n} \Phi_G (S)$.
\end{definition}
\ignore{
Observe that for a $d$-regular graph, with $d = O(1)$
if there exists $S$ of size $\delta n$ with $\den(S) \geq c / n$, then
\begin{equation*}
\Phi_G (\delta) \leq \frac{d |S| - 2 \den(S) {|S| \choose 2}}{d |S|} = 1 - \frac{\den(S) |S|}{d} + o(1) \leq 1 - \frac{c \delta}{d} + o(1).
\end{equation*}
Similarly, if for all $S \subset V$ of size $\delta n$, $\den(S) \leq s / n$, $\Phi_G (\delta) \geq 1 - \frac{s \delta}{d}$.
Therefore, Problem \ref{open_prob_sse} is equivalent to showing ETH-hardness of Small Set Expansion.
\ignore{
which would in turn imply hardness for the 
Unique Games problem due to~\cite{RS10} (where it is shown that SSE can be reduced to a unique game). 
}
\ignore{
{\color{blue} The premise in \cite{RS10} is that SSE is \NP-hard for $c=1-\eta$ and $s=\eta$ for any constant $\eta > 0$. 
If we want to talk about arbitrary constants $c,s$ I think that we should either not mention the connection to unique games,
or explain how come hardness for unique games still follows.}
}

\ignore{
\begin{theorem}[\cite{RS10}] 
There exists a reduction 
\end{theorem}
}
}
}
}

\section{Preliminaries}

Throughout the paper we use $\den(S)\in [0,1]$ to denote the density of subgraph $S$,
\[
\den(S) := \frac{\left| \big(S \times S\big) \cap E \right| }{|S \times S|}.
\]

\subsection{Information theory}

In this section, we introduce information-theoretic quantities used in this paper. For a more thorough introduction, the reader should refer to \cite{CT91}. Unless stated otherwise, all $\log$'s in this paper are base-$2$. 

\begin{definition}  
Let $\mu$ be a probability distribution on sample space $\Omega$. The \emph{Shannon entropy} (or just \emph{entropy}) of $\mu$, denoted by $H(\mu)$, is defined as $H(\mu) := \sum_{\omega \in \Omega} \mu(\omega) \log \frac{1}{\mu(\omega)}$.
\end{definition}

\begin{definition}[Binary Entropy Function]
For $p \in [0,1]$, the binary entropy function is defined as follows (with a slight abuse of notation) $H(p) := -p \log p - (1- p) \log (1-p)$. 
\end{definition}

\begin{fact}[Concavity of Binary Entropy]
\label{fact:concavity-entropy}
Let $\mu$ be a distribution on $[0,1]$, and let $p\sim \mu$. Then
$H(\E_\mu \left[p \right]) \geq \E_\mu \left[ H(p) \right]$.
\end{fact}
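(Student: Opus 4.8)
The plan is to recognize this statement as nothing more than Jensen's inequality applied to the concave function $H(\cdot)$, so that the only thing requiring verification is the concavity of the binary entropy function $p \mapsto H(p) = -p \log p - (1-p)\log(1-p)$ on the interval $[0,1]$ (with the usual convention $0 \log 0 = 0$, under which $H$ is continuous at the endpoints).

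First I would establish concavity by a second-derivative computation on the open interval $(0,1)$, where $H$ is smooth. Writing $\log = \log_2 = \ln / \ln 2$, one gets $H'(p) = \tfrac{1}{\ln 2}\ln\!\frac{1-p}{p}$ and hence $H''(p) = -\tfrac{1}{\ln 2}\bigl(\tfrac{1}{p} + \tfrac{1}{1-p}\bigr) < 0$ for every $p \in (0,1)$. Thus $H$ is strictly concave on $(0,1)$, and since it is continuous on the closed interval $[0,1]$, it is concave on all of $[0,1]$.

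Then I would simply invoke Jensen's inequality: for a random variable $p$ with law $\mu$ supported on $[0,1]$, concavity of $H$ gives $\E_\mu\left[H(p)\right] \le H\bigl(\E_\mu\left[p\right]\bigr)$, which is precisely the claimed bound. (Note that $\E_\mu[p] \in [0,1]$, so the right-hand side is well defined.)

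There is no real obstacle here; the proof is entirely routine. The only points demanding a bit of care are cosmetic: handling the boundary values $p \in \{0,1\}$ via continuity, and observing that the change of base to $\log_2$ only rescales $H$ by the positive constant $1/\ln 2$ and therefore preserves concavity.
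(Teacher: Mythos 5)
Your proof is correct and entirely standard: the paper states this as a Fact without proof precisely because it is the textbook application of Jensen's inequality to the concave function $H$, whose concavity follows from the second-derivative computation you give. Nothing further is needed.
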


\noindent For a random variable $A$ we shall write $H(A)$ to denote the entropy of the induced distribution on the support of $A$. 
We use the same abuse of notation for other information-theoretic quantities appearing later in this section.

\begin{definition} The \emph{Conditional entropy} of a random variable $A$ conditioned on $B$ is defined as
\[ H(A | B) = \mathbb{E}_b (H(A | B=b)).\]
\end{definition}

\begin{fact}[Chain Rule]
$H(AB) = H(A) + H(B|A).$
\end{fact}

\begin{fact}[Conditioning Decreases Entropy]
$H(A|B) \geq H(A|BC)$.
\end{fact}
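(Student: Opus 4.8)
The plan is to reduce this conditional inequality to the unconditional statement that \emph{conditioning reduces entropy}, $H(A)\ge H(A\mid C)$, and then to prove the latter directly from the concavity of $\log$ (the same convexity input behind Fact~\ref{fact:concavity-entropy}). For the reduction, I would unpack the definition of conditional entropy: $H(A\mid B)=\E_b\big[H(A\mid B=b)\big]$, while applying the definition to the pair $(B,C)$ gives
\[
H(A\mid BC)=\E_{b,c}\big[H(A\mid B=b,C=c)\big]=\E_b\big[\,\E_{\,c\mid B=b}\big[H(A\mid B=b,C=c)\big]\big]=\E_b\big[H(A\mid B=b,\ C)\big].
\]
Thus it suffices to prove, for each fixed $b$ in the support of $B$, that $H(A\mid B=b)\ge H(A\mid B=b,\ C)$, and then average over $b$; and after replacing the law of $(A,C)$ by its conditional law given $B=b$, this is precisely the unconditional claim $H(A)\ge H(A\mid C)$, so $B$ can be dropped from the picture entirely.

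For the unconditional claim I would write the gap as a mutual information and show it is non-negative. Expanding both entropies over the joint support of $(A,C)$ and cancelling terms,
\[
H(A)-H(A\mid C)=\sum_{a,c} p(a,c)\,\log\frac{p(a,c)}{p(a)\,p(c)}=-\sum_{a,c} p(a,c)\,\log\frac{p(a)\,p(c)}{p(a,c)}.
\]
Treating $p(a,c)$ as a probability weight over the support and $\tfrac{p(a)p(c)}{p(a,c)}$ as the associated value, Jensen's inequality for the concave function $\log$ yields
\[
\sum_{a,c} p(a,c)\,\log\frac{p(a)\,p(c)}{p(a,c)}\le \log\!\left(\sum_{a,c} p(a,c)\cdot\frac{p(a)\,p(c)}{p(a,c)}\right)=\log\!\left(\sum_{a,c} p(a)\,p(c)\right)\le \log 1 = 0,
\]
where the final inequality uses $\sum_{a,c}p(a)p(c)\le\big(\sum_a p(a)\big)\big(\sum_c p(c)\big)=1$. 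Hence $H(A)-H(A\mid C)\ge 0$, which combined with the reduction above gives $H(A\mid B)\ge H(A\mid BC)$.

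The only place that calls for care — and the closest thing to an obstacle in an otherwise routine argument — is the bookkeeping around zero-probability atoms: all sums must be restricted to the joint support so that no $\log 0$ or division by $0$ occurs, using that $p(a,c)>0$ forces $p(a),p(c)>0$. Alternatively, one may bypass the computation entirely by invoking the non-negativity of conditional mutual information (equivalently, of KL-divergence) from \cite{CT91}.
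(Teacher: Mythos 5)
Your proof is correct: the reduction to the unconditional statement by conditioning on $B=b$, followed by the Jensen's-inequality argument showing $H(A)-H(A\mid C)=I(A;C)\ge 0$, is sound, including the bookkeeping on zero-probability atoms. The paper states this fact without proof as a standard result (deferring to \cite{CT91}), so there is no in-paper argument to compare against; yours is the standard textbook derivation and fills the gap correctly.
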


Another measure we will use (briefly) in our proof is that of \emph{Mutual Information}, which informally captures the correlation between 
two random variables. 

\begin{definition}[Conditional Mutual Information]
The \emph{mutual information} between two random variable $A$ and $B$, denoted by $I(A; B)$ is defined as
\[ I(A; B) := H(A) - H(A|B) = H(B) - H(B|A).\]
The \emph{conditional mutual information} between $A$ and $B$ given $C$, denoted by $I(A; B|C)$, is defined as
\[ I(A;B |C) := H(A|C) - H(A|BC) = H(B|C) - H(B|AC).\]
\end{definition}

The following is a well-known fact on mutual information.

\begin{fact}[Data processing inequality]
\label{fact:dataprocess}
Suppose we have the following Markov Chain:
\begin{equation*}
X \rightarrow Y \rightarrow Z
\end{equation*}
where $X \bot Z | Y$. Then $I(X;Y) \geq I(X;Z)$ or equivalently, $H(X|Y) \leq H(X|Z)$.
\end{fact}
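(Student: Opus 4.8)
The plan is to prove the \textbf{Data Processing Inequality} (\pref{fact:dataprocess}) from the definitions of mutual information and the chain rule for mutual information, exploiting the Markov condition $X \bot Z \mid Y$. I will work with the conditional-mutual-information form and use the chain rule twice, expanding $I(X; YZ)$ in the two possible orders.

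\begin{proof}
We use the chain rule for mutual information, $I(X; YZ) = I(X;Y) + I(X;Z \mid Y) = I(X;Z) + I(X; Y \mid Z)$, which itself follows from the entropy chain rule ($H(XYZ) = H(X) + H(YZ\mid X) = \dots$) applied to the definition $I(X; YZ) = H(X) - H(X \mid YZ)$. Expanding in the first order, $I(X; YZ) = I(X;Y) + I(X; Z\mid Y)$; expanding in the second order, $I(X;YZ) = I(X;Z) + I(X; Y\mid Z)$. Now the Markov hypothesis $X \bot Z \mid Y$ means precisely that, conditioned on $Y$, the variables $X$ and $Z$ are independent, so $I(X; Z \mid Y) = 0$. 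Meanwhile $I(X; Y \mid Z) \geq 0$, since conditional mutual information is nonnegative — this is a conditional version of the fact that $H(A\mid C) \geq H(A\mid BC)$ (conditioning decreases entropy), applied pointwise to each value of the conditioning variable $Z$ and then averaged, giving $I(X;Y\mid Z) = H(X\mid Z) - H(X\mid YZ) \geq 0$. Combining the two expansions:
\[
I(X;Y) = I(X;Y) + I(X;Z\mid Y) = I(X;Z) + I(X;Y\mid Z) \geq I(X;Z).
\]
Finally, unwinding the definition $I(X;Y) = H(X) - H(X\mid Y)$ and $I(X;Z) = H(X) - H(X\mid Z)$, the inequality $I(X;Y) \geq I(X;Z)$ is equivalent to $H(X\mid Y) \leq H(X\mid Z)$, as claimed.
\end{proof}

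The only genuinely non-routine ingredient is the nonnegativity of (conditional) mutual information; everything else is bookkeeping with the chain rule. If the paper prefers not to invoke nonnegativity of $I(X;Y\mid Z)$ as a black box, I would instead observe that it reduces, value-by-value over $Z=z$, to the nonnegativity of $I(X;Y\mid Z=z)$, which in turn is the statement that the KL divergence between the joint distribution of $(X,Y)$ given $Z=z$ and the product of its marginals is nonnegative — i.e., Gibbs' inequality. Since the paper has already stated "Conditioning Decreases Entropy," the cleanest route for the intended audience is the one above: derive $I(X;Y\mid Z) \geq 0$ directly from $H(X\mid Z) \geq H(X\mid YZ)$.
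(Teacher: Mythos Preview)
Your proof is correct and is the standard textbook argument for the data processing inequality. Note, however, that the paper does not actually prove this statement: it is listed as a background ``Fact'' in the preliminaries (alongside the chain rule and ``conditioning decreases entropy''), with an implicit reference to Cover--Thomas \cite{CT91} for such standard information-theoretic tools. So there is no paper proof to compare against; your argument would be exactly what one would write if asked to supply one.
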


Mutual Information is related to the following distance measure. 
\begin{definition}[Kullback-Leiber Divergence]
Given two probability distributions $\mu_1$ and $\mu_2$ on the same sample space $\Omega$  such that $(\forall \omega \in \Omega) (\mu_2(\omega)=0 \Rightarrow \mu_1(\omega)=0)$, the \emph{Kullback-Leibler Divergence} between is defined as (also known as relative entropy)
\[ \Div{\mu_1}{ \mu_2} = \sum_{\omega \in \Omega} \mu_1(\omega) \log \frac{\mu_1(\omega)}{\mu_2(\omega)}.\]
\end{definition}

\noindent The connection between the mutual information and the Kullback-Leibler divergence is provided by the following fact.

\begin{fact} 
\label{fact:divergence}
For random variables $A, B,$ and $C$ we have 
\[I(A; B|C) = \mathbb{E}_{b,c} \left[ \Div{A_{bc}}{A_{c}}\right].\]
\end{fact}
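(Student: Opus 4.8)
The plan is to simply unfold both sides in terms of the underlying probability mass functions and match them. Write $p(a\mid b,c) := \Pr[A=a\mid B=b,C=c]$ and $p(a\mid c):=\Pr[A=a\mid C=c]$, and let $(b,c)$ range over the support of the joint distribution of $(B,C)$ (the absolute-continuity hypothesis guarantees that every divergence appearing below is well-defined, with the convention $0\log 0 = 0$). By the definitions of the KL divergence and of conditional expectation,
\[
\mathbb{E}_{b,c}\left[\Div{A_{bc}}{A_c}\right]
= \sum_{b,c}\Pr[B=b,C=c]\sum_a p(a\mid b,c)\log\frac{p(a\mid b,c)}{p(a\mid c)}
= S_1 - S_2,
\]
where $S_1 := \sum_{b,c}\Pr[B=b,C=c]\sum_a p(a\mid b,c)\log p(a\mid b,c)$ and $S_2 := \sum_{b,c}\Pr[B=b,C=c]\sum_a p(a\mid b,c)\log p(a\mid c)$. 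The first sum is, by the very definition of conditional entropy, exactly $-H(A\mid BC)$. So it remains only to show $S_2 = -H(A\mid C)$, since then $\mathbb{E}_{b,c}[\Div{A_{bc}}{A_c}] = H(A\mid C) - H(A\mid BC) = I(A;B\mid C)$ as claimed.

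For $S_2$ the one substantive point is a marginalization identity: for every fixed $a$ and $c$ in the support,
\[
\sum_b \Pr[B=b,C=c]\, p(a\mid b,c) \;=\; \sum_b \Pr[A=a,B=b,C=c] \;=\; \Pr[A=a,C=c] \;=\; \Pr[C=c]\,p(a\mid c),
\]
i.e. averaging the posteriors $A_{bc}$ over $b$ (conditioned on $c$) recovers the prior $A_c$. Exchanging the order of summation in $S_2$ and pulling $\log p(a\mid c)$ — which does not depend on $b$ — outside the inner sum over $b$, then substituting the identity above, gives
\[
S_2 = \sum_c\sum_a \Big(\sum_b \Pr[B=b,C=c]\,p(a\mid b,c)\Big)\log p(a\mid c)
= \sum_c \Pr[C=c]\sum_a p(a\mid c)\log p(a\mid c)
= -H(A\mid C),
\]
which finishes the proof.

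There is essentially no obstacle here; the only care needed is the convergence bookkeeping when the sample spaces are countably infinite, which the $0\log 0=0$ convention and the stated absolute-continuity assumption handle. An equivalent and slightly slicker route, if brevity were the goal, is to first prove the unconditional identity $I(A;B) = \mathbb{E}_b[\Div{A_b}{A}]$ by exactly this two-line computation, and then apply it pointwise to the conditional laws given $C=c$ and average over $c$; this isolates the one nontrivial step (the marginalization identity) in the simplest possible setting.
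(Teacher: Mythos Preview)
Your proof is correct; the paper states this as a standard fact without proof, so there is nothing to compare against. Your direct unfolding of both sides into probability mass functions and the marginalization identity $\sum_b \Pr[B=b,C=c]\,p(a\mid b,c)=\Pr[C=c]\,p(a\mid c)$ is exactly the standard verification.
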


\subsection{\TCSP and the PCP Theorem}

In the \TCSP problem, we are given a graph $G=(V,E)$ on $|V|=n$ vertices, where each of the edges $(u,v)\in E$ is associated with some 
constraint function $\psi_{u,v}: A\times A \rightarrow \{0,1\}$ which specifies a set of legal ``colorings" of $u$ and $v$, from some finite alphabet $A$
($2$ in the term ``\TCSP" stands for the ``arity" of each constraint, which always involves two variables).
Let us denote by $\psi$ the entire \TCSP instance, and define by $\OPT(\psi)$ the maximum fraction of satisfied constraints in the associated graph $G$, 
over all possible assignments (colorings) of $V$.

The starting point of our reduction is the following version of the PCP theorem, which asserts that it is \NP-hard to distinguish 
a \TCSP instance whose value is $1$, and one whose value is $1 - \eta$, where $\eta$ is some small constant:

\begin{restatable}[PCP Theorem \cite{dinur2007pcp}]{theorem}{thmTCSP}
\label{thm_MR}
Given a \TSAT instance $\varphi$ of size $n$, there is a polynomial time reduction that produces a \TCSP 
instance $\psi$, with size $|\psi| = n \cdot \polylog n$ variables and constraints, and constant alphabet size such that
\begin{itemize}
\item (Completeness) If $\OPT( \varphi ) = 1$ then $\OPT( \psi ) = 1$.
\item (Soundness) If $\OPT( \varphi ) < 1 $ then $ \OPT(\psi) < 1 - \eta$, for some constant $\eta = \Omega(1)$
\item (Balance) Every vertex in $\psi$ has degree $d$ for some constant $d$.
\end{itemize}
\end{restatable}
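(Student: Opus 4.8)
\emph{Proof plan (via Dinur's gap amplification).} The plan is to build $\psi$ by iteratively ``amplifying'' a weak initial reduction, maintaining $d$-regularity throughout so that the Balance property is automatic. First I would pass from \TSAT to an initial constraint graph: given $\varphi$ with $m = O(n)$ clauses and $O(n)$ variables, introduce one variable per clause ranging over its (at most $7$) satisfying local assignments, keep the original Boolean variables, and add one edge-constraint between each clause-variable and each of the three \TSAT-variables it involves, enforcing consistency. This yields a \TCSP instance $\psi_0$ over a constant alphabet with $O(n)$ variables and constraints such that $\OPT(\psi_0)=1$ iff $\OPT(\varphi)=1$, while $\OPT(\varphi)<1$ forces at least one violated constraint, so $\OPT(\psi_0)\le 1-\Omega(1/n)$. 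I would then apply Dinur's \emph{preprocessing}: a degree-reduction gadget (replace each variable of degree $\deg$ by a $\deg$-vertex constant-degree expander ``cloud'' wired by equality constraints) to make the constraint graph $d_0$-regular, followed by superimposing a constant-degree expander of trivially-satisfied edges to make it a $d$-regular expander with spectral gap bounded away from $0$; each step multiplies the size by $O(1)$ and the gap by $\Omega(1)$. The outcome is $\psi_1$: a $d$-regular expander constraint graph, constant alphabet, size $O(n)$, perfect completeness, soundness gap $\Omega(1/n)$.

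Next comes the amplification loop, repeated $T=O(\log n)$ times, each iteration performing three operations. \emph{(i) Graph powering:} replace the constraint graph by (essentially) its $t$-th power for a suitable absolute constant $t$ --- each vertex records a guessed assignment for all vertices within distance $t/2$, and each length-$t$ walk becomes a constraint checking internal consistency together with every base constraint seen along the walk; via the expander mixing lemma this multiplies the gap by $\Omega(\sqrt t)$, capped at an absolute constant $\alpha$, while blowing the alphabet up to $|\Sigma|^{d^{O(t)}}$ (still constant) and the size up by $O(1)$. \emph{(ii) Alphabet reduction:} compose with a constant-size \emph{assignment tester} (a ``baby'' PCP of proximity, e.g.\ from Hadamard/long-code linearity-and-consistency testing) to replace each large-alphabet constraint by a bundle of constraints over a fixed constant alphabet, losing only $O(1)$ factors in gap and size. \emph{(iii) Re-regularize:} reapply the preprocessing gadgets so the next iteration again starts from a $d$-regular expander. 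Net effect of one iteration: the alphabet returns to the fixed constant size, the size is multiplied by a constant, and the gap is multiplied by a constant $>1$ until it saturates at $\alpha$.

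It remains to do the accounting; all steps are clearly polynomial time. Completeness is preserved at every step: gadget replacements, powering, and composition with a (complete) assignment tester all map a satisfying assignment to a satisfying assignment, so $\OPT(\varphi)=1\Rightarrow \OPT(\psi_i)=1$ for every $i$, hence $\OPT(\psi)=1$. For soundness, the gap grows geometrically from $\Omega(1/n)$ to the absolute constant $\alpha$ within $T=O(\log n)$ iterations, so we may take $\eta=\alpha$. Carrying the per-iteration constant size-blowup through $O(\log n)$ iterations --- with the walk length $t$ and the assignment-tester parameters chosen as in the quasi-linear-size version of Dinur's analysis --- gives final size $|\psi| = n\cdot\polylog n$. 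Finally, the re-regularization performed at the end of the last iteration leaves $\psi$ exactly $d$-regular, which is the Balance condition.

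The crux, and the only genuinely hard step, is operation \emph{(i)}: proving that powering a constraint graph supported on an expander really does amplify the soundness gap by $\Omega(\sqrt t)$ up to the cap. This is not a soft argument --- given an arbitrary, possibly adversarial assignment to the powered instance, one defines the induced ``plurality'' assignment to the base instance and then argues, via the expander mixing lemma and a second-moment estimate over length-$t$ random walks, that a constant fraction of those walks must cross a base edge on which the plurality assignment is violated. Secondary difficulties are the explicit construction and analysis of a constant-size assignment tester for operation \emph{(ii)}, and checking that the degree-reduction and expanderization gadgets simultaneously preserve perfect completeness, lose only constant factors in the gap, and survive iteration.
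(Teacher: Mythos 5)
Your proposal sets out to reprove Dinur's PCP theorem from scratch (preprocessing, powering, alphabet reduction, iterate), whereas the paper does something far more modest: it invokes the near-linear-size PCP theorem for \TSAT (Dinur's result, in the formulation of \cite{AIM14}) as a black box, and the actual proof consists only of (a) the clause/variable game reduction turning the \TSAT instance $\xi$ into a bipartite \TCSP $\psi$ over alphabets $\{0,1\}^3$ and $\{0,1\}$, with the soundness claim $\OPT(\xi)\le 1-\eps \Rightarrow \OPT(\psi)\le 1-\eps/3$, and (b) superimposing trivially-satisfied constraints on the clause side to make the constraint graph exactly $d$-regular, which costs another constant factor in the gap ($\eta = \eps/d$). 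So even if your amplification loop were carried out in full, you would still need the clause/variable step (or something equivalent) to land on a \emph{two-variable} CSP with the stated degree-regularity; your sketch conflates "constraint graph of a 2CSP" with the output of gap amplification without addressing this packaging.

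More importantly, there is a genuine gap in your size accounting. You run $T=O(\log n)$ amplification rounds, each multiplying the instance size by a constant $C>1$; that yields a blowup of $C^{O(\log n)} = n^{O(1)}$, i.e.\ a \emph{polynomially} larger instance, not $|\psi| = n\cdot\polylog n$. The quasi-linear bound cannot be obtained by "choosing the walk length and assignment-tester parameters" carefully: it requires starting the amplification not from a gap of $\Omega(1/n)$ but from a quasi-linear-size PCP with gap $1/\polylog n$ (Dinur obtains this by composing with the Ben-Sasson--Sudan construction), so that only $O(\log\log n)$ rounds are needed and the total blowup is $C^{O(\log\log n)}=\polylog n$. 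This is not a cosmetic issue for this paper: the near-linear size of $\psi$ is exactly what lets ETH transfer to a $2^{\tilde\Omega(|\psi|)}$ lower bound for the gap-\TCSP, which the whole reduction rests on. With a polynomial-size $\psi$ the downstream hardness would degrade. Since the theorem is explicitly stated as a citation of \cite{dinur2007pcp}, the intended (and correct) route is to cite the quasi-linear-size PCP and prove only the clause/variable and regularization steps.
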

In the appendix, we describe in detail how to derive this formulation of the PCP Theorem from that of e.g. \cite{AIM14}.

Notice that since the size of the reduction is near linear, ETH implies that solving the above problem requires near exponential time.

\begin{corollary}
Let $\psi$ be as in Theorem \ref{thm_MR}. Then assuming ETH, distinguishing between $\OPT( \psi ) = 1$ and $ \OPT(\psi) < 1 - \eta$ requires time $2^{\tilde \Omega\left(|\psi|\right)}$.
\end{corollary}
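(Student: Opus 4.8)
The plan is to derive the corollary directly from Theorem~\ref{thm_MR} together with ETH, by a routine composition of reductions; no new ideas are needed beyond a careful accounting of the $\polylog$ overhead. I would argue the contrapositive: suppose there were an algorithm $\mathcal{A}$ that, on a \TCSP instance of size $N$, distinguishes $\OPT = 1$ from $\OPT < 1 - \eta$ in time $S(N)$. I would then use $\mathcal{A}$ as a subroutine to solve \TSAT, and conclude that $S$ cannot be too small.

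Given a \TSAT instance $\varphi$ on $n$ variables (equivalently, of size $\Theta(n)$ after the usual sparsification), the algorithm is: (i) run the polynomial-time reduction of Theorem~\ref{thm_MR} to produce the \TCSP instance $\psi$, whose size is $N = n \cdot \polylog n$; (ii) run $\mathcal{A}$ on $\psi$ and output its answer. Correctness is immediate from the completeness and soundness clauses of Theorem~\ref{thm_MR}: if $\varphi$ is satisfiable then $\OPT(\psi) = 1$, and otherwise $\OPT(\psi) < 1 - \eta$, so $\mathcal{A}$ decides satisfiability of $\varphi$. (The balance clause of Theorem~\ref{thm_MR} plays no role here; it is used only in the later graph construction.) The reduction runs in time polynomial in $n$, which is $2^{o(n)}$, so the total running time of the composed algorithm is $2^{o(n)} + S(n \cdot \polylog n)$.

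By ETH, \TSAT on $n$ variables cannot be solved in time $2^{o(n)}$, hence $S(n \cdot \polylog n) \geq 2^{\Omega(n)}$ for all $n$. It remains to re-express this bound in terms of $N$. Writing $N = n \cdot \polylog n$, we have $\log N = \Theta(\log n)$, so $\polylog N = \Theta(\polylog n)$ and therefore $n \geq N / \polylog N = \tilde\Omega(N)$. Substituting back yields $S(N) \geq 2^{\Omega(N/\polylog N)} = 2^{\tilde\Omega(N)}$, which is exactly the claimed lower bound. The only point requiring (minor) care---and the closest thing to an obstacle in an otherwise immediate argument---is verifying that the near-linear size blow-up of the reduction is absorbed by the $\polylog$ in the exponent, i.e.\ that $\polylog(n \cdot \polylog n) = \Theta(\polylog n)$ so that the subexponential slack is preserved; everything else is bookkeeping.
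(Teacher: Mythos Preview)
Your argument is correct and is exactly the standard derivation the paper has in mind; the paper itself does not spell out a proof at all, merely remarking that ``since the size of the reduction is near linear, ETH implies that solving the above problem requires near exponential time.'' Your careful accounting of the $\polylog$ overhead (and the mention of sparsification) is precisely what is being elided there.
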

\section{Main Proof}

\subsection{Construction}

Let $\psi$ be the \TCSP instance produced by the reduction in Theorem \ref{thm_MR}, 
i.e. a constraint graph over $n$ variables  with alphabet $A$ of constant size.
We construct the following graph $G_\psi = (V, E)$:
\begin{itemize} 
\item Let  $\rho := \sqrt{n} \log \log n$ and $k := {{n}\choose{\rho}}$.
\item Vertices of $G_\psi$ correspond to all possible assignments (colorings) to all $\rho$-tuples of variables in  $\psi$, i.e $V= [n]^\rho \times |A|^\rho$.
Each vertex is of the form $v = (y_{x_1},y_{x_2},\ldots ,y_{x_\rho})$ where $\{x_1,\ldots , x_\rho\}$ are the chosen variables of $v$, and 
$y_{x_i}$ is the corresponding assignment to variable $x_i$.
\item If $v \in V$ {\em violates any \TCSP constraints}, i.e. if there is a constraint on $\left(x_i,x_j\right)$ in $\psi$ which is not satisfied by $\left(y_{x_i},y_{x_j}\right)$, then $v$ is an isolated vertex in $G_{\psi}$.
\item Let $u = (y_{x_1},y_{x_2},\ldots ,y_{x_\rho})$ and $v = (y'_{x'_1},y'_{x'_2},\ldots ,y'_{x'_\rho})$.
$(u,v) \in E$ iff: 
	\begin{itemize} 
		\item $(u,v)$ does not {\em violate any consistency constraints}: for every shared variable $x_i$, the corresponding assignments agree, $y_{x_i} = y'_{x_i}$;
		\item and $(u,v)$ also does not {\em violate any \TCSP constraints}: for every \TCSP constraint on $\left(x_i,x'_j\right)$ (if exists), 
				the assignment $\left(y_{x_i},y'_{x'_j}\right)$ satisfy the constraint.
	\end{itemize}
\end{itemize} 
Notice that the size of our reduction (number of vertices of $G_{\psi}$) is $N = {n \choose \rho} \cdot |A|^\rho = 2^{\Otilde( \sqrt{n} )}$. 


\paragraph{Completeness} If $\OPT(\psi) = 1$, then $G_\psi$ has a $k$-clique:
Fix a satisfying assignment for $\psi$, and let $S$ be the set of all vertices that are consistent with this assignment. 
Notice that $|S|={n\choose{\rho}}=k$. Furthermore its vertices do not violate any consistency constraints (since they agree with a single assignment), or \TCSP constraints (since we started from a satisfying assignment).


\section{Soundness} 

Suppose that $\OPT(\psi) < 1 - \eta$, and let $\epsilon_0 >0$ be some constant to be determined later. We shall show that for any subset $S$ of size 
$k'=k\cdot |V|^{-\epsilon_0 / \log\log |V|}$, $\den(S) < 1 - \delta$, where $\delta$ is some constant depending on 
$\eta$. 
The remainder of this section is devoted to proving the following theorem:

\begin{restatable}[Soundness]{theorem}{thmSoundness}
\label{thm:soundness}
If $\OPT(\psi) < 1 - \eta$, then $\forall S \subset V$ of size $k'=k \cdot |V|^{-\epsilon_0 / \log\log |V|}$, $\den(S) < 1 - \delta$ for some constant $\delta$.
\end{restatable}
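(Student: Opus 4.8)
\emph{Plan.} I would argue the contrapositive: assume some $S\subseteq V$ with $|S|=k'$ has $\den(S)\ge 1-\delta$, and manufacture from it a global assignment for $\psi$ satisfying more than a $(1-\eta)$-fraction of the constraints, contradicting $\OPT(\psi)<1-\eta$. Write $\mu$ for the uniform law on $S$ and sample $v\sim\mu$ as $v=(X,Y)$ (the $\rho$ variables of $v$ and their assignment). For a variable $i$ put $p_i:=\Pr_\mu[i\in X]$, and when $p_i>0$ let $\phi^\ast(i)$ be the plurality value of $Y(i)$ conditioned on $i\in X$ and $q_i$ that probability. We may assume $S$ has no isolated vertices: such vertices touch no edge of $G_\psi$, so restricting to the non-isolated part preserves density $\ge 1-\delta$ and shrinks $|S|$ by at most a factor $2$, which is absorbed into $\epsilon_0$.

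The argument splits into three parts. \textbf{Part 1 (information-theoretic core).} Starting from $\log|S|=\log\binom n\rho-\Theta(\epsilon_0\rho)$, the chain rule $H(X,Y)=H(X)+H(Y\mid X)$ and the a priori bound $H(X)\le\log\binom n\rho$, together with the near-consistency coming from $\den(S)\ge1-\delta$ (which bounds $H(Y\mid X)$ and hence forces $H(X)$ to be nearly maximal) and a counting bound on which $\rho$-subsets can support $S$, I would extract that the law of $X$ is \emph{spread}: all but an $O(\epsilon_0)$-fraction of the $p_i$ lie within a $(1\pm o(1))$ factor of $\rho/n$, and consequently the relevant ``birthday'' quantities concentrate — for independent $u,v\sim\mu$, $\E[|X_u\cap X_v|^2]=O((\rho^2/n)^2)$ and $\E[C_{uv}^2]=O((\rho^2/n)^2)$, where $C_{uv}$ counts $\psi$-constraints with one endpoint in $X_u$ and the other in $X_v$ (using that $\psi$ is $d$-regular, so $|E(\psi)|=\Theta(n)$, and that $\rho^2/n=(\log\log n)^2\to\infty$, which is exactly what makes these second-moment bounds meaningful). \textbf{Part 2 (near-consistency).} Sampling $i\in[n]$ and $u,v\sim\mu$, the event $\{i\in X_u\cap X_v,\ Y_u(i)\ne Y_v(i)\}$ is a non-edge and a fixed non-edge has at most $|X_u\cap X_v|$ such witnesses, so
\[
\sum_i p_i^2(1-q_i)\;\le\;\E_{u,v}\!\left[\mathbf 1\big((u,v)\notin E\big)\,|X_u\cap X_v|\right]\;\le\;\sqrt{\delta}\,\sqrt{\E\big[|X_u\cap X_v|^2\big]}\;=\;O\!\big(\sqrt{\delta}\,\rho^2/n\big)
\]
by Cauchy–Schwarz and Part 1; hence $\sum_i p_i(1-q_i)\le\sqrt{\sum_i p_i^2(1-q_i)}\cdot\sqrt{\sum_i(1-q_i)}=O(\delta^{1/4}\rho)$. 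Combined with Part 1, all but an $O(\epsilon_0+\delta^{1/4})$-fraction of variables are \emph{good}: $p_i=\Theta(\rho/n)$ and $q_i\ge\tfrac12$.

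\textbf{Part 3 (contradiction).} Suppose $\phi^\ast$ violated a set $B\subseteq E(\psi)$ with $|B|\ge\eta|E(\psi)|$; deleting constraints touching a bad variable (at most $O(\epsilon_0+\delta^{1/4})|E(\psi)|$ of them, since $\psi$ is $d$-regular) leaves $B''\subseteq B$ with $|B''|\ge\tfrac\eta2|E(\psi)|$ and both endpoints of each $(a,b)\in B''$ good, so $p_aq_ap_bq_b=\Omega((\rho/n)^2)$. Let $Z_{uv}$ count $(a,b)\in B''$ with $a\in X_u,\ Y_u(a)=\phi^\ast(a),\ b\in X_v,\ Y_v(b)=\phi^\ast(b)$; by independence of $u,v$, $\E[Z_{uv}]=\sum_{(a,b)\in B''}p_aq_ap_bq_b=\Omega(\eta\,\rho^2/n)=\omega(1)$, and a second-moment estimate as in Part 1 gives $\E[Z_{uv}^2]=O(\E[Z_{uv}]^2)$. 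Whenever $Z_{uv}\ge 1$ the pair $(u,v)$ is a non-edge (a $\phi^\ast$-violated constraint gets ``checked'' between $u$ and $v$ with the plurality values on both sides), so Paley–Zygmund gives $\Pr_{u,v}[(u,v)\notin E]\ge\Pr[Z_{uv}\ge 1]=\Omega(1)$; choosing $\epsilon_0$ a small enough constant multiple of $\eta$ (so the deletions are affordable) and $\delta$ below this universal constant contradicts $\den(S)\ge1-\delta$, and fixes $\delta$ in the statement as a constant depending only on $\eta$.

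\emph{Main obstacle.} I expect Part 1 to be where the real work lies — it is the crux of the theorem and the only place the precise calibration $k'=k\cdot|V|^{-\epsilon_0/\log\log|V|}$ is used. The difficulty is precisely the ``choosing which challenges to answer'' phenomenon: a priori $S$ could place almost all of its mass on $\rho$-subsets drawn from a small sub-family of the variables, in which case $|X_u\cap X_v|$ and $C_{uv}$ need not concentrate and, worse, many violated constraints could be left untouched. Ruling this out means carefully accounting for $H(X,Y)=\log|S|$, playing the near-maximality of $|S|$ (relative to $\binom n\rho$) against both $H(X)\le\log\binom n\rho$ and the near-consistency of Part 2 — and since the cheap bound $H(Y\mid X)\le\rho\log|A|$ is too weak to extract spread directly, I would expect the paper's proof to interleave Parts 1 and 2 (rather than keeping them separate as above), bootstrapping consistency and spread off each other until both are strong enough.
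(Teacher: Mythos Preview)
Your high-level architecture---entropy forces the challenge distribution to be ``spread,'' spread plus density forces near-consistency of labels, and then the mode assignment plus a second-moment/Paley--Zygmund argument yields many \TCSP\ violations---matches the paper's. Your Part~3 is essentially the paper's Lemma~4.21 (Claim~4.18 there is exactly your second-moment bound on the constraint count, and the final Cauchy--Schwarz is your Paley--Zygmund). The difference, and the gap, is in how you propose to obtain the second-moment control in Part~1.

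You claim that once most marginals satisfy $p_i=(1\pm o(1))\rho/n$, ``consequently'' $\E[|X_u\cap X_v|^2]=O((\rho^2/n)^2)$ and likewise for $C_{uv}$. This implication is false: marginal spread says nothing about pairwise probabilities $\Pr[i,j\in X]$, and $\E[|X_u\cap X_v|^2]=\sum_{i,j}\Pr[i,j\in X]^2$ is governed entirely by those. Concretely, take $S$ to consist of all $\rho$-subsets containing a fixed ``core'' $T$ of size $t=\Theta(\rho/\log n)$, with a fixed labeling on $T$ (chosen to satisfy all internal constraints) and arbitrary consistent labels elsewhere. Then $|S|\ge k'$, the marginals are $p_i=1$ on $T$ and $p_i\approx\rho/n$ off $T$ (so only $|T|=o(n)$ variables are atypical), yet $\E[|X_u\cap X_v|^2]\ge t^2=\Theta(\rho^2/\log^2 n)$, which is polynomially larger than $(\rho^2/n)^2=(\log\log n)^4$. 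Your Cauchy--Schwarz in Part~2 then yields only $\sum_i p_i^2(1-q_i)\le\sqrt{\delta}\cdot t$, which is far too weak, and the same failure propagates to the second-moment estimate in Part~3.

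The paper's resolution is precisely to \emph{not} work with marginals. It decomposes $H(X,Y)=\sum_i(\alpha_i+\beta_i)$ sequentially with $\alpha_i=H(X_i\mid W_{i-1})$ and $\beta_i=H(Y_i\mid X_i,W_{i-1})$, where $W_{i-1}=(X_{<i},Y_{<i})$ is the \emph{prefix}. The crucial object is the ``typical prefix'': a realization $w_{i-1}$ with $\Pr[X_i=1\mid w_{i-1}]\in(1\pm\eps_5)\rho/n$. Because this is a \emph{conditional} statement, it automatically controls the pairwise probabilities you need: once you know $k\in X_u$ and the prefix up to $l-1$ is still typical, $\Pr[X_l=1\mid\cdot]$ is again $\approx\rho/n$. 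This is exactly what drives the proof of Claim~4.18 (your second-moment claim). The consistency side (your Part~2) is handled by a parallel ``prefix graph'' argument rather than your direct Cauchy--Schwarz, again because one must track which prefixes carry the label entropy. You correctly sensed that Parts~1 and~2 must be interleaved; the missing idea is that the interleaving happens one coordinate at a time through this prefix conditioning, not through global marginals.
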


\subsection{Setting up the entropy argument}

Fix some subset $S$ of size $k'$, and let $v \in_R S$ be a uniformly chosen vertex in $S$ (recall that $v$ is a vector of $\rho$ coordinates, corresponding
to labels for a subset of $\rho$ chosen variables).
Let $X_i$ denote the indicator variable associated with $v$ such that $X_i=1$ if the $i$'th variable appears in $v$ and $0$ otherwise.
We let $Y_i$ represent the coloring assignment (label) for the $i$'th variable whenever $X_i = 1$,  which is of the form $l \in A$. 
Throughout the proof, let  $$W_{i-1} = X_{< i}, Y_{< i}$$
denote the $i$'th prefix corresponding to $v$. We can write : 
\begin{align*}
H(Y_i | W_{i -1}, X_i) 
&~=~ \Pr[X_i = 0] \cdot H(Y_i | W_{i-1}, X_i = 0) + \Pr[X_i = 1] \cdot H(Y_i | W_{i-1}, X_i = 1) \\
&~=~ \Pr[X_i = 1] \cdot H(Y_i | W_{i-1}, X_i = 1) 
\end{align*}
since $H( Y_i | W_{i -1}, X_i = 0 ) = 0$.
Notice that since $(XY)$ and $v$ determine each other, and $v$ was uniform on a set of size $|S|=k'$, we have 
\begin{observation}
$H(XY) = \log k'$.
\end{observation}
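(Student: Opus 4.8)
The plan is to observe that the claimed identity is an immediate consequence of two standard facts: Shannon entropy is invariant under relabeling by a bijection, and the uniform distribution on a finite set of cardinality $m$ has entropy exactly $\log m$. So the only real content is to make precise the phrase ``$(XY)$ and $v$ determine each other,'' i.e. to exhibit the bijection.

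First I would set up the encoding carefully. A vertex $v \in V$ is specified by a choice of $\rho$ variables of $\psi$ together with an assignment from $A$ to each of them; equivalently, $v$ is determined by the pair of vectors $(X,Y) = (X_1,\dots,X_n,Y_1,\dots,Y_n)$, where $X_i \in \{0,1\}$ records whether variable $i$ is among the $\rho$ chosen variables of $v$ (so that $\sum_i X_i = \rho$ on the whole support), and, on the coordinates with $X_i = 1$, $Y_i \in A$ records the label $v$ assigns to variable $i$; on coordinates with $X_i = 0$ we fix $Y_i$ to a default symbol $\bot$ so the encoding is single-valued. The map $v \mapsto (X(v),Y(v))$ is injective: two distinct vertices differ either in their set of chosen variables (hence in $X$) or in some label on a shared chosen variable (hence in $Y$); and conversely every value of $(X,Y)$ that actually occurs comes from a unique $v$. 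Thus $v$ and $(XY)$ are related by a bijection between the support of $v$ — which is exactly $S$ — and the support of $(XY)$.

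Since entropy is preserved under applying a bijection to a random variable, $H(XY) = H(v)$. Finally, by construction $v$ was drawn uniformly at random from $S$, a set of size $|S| = k'$, so the induced distribution of $v$ is uniform on a support of size $k'$, whence $H(v) = \log |S| = \log k'$. Combining the two equalities gives $H(XY) = \log k'$.

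There is no substantive obstacle here; the only point needing the slightest care is checking that the passage from the ``$\rho$-tuple'' description of a vertex to the indicator/label pair $(X,Y)$ is genuinely a bijection and not merely a surjection. Concretely, one should treat the $\rho$ chosen variables as a set — consistent with the count $k = \binom{n}{\rho}$ used in the construction and in the completeness argument — rather than as an ordered tuple, and one should fix the convention for the unused $Y_i$ coordinates (the $\bot$ above) so that the map is well defined in both directions. With those conventions in place the argument is one line.
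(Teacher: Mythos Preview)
Your proposal is correct and follows exactly the paper's own reasoning: the paper simply notes in one line that ``$(XY)$ and $v$ determine each other, and $v$ was uniform on a set of size $|S|=k'$,'' from which the observation is immediate. You have just unpacked this sentence carefully (making explicit the bijection and the $\bot$ convention), so there is nothing to add.
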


Thus, in total, the choice of challenge and the choice of assignments should contribute $\log k'$ to the entropy of $v$. If much 
of the entropy comes from the assignment distribution (conditioned on the fixed challenge variables), we will show that $S$ must have many consistency violations, implying that $S$ is sparse. If, on the other hand, almost all the entropy comes from the challenge distribution, we will show that this implies many CSP constraint
violations (implied by the soundness assumption). 
From now on, we denote $$\alpha_i := H(X_i | X_{< i}, Y_{< i}) \;\;\; \text{and}  \;\;\; \beta_i := H(Y_i | X_{\leq i}, Y_{< i}).$$ 
When conditioning on the $i$'th prefix, we
shall write $\alpha_i (w_{i-1}) := H(X_i | X_{< i}, Y_{< i} = w_{i-1})$, 
and similarly for $\beta_i( \cdot )$. Also for brevity, we denote $$q_i := \Pr[ X_i = 1] \;\;\; \text{and}  \;\;\;  q_i ( w_{i-1} ) := \Pr [X_i = 1 | w_{i-1} ]. $$

\subsubsection*{Prefix graphs}

The consistency constraints induce, for each $i$, a graph over the prefixes:
the vertices are the prefixes, and two prefixes are connected by an edge if their labels are consistent.
(We can ignore the \TCSP constraints for now --- the prefix graph will be used only in the analysis of the consistency constraints.)
Formally,

\begin{definition}[Prefix graph] 
For $i \in [n+1]$ let the {\em $i$-th prefix graph}, $G_i$ be defined over the prefixes of length $i-1$ as follows.
We say that $w_{i-1}$ is a neighbor of $\sigma_{i-1}$ if they do not violate any consistency constraints. 
Namely, for all $j < i$, if $X_j = 1$ for both $w_{i-1}$ and $\sigma_{i-1}$, then $w_i$ and $\sigma_i$ assign the same label $Y_j$. 

In particular, we will heavily use the following notation: let $\cN( w_{i-1} )$ be the {\em prefix neighborhood} of $w_{i-1}$; i.e. it is the set of all prefixes (of length $i-1$) that are consistent with $w_{i-1}$.
For technical issues of normalization, we let $w_{i-1} \in \cN( w_{i-1} )$, i.e. all the prefixes have self-loops.
\end{definition}

Notice that $G_{n+1}$ is defined over the vertices of $S$  (the original subgraph). 
The set of edges on $S$ is contained in the set of edges of $G_{n+1}$, 
since in the latter we only remove pairs that violated consistency constraints 
(recall that we ignore the \TCSP constraints).

Unless stated otherwise, we always think of prefixes as weighted by their probabilities. 
Naturally, we also define the weighted degree and weighted edge density of the prefix graph.

\begin{definition}[Prefix degree and density]
The {\em prefix degree} of $w_{i-1}$ is given by:
\begin{equation*}
\deg(w_{i-1}) = \sum_{\sigma_{i-1} \in \cN( w_{i-1} )} \Pr[\sigma_{i-1}].
\end{equation*}
Similarly, we define the {\em prefix density} of $G_i$ as:
\begin{equation*}
\den (G_i) = \sum_{w_{i-1}} \sum_{\sigma_{i-1}  \in \cN( w_{i-1} )} \Pr[w_{i-1}] \cdot \Pr[\sigma_{i-1}] .
\end{equation*}
\end{definition}

When it is clear from the context, we henceforth drop the {\em prefix} qualification, and simply refer to the {\em neighborhood} or {\em degree}, etc., of $w_{i-1}$.

Notice that in $G_{n+1}$, the probabilities are uniformly distributed.
In particular, $\den(G_{n+1} ) \geq \den(S)$, since, as we mentioned earlier, the set of edges in $S$ is contained in that of $G_{n+1}$.
Finally, observe also that because we accumulate violations, the density of the prefix graphs is monotonically non-increasing with $i$.

\begin{observation}
\label{obs:density-monotone}
\begin{equation*} \den(G_1) \geq \dots \geq \den(G_{n+1} ) \geq \den(S). \end{equation*}
\end{observation}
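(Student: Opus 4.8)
The plan is to establish Observation~\ref{obs:density-monotone} as an almost immediate consequence of the definitions that precede it, so the ``proof'' is really a short argument rather than a computation. Fix $i \in [n]$ and compare $\den(G_i)$ with $\den(G_{i+1})$. A prefix of length $i$ is obtained from a prefix of length $i-1$ by revealing one more coordinate $(X_i, Y_i)$, so there is a natural projection $\pi$ from length-$i$ prefixes onto length-$(i-1)$ prefixes, and under the distribution on $v \in_R S$ the pushforward of the distribution on length-$i$ prefixes is exactly the distribution on length-$(i-1)$ prefixes. The key point is the following monotonicity of the neighborhood relation: if two length-$i$ prefixes $w_i, \sigma_i$ are consistent (neighbors in $G_{i+1}$), then their truncations $\pi(w_i), \pi(\sigma_i)$ are also consistent (neighbors in $G_i$), because the consistency constraints checked at stage $i$ are a superset of those checked at stage $i-1$. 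In other words, extending a prefix can only \emph{remove} edges, never add them.

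First I would write $\den(G_{i+1})$ as a sum over pairs of length-$i$ prefixes that are consistent, weighted by the product of their probabilities, and then group these pairs according to their images under $\pi \times \pi$. Using the monotonicity just described, every consistent pair $(w_i, \sigma_i)$ maps to a consistent pair $(\pi(w_i), \pi(\sigma_i))$, and using the pushforward property the total weight of all length-$i$ pairs mapping to a fixed consistent length-$(i-1)$ pair is at most the product of the probabilities of that pair. Summing over all consistent length-$(i-1)$ pairs yields $\den(G_{i+1}) \leq \den(G_i)$. (One has to be slightly careful about the self-loops, which are included by convention in $\cN(\cdot)$ for normalization; since self-loops are present in every $G_i$ with the full weight $\Pr[w_{i-1}]^2$ contribution they are consistent with the inequality and cause no trouble.) Chaining this over $i = 1, \dots, n$ gives $\den(G_1) \geq \dots \geq \den(G_{n+1})$.

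Finally, for the last inequality $\den(G_{n+1}) \geq \den(S)$: the vertices of $G_{n+1}$ are exactly the vertices of $S$ (length-$n$ prefixes $= $ full assignments to the $\rho$ chosen variables, i.e. elements of $S$), with the uniform distribution since $v \in_R S$. Two vertices of $S$ are adjacent in $G_\psi$ only if they violate neither consistency constraints nor \TCSP constraints, whereas adjacency in $G_{n+1}$ only requires not violating consistency constraints; hence $E(S) \subseteq E(G_{n+1})$ (as edge sets on the same vertex set $S$), and with the uniform weighting this gives $\den(G_{n+1}) \geq \den(S)$, after accounting for the self-loops (which only increase $\den(G_{n+1})$). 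This completes the observation.

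The argument is essentially bookkeeping, so there is no serious obstacle; the only place requiring a little care is making the grouping/pushforward step rigorous — i.e. verifying that the probability mass on length-$i$ prefixes really does project correctly and that the self-loop normalization convention does not break the inequality in either the inductive step or the final comparison with $\den(S)$. I would state the neighborhood-monotonicity fact explicitly as the lemma doing all the work, and then the rest follows by summation.
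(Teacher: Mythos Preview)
Your proposal is correct and follows exactly the same reasoning the paper uses: the paper simply remarks that ``we accumulate violations'' so the prefix density is monotonically non-increasing, and that the edge set of $S$ is contained in that of $G_{n+1}$ because the latter ignores \TCSP constraints. Your write-up is a more explicit version of this one-line observation, with the pushforward bookkeeping spelled out; nothing more is needed.
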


\subsubsection*{Useful approximations}
We use the following bounds on $\alpha_i$ and $\beta_i$ many times throughout the proof:

\begin{fact} 
\label{fact:alphabound}
\begin{equation*}
\alpha_i = \E \left[ H( q_i (w_{i-1}) ) \right] \leq H( \E \left[ q_i (w_{i-1}) \right] ) = H(q_i)
\end{equation*}
\end{fact}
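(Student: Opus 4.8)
## Proof proposal for Fact~\ref{fact:alphabound}

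The plan is to unpack the two definitions and then invoke Jensen's inequality in the form already recorded in the excerpt as Fact~\ref{fact:concavity-entropy} (concavity of binary entropy). Recall that $\alpha_i := H(X_i \mid X_{<i}, Y_{<i})$, where $X_i$ is a Boolean indicator, so its conditional entropy given a fixed prefix $w_{i-1} = (x_{<i}, y_{<i})$ is exactly the binary entropy of its bias, $H(q_i(w_{i-1}))$ where $q_i(w_{i-1}) = \Pr[X_i = 1 \mid w_{i-1}]$. First I would write, straight from the definition of conditional entropy,
\[
\alpha_i \;=\; H(X_i \mid W_{i-1}) \;=\; \E_{w_{i-1}}\!\bigl[\, H(X_i \mid W_{i-1} = w_{i-1})\,\bigr] \;=\; \E_{w_{i-1}}\!\bigl[\, H\bigl(q_i(w_{i-1})\bigr)\,\bigr],
\]
where in the last step I use that a Bernoulli$(p)$ variable has Shannon entropy equal to the binary entropy function $H(p)$.

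Next I would apply Fact~\ref{fact:concavity-entropy} with $\mu$ the distribution of the random variable $p := q_i(W_{i-1})$ induced by drawing the prefix $W_{i-1}$ according to its own (probability-weighted) distribution. Concavity of the binary entropy function gives
\[
\E_{w_{i-1}}\!\bigl[\, H\bigl(q_i(w_{i-1})\bigr)\,\bigr] \;=\; \E_\mu\bigl[ H(p) \bigr] \;\le\; H\bigl(\E_\mu[p]\bigr) \;=\; H\bigl(\E_{w_{i-1}}[\, q_i(w_{i-1})\,]\bigr).
\]
Finally, by the law of total probability, $\E_{w_{i-1}}[\, q_i(w_{i-1})\,] = \E_{w_{i-1}}\bigl[\Pr[X_i = 1 \mid w_{i-1}]\bigr] = \Pr[X_i = 1] = q_i$, which identifies the right-hand side as $H(q_i)$ and closes the chain of (in)equalities stated in the fact.

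There is no real obstacle here: the only thing to be a little careful about is making sure the expectation in Fact~\ref{fact:concavity-entropy} is over the same prefix distribution used everywhere else in this section (the probability-weighted prefixes), so that the unconditioning step $\E[q_i(w_{i-1})] = q_i$ is legitimate; once that is fixed, the statement is an immediate two-line consequence of concavity. I would also remark in passing that the same argument, verbatim, gives the analogous bound $\beta_i \le H$ of the averaged label-entropy, which is presumably why the fact is phrased for $\alpha_i$ but used symmetrically later.
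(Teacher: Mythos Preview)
Your proof is correct and is exactly the paper's approach: the paper simply cites concavity of entropy (Fact~\ref{fact:concavity-entropy}), and you have faithfully unpacked that one-line justification. One minor caveat: your closing remark that ``the same argument, verbatim'' handles $\beta_i$ is not quite right---the paper's bound $\beta_i \le q_i \log|A|$ in Fact~\ref{fact:betabound} is obtained by maximizing the label entropy via the uniform distribution over $A$, not by a concavity/Jensen step.
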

\begin{fact}
\label{fact:betabound}
\begin{equation*} 
\beta_i  = \E \left[ \beta_i (w_{i-1}) ) \right] \leq \E \left[ q_i (w_{i-1}) \cdot \log |A|\right]   = q_i \log |A| 
\end{equation*}
\end{fact}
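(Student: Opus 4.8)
The plan is to prove the pointwise inequality $\beta_i(w_{i-1}) \le q_i(w_{i-1})\log|A|$ for every prefix $w_{i-1}$, and then average over $w_{i-1}$.

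First I would unpack $\beta_i(w_{i-1})$ exactly the way the display for $H(Y_i\mid W_{i-1},X_i)$ was unpacked at the start of this subsection. Since $\beta_i(w_{i-1}) = H(Y_i \mid X_i, X_{<i}, Y_{<i}=w_{i-1})$, conditioning on the value of $X_i$ gives
\[
\beta_i(w_{i-1}) = \Pr[X_i=0\mid w_{i-1}]\cdot H(Y_i\mid w_{i-1}, X_i=0) + q_i(w_{i-1})\cdot H(Y_i\mid w_{i-1}, X_i=1).
\]
By the same normalization convention the paper already uses, $H(Y_i\mid w_{i-1}, X_i=0)=0$: when the $i$-th variable does not appear in $v$ there is no label $Y_i$ to record, so $Y_i$ is deterministic in that branch. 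Hence $\beta_i(w_{i-1}) = q_i(w_{i-1})\cdot H(Y_i\mid w_{i-1}, X_i=1)$.

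Next I would bound the remaining term. Conditioned on $X_i=1$, the variable $Y_i$ takes values in the alphabet $A$, which has $|A|$ elements, and the entropy of any random variable supported on a set of size $|A|$ is at most $\log|A|$ (the uniform distribution is the maximizer). Therefore $\beta_i(w_{i-1}) \le q_i(w_{i-1})\log|A|$ for every $w_{i-1}$. Taking expectation over $w_{i-1}$ (weighted by its prefix probability), using linearity together with the tower property $\E[q_i(w_{i-1})] = \E[\Pr[X_i=1\mid W_{i-1}]] = \Pr[X_i=1] = q_i$, we get
\[
\beta_i = \E[\beta_i(w_{i-1})] \le \E[q_i(w_{i-1})]\cdot \log|A| = q_i\log|A|,
\]
as claimed.

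There is essentially no obstacle here: the only two points requiring any care are (i) invoking the already-fixed convention $H(Y_i\mid X_i=0)=0$ so that the $X_i=0$ branch drops out, and (ii) the elementary maximum-entropy bound $H(\cdot)\le\log|A|$ for a variable on an $|A|$-letter alphabet. Everything else is the law of total probability and linearity of expectation.
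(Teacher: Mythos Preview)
Your proof is correct and matches the paper's approach: the paper's one-line justification ``$\beta_i$ is maximized by spreading $q_i$ mass uniformly over alphabet $A$'' is exactly the pointwise bound $H(Y_i\mid w_{i-1},X_i=1)\le\log|A|$ that you spell out, combined with the vanishing of the $X_i=0$ branch already noted earlier in the subsection.
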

\begin{proof} 
The bound on $\alpha_i$ follows from concavity of entropy (Fact \ref{fact:concavity-entropy}).
For the second bound, observe that $\beta_i$ is maximized by spreading $q_i$ mass uniformly over alphabet $A$. 
\end{proof}

We also recall some elementary approximations to logarithms and entropies
that will be useful in the analysis.  The proofs are deferred to the appendix.

\begin{fact} \label{fact:logk-expansion}
For $k = {n \choose \rho}$ then,
\begin{equation*}
\log k=nH\left(\frac{\rho}{n}\right)\pm O\left(\log n\right)=\left(\frac{1}{2}-o\left(1\right)\right)\rho\log n
\end{equation*}
\end{fact}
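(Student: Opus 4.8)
\textbf{Proof plan for Fact~\ref{fact:logk-expansion}.} The plan is to prove both equalities separately, starting from the standard Stirling-type estimate for binomial coefficients. First I would recall the well-known bound
\[
\frac{2^{nH(\rho/n)}}{n+1} \;\leq\; {n \choose \rho} \;\leq\; 2^{nH(\rho/n)},
\]
which follows from Stirling's approximation (or from the elementary fact that $2^{nH(p)} = \sum_j {n \choose j} p^{j}(1-p)^{n-j}\cdot(\dots)$-type identities; either way this is textbook). Taking logarithms immediately yields $\log k = nH(\rho/n) - O(\log n)$ on one side and $\log k \le nH(\rho/n)$ on the other, giving the first equality $\log k = nH(\rho/n) \pm O(\log n)$.

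For the second equality, I would substitute the specific value $\rho = \sqrt{n}\log\log n$, so that $p := \rho/n = \log\log n / \sqrt{n} = o(1)$. Now I expand the binary entropy function $H(p) = -p\log p - (1-p)\log(1-p)$ for small $p$. The dominant term is $-p\log p = p\log(1/p)$; the second term satisfies $-(1-p)\log(1-p) = (1-p)\cdot O(p) = O(p)$ by the Taylor estimate $\log(1-p) = -p - O(p^2)$. Hence $H(p) = p\log(1/p) + O(p)$. Multiplying by $n$ and using $np = \rho$,
\[
nH(p) = \rho\log(1/p) + O(\rho) = \rho\log\frac{\sqrt n}{\log\log n} + O(\rho) = \tfrac12 \rho\log n - \rho\log\log\log n + O(\rho).
\]
Since $\log\log\log n = o(\log n)$ and the $O(\rho)$ term is also $o(\rho\log n)$, this is $\left(\tfrac12 - o(1)\right)\rho\log n$. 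Finally, since the $O(\log n)$ error from the first equality is $O(\log n) = o(\rho\log n)$ as well (indeed $o(\rho)$), it gets absorbed into the $o(1)$, and we conclude $\log k = \left(\tfrac12 - o(1)\right)\rho\log n$.

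This is a routine computation and I do not anticipate a genuine obstacle; the only mild care needed is bookkeeping the error terms to make sure everything smaller than $\rho\log n$ is correctly swept into the $o(1)$ factor, in particular checking that the $O(\log n)$ slack in the binomial estimate and the $O(\rho)$ slack in the entropy expansion are both $o(\rho\log n)$ — which holds since $\rho = \sqrt{n}\log\log n \gg \log n \gg 1$. One could alternatively present this more slickly by just citing the entropy approximation of binomial coefficients as a black box and doing the one-line substitution, but writing out the $-p\log p$ expansion makes the $\left(\tfrac12 - o(1)\right)$ constant transparent.
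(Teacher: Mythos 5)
Your proposal is correct and follows essentially the same route as the paper: both establish $\log k = nH(\rho/n)\pm O(\log n)$ via a Stirling-type estimate (the paper applies Stirling to the factorials directly, you invoke the equivalent standard bound $2^{nH(p)}/(n+1)\le\binom{n}{np}\le 2^{nH(p)}$), and both then substitute $\rho=\sqrt{n}\log\log n$ and expand $nH(\rho/n)=\rho\log(n/\rho)+O(\rho)=\left(\tfrac12-o(1)\right)\rho\log n$. The error bookkeeping you describe matches the paper's.
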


More useful to us will be the following bounds on $\log k'$:

\begin{fact}
\label{fact:logkprime-expansionNEW}
Let $\epsilon_1 \ge 5\epsilon_0$, and $k,k',V,n,\rho$ as specified in the construction. Then,
\begin{gather*}
\log k'  \geq \max \left\{\log k, n H\left(\frac{\rho}{n}\right) \right\} -\underbrace{\epsilon_1 \log k/\log n}_{\approx \frac{\eps_1}{2} \cdot \rho}.
\end{gather*}

In particular, this means that most indices $i$ 
should contribute roughly $H\left(\frac{\rho}{n}\right)$ entropy to the choice of $v$.
\end{fact}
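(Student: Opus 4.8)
The plan is to unfold the definition $k' = k \cdot |V|^{-\epsilon_0/\log\log|V|}$ and control the resulting ``correction term'' against $\log k / \log n$. Taking logarithms, $\log k - \log k' = \frac{\epsilon_0 \log |V|}{\log\log|V|}$, so it suffices to show that this quantity is at most $\epsilon_1 \log k/\log n$ (which handles the $\log k$ branch of the maximum) and at most $\epsilon_1 \log k/\log n - O(\log n)$ (which, via Fact~\ref{fact:logk-expansion}, handles the $nH(\rho/n)$ branch).

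First I would estimate $\log|V|$ and $\log\log|V|$ using the construction parameters $\rho = \sqrt n\log\log n$, $k = \binom n\rho$, and $|V| = \binom n\rho |A|^\rho$, together with Fact~\ref{fact:logk-expansion}. Since $|A|$ is a constant, $\log|V| = \log k + \rho\log|A|$; and because $\log k = (\tfrac12 - o(1))\rho\log n$ dominates the term $\rho\log|A| = O(\rho)$, we get $\log|V| = (1+o(1))\log k \le 2\log k$ for $n$ large enough. For the denominator, $\log|V| \ge \log k \ge \tfrac13\rho\log n \ge \sqrt n\log n$ (for $n$ large), hence $\log\log|V| \ge \tfrac12\log n$.

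Plugging these two estimates in gives $\log k - \log k' = \frac{\epsilon_0\log|V|}{\log\log|V|} \le \frac{2\epsilon_0\log k}{\tfrac12\log n} = 4\epsilon_0\cdot\frac{\log k}{\log n}$. Since $\epsilon_1 \ge 5\epsilon_0$, this already yields $\log k' \ge \log k - \epsilon_1\log k/\log n$, the first branch. For the second branch, Fact~\ref{fact:logk-expansion} gives $nH(\rho/n) \le \log k + O(\log n)$; and since $\log k/\log n = \Theta(\rho)\to\infty$, we have $O(\log n) \le \epsilon_0\log k/\log n$ for $n$ large, so $\log k' \ge \log k - 4\epsilon_0\log k/\log n \ge nH(\rho/n) - 5\epsilon_0\log k/\log n \ge nH(\rho/n) - \epsilon_1\log k/\log n$. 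Combining the two branches proves the fact.

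There is no serious obstacle here — the statement is a routine asymptotic computation — but a couple of points require care. One must verify that $|V| = 2^{\tilde\Theta(\sqrt n)}$ so that $\log\log|V|$ is genuinely as large as $(\tfrac12-o(1))\log n$ (a smaller value of $\log\log|V|$ would inflate the correction term), and that the $O(\log n)$ additive error in Fact~\ref{fact:logk-expansion} is negligible next to $\log k/\log n \approx \rho/2$. The slack between $4\epsilon_0$ and $5\epsilon_0$ is precisely what absorbs this lower-order term.
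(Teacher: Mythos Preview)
Your proposal is correct and follows essentially the same route as the paper: estimate $\log|V| = (1+o(1))\log k$ and $\log\log|V| \ge \tfrac12\log n$, bound the correction term $\epsilon_0\log|V|/\log\log|V|$ by a small multiple of $\log k/\log n$, and absorb the $O(\log n)$ discrepancy from Fact~\ref{fact:logk-expansion} using the slack between $4\epsilon_0$ (or $(2+o(1))\epsilon_0$ in the paper) and $5\epsilon_0 \le \epsilon_1$. The paper's derivation of $\log\log|V| > \tfrac12\log n$ goes through $\log\log k > \log\rho$ rather than your direct $\log|V| \ge \sqrt{n}\log n$, but this is cosmetic.
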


We will also need the following bound which relates the entropies of a very biased coin and a slightly less biased one:
\begin{fact}
\label{fact:entropy-taylor}Let $1/n\ll\left|\upsilon\right|\ll1$,
then 
\[
H\left(\frac{1+\upsilon}{n}\right)=H\left(\frac{1}{n}\right)-\frac{\upsilon}{n}\log\frac{1}{n}-\left(\log\e\right)\frac{\upsilon^{2}}{2n}+O\left(n^{-2}\right)+O\left(\frac{\upsilon^{3}}{n}\right)
\]
\end{fact}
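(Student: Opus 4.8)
The plan is to prove the expansion by a third-order Taylor expansion of the binary entropy function $H(p) = -p\log p - (1-p)\log(1-p)$ about the point $p_0 := 1/n$, evaluated at $p := (1+\upsilon)/n$, for which the displacement is $h := p - p_0 = \upsilon/n$. First I would record the derivatives: a direct computation gives $H'(p) = \log\frac{1-p}{p}$, $H''(p) = -\frac{\log\e}{p(1-p)}$, and $H'''(p) = (\log\e)\left(\frac{1}{p^2} - \frac{1}{(1-p)^2}\right)$. Then Taylor's theorem with Lagrange remainder yields
\[
H(p_0 + h) = H(p_0) + H'(p_0)\,h + \tfrac12 H''(p_0)\,h^2 + \tfrac16 H'''(\xi)\,h^3
\]
for some $\xi$ strictly between $p_0$ and $p$.

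Next I would evaluate the three explicit terms. For the linear term, $H'(p_0) = \log\frac{1-1/n}{1/n} = \log n + \log(1 - 1/n) = \log n + O(1/n)$, so $H'(p_0)\,h = \frac{\upsilon}{n}\log n + O\!\left(\frac{\upsilon}{n^2}\right) = -\frac{\upsilon}{n}\log\frac1n + O(n^{-2})$, the last step using $|\upsilon| \ll 1$ to absorb the error into $O(n^{-2})$. For the quadratic term, $H''(p_0) = -\frac{\log\e}{(1/n)(1-1/n)} = -n\log\e\cdot(1 + O(1/n))$, hence $\tfrac12 H''(p_0)\,h^2 = -\tfrac12 (\log\e)\,n\cdot\frac{\upsilon^2}{n^2}\cdot(1+O(1/n)) = -\frac{(\log\e)\upsilon^2}{2n} + O(n^{-2})$, again using $|\upsilon|\ll 1$.

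For the remainder, the hypothesis $|\upsilon| \ll 1$ guarantees that $\xi$, lying between $1/n$ and $(1+\upsilon)/n$, is $\Theta(1/n)$ (in particular bounded away from both $0$ and $1$); thus $|H'''(\xi)| = O(1/\xi^2) = O(n^2)$, and $\tfrac16 H'''(\xi)\,h^3 = O\!\left(n^2\cdot \frac{\upsilon^3}{n^3}\right) = O\!\left(\frac{\upsilon^3}{n}\right)$. Collecting the four pieces gives exactly
\[
H\!\left(\tfrac{1+\upsilon}{n}\right) = H\!\left(\tfrac1n\right) - \frac{\upsilon}{n}\log\frac1n - \frac{(\log\e)\upsilon^2}{2n} + O(n^{-2}) + O\!\left(\frac{\upsilon^3}{n}\right),
\]
as claimed.

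I expect no serious obstacle here: the entire content is the bookkeeping of error terms. The only points that require a moment's care are (i) verifying that $\xi = \Theta(1/n)$ so that the Lagrange remainder is genuinely of the stated size, and (ii) observing that the extra hypothesis $1/n \ll |\upsilon|$ is not actually needed for the identity itself — it is what ensures, at the point of application, that the retained terms $\frac{\upsilon}{n}\log\frac1n$ and $\frac{(\log\e)\upsilon^2}{2n}$ dominate the $O(n^{-2})$ error. As an alternative that avoids Taylor's theorem, one could instead expand each of the two summands $-\frac{1+\upsilon}{n}\log\frac{1+\upsilon}{n}$ and $-(1-\frac{1+\upsilon}{n})\log(1-\frac{1+\upsilon}{n})$ using $\log(1+x) = (\log\e)(x - x^2/2 + O(x^3))$, subtract the analogous expansion of $H(1/n)$, and collect terms; this produces the same result with the same error bounds.
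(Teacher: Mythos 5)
Your proof is correct, and it reaches the statement by a slightly different route than the paper. The paper works directly with the definition of $H\left(\frac{1+\upsilon}{n}\right)$, splitting each of the two summands and substituting the series $\log(1+x)=(\log\e)\left(x-\frac{x^2}{2}+O(x^3)\right)$ term by term, then regrouping so that $H\left(\frac1n\right)$ appears explicitly; all the error control is done by inspecting individual terms. You instead apply Taylor's theorem with Lagrange remainder to the entropy function itself around $p_0=1/n$, which requires computing $H'$, $H''$, $H'''$ (your formulas are correct) but gives a cleaner, one-line justification of the cubic error via the bound $\xi=\Theta(1/n)$, which indeed follows from $|\upsilon|\ll1$. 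Your treatment of the linear and quadratic terms, absorbing the $O(\upsilon/n^2)$ and $O(\upsilon^2/n^2)$ corrections into $O(n^{-2})$ using $|\upsilon|\ll1$, matches what the paper does implicitly. Your closing observation is also accurate: the hypothesis $1/n\ll|\upsilon|$ plays no role in the validity of the expansion, only in making the retained terms meaningful where the fact is invoked (in Claim \ref{cla:many_typical_variables}). Either argument is acceptable; yours externalizes the bookkeeping to Taylor's theorem, while the paper's is self-contained and elementary.
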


\subsection{Consistency violations}
In this section, we show that if the entropy contribution of the assignments ($\sum_i H( Y_i | X_{ \leq i }, Y_{ < i })$) is large, there are many consistency 
violations between vertices, which lead to constant density loss. First, we show that if $H( Y_i | X_{ \leq i }, Y_{ < i }) > 5 \eps_1 \log k / \log n$, 
then at least a constant fraction of such entropy is concentrated on ``good'' variables.

\begin{definition}[Good Variables]
We say that an index $i$ 
is {\em good} if
\begin{itemize} 
\item $\alpha_i \geq H( q_i ) - 2 q_i \log |A|$
\item $\beta_i \geq \frac{1}{2} \eps_1  q_i $ 
\end{itemize}
where $\eps_1$ is a constant to be determined later in the proof.
\end{definition}
\begin{claim} \label{cla:goodvar} For any constant $\eps_1$, if $\sum_i \beta_i > 5\eps_1 \log k / \log n$,
\[
\sum_{\mbox{good \ensuremath{i}'s}} q_i^2  \geq 
\left(\frac{1}{5} \eps_{1} \rho\right)^2 / (n \log^2 |A|) = \Omega(\rho^2 /n) .
\]
\end{claim}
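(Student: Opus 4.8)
The plan is to show that "good" indices account for a constant fraction of $\sum_i \beta_i$, and then convert that lower bound on $\sum_{\text{good } i} \beta_i$ into the desired lower bound on $\sum_{\text{good } i} q_i^2$ via the crude per-index bound $\beta_i \le q_i \log|A|$ from Fact \ref{fact:betabound}. So the argument has two stages: (1) bound the total $\beta_i$-mass carried by \emph{bad} indices, and (2) apply Cauchy--Schwarz.

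For stage (1), I split ``bad'' into two types. An index $i$ is bad either because $\alpha_i < H(q_i) - 2q_i\log|A|$, or because $\beta_i < \tfrac12\eps_1 q_i$. For the second type, the contribution is immediately controlled: $\sum_{\beta_i < \frac12\eps_1 q_i}\beta_i < \tfrac12\eps_1 \sum_i q_i = \tfrac12\eps_1 \rho$, using $\sum_i q_i = \E[\sum_i X_i] = \rho$ since every vertex selects exactly $\rho$ variables. For the first type, I need that indices with a large ``$\alpha$-deficit'' $H(q_i)-\alpha_i$ cannot collectively carry too much $\beta$-mass either; the mechanism is that $\sum_i \alpha_i = H(X) \le H(XY) = \log k'$, while $\sum_i H(q_i) \ge \sum_i \alpha_i$ should be close to $\log k$ (indeed close to $nH(\rho/n)$ by the entropy structure and Fact \ref{fact:logk-expansion}), so the \emph{total} deficit $\sum_i (H(q_i)-\alpha_i)$ is at most roughly $\log k - \log k' + O(\log n) \le \eps_1 \log k/\log n \approx \tfrac{\eps_1}{2}\rho$ by Fact \ref{fact:logkprime-expansionNEW}. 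Hence the total $\beta$-mass on indices where the deficit exceeds $2q_i\log|A|$ is at most $\sum \beta_i \le \log|A| \sum q_i \le \log|A| \cdot \frac{1}{2\log|A|}\sum_i(H(q_i)-\alpha_i) = \tfrac12 \sum_i (H(q_i)-\alpha_i) \le \tfrac{\eps_1}{4}\rho$ (each such $q_i \le \frac{H(q_i)-\alpha_i}{2\log|A|}$). Adding the two bad contributions gives bad $\beta$-mass $\lesssim \tfrac34 \eps_1 \rho$, while by hypothesis the total is $\sum_i\beta_i > 5\eps_1\log k/\log n \approx \tfrac52\eps_1\rho$; subtracting, the good indices carry $\sum_{\text{good }i}\beta_i \ge \Omega(\eps_1\rho)$ — concretely at least $\eps_1\rho$ with room to spare, certainly at least $\tfrac15\eps_1\rho$.

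For stage (2), using $\beta_i \le q_i\log|A|$ on good indices,
\[
\tfrac15 \eps_1 \rho \;\le\; \sum_{\text{good }i}\beta_i \;\le\; \log|A| \sum_{\text{good }i} q_i \;\le\; \log|A|\,\sqrt{|\{\text{good }i\}|}\,\sqrt{\textstyle\sum_{\text{good }i} q_i^2}\;\le\; \log|A|\,\sqrt{n}\,\sqrt{\textstyle\sum_{\text{good }i} q_i^2},
\]
where the Cauchy--Schwarz step uses that there are at most $n$ indices in total. Rearranging yields $\sum_{\text{good }i} q_i^2 \ge (\tfrac15\eps_1\rho)^2/(n\log^2|A|) = \Omega(\rho^2/n)$, which is exactly the claim.

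The main obstacle is stage (1): making precise that the ``$\alpha$-deficit'' is globally small. This requires carefully tracking the identities $\sum_i \alpha_i = H(X)$ and $H(X) \le H(XY) = \log k'$ (Observation), together with the fact that $\sum_i H(q_i)$ is lower-bounded by something like $nH(\rho/n)$ — which is where one must be slightly careful, since $\sum_i H(q_i) \ge \sum_i \alpha_i$ is the easy direction but we need $\sum_i H(q_i)$ to \emph{exceed} $\log k'$ by only the small slack $\eps_1\log k/\log n$; this is precisely what Fact \ref{fact:logkprime-expansionNEW} with $\epsilon_1 \ge 5\epsilon_0$ is set up to deliver, so the bookkeeping, while delicate, is already prepared for by the stated facts. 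The constants ($\tfrac15$, $\tfrac12$, etc.) are generous enough that I would not optimize them.
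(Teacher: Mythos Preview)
Your overall plan and Stage~2 are correct and match the paper's argument (the paper applies Cauchy--Schwarz to $\sum \beta_i^2$ rather than $\sum q_i$, but this is the same step). The gap is in Stage~1, in your bound on the total $\alpha$-deficit. You write ``$\sum_i \alpha_i = H(X) \le H(XY) = \log k'$'', but this is doubly wrong: first, $\sum_i \alpha_i = \sum_i H(X_i \mid X_{<i},Y_{<i}) \le \sum_i H(X_i\mid X_{<i}) = H(X)$, not equality; second, and more importantly, the inequality $\le \log k'$ points the wrong way --- to upper-bound the deficit $\sum_i (H(q_i)-\alpha_i)$ you need a \emph{lower} bound on $\sum_i \alpha_i$. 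The only one available is the chain-rule identity $\sum_i\alpha_i = \log k' - \sum_i\beta_i$, which together with $\sum_i H(q_i)\le nH(\rho/n)$ (concavity) and Fact~\ref{fact:logkprime-expansionNEW} gives
\[
\sum_i (H(q_i)-\alpha_i)\;\le\; \eps_1\log k/\log n \;+\; \sum_i\beta_i,
\]
not just $\eps_1\log k/\log n$. The extra $\sum_i\beta_i$ term cannot be dropped: under the hypothesis it may be much larger than $\eps_1\rho$.

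Fortunately your argument survives the correction. With the right deficit bound, bad-type-1 indices carry $\beta$-mass at most $\tfrac12\big(\eps_1\log k/\log n + \sum_i\beta_i\big)$, so the good $\beta$-mass is at least
\[
\tfrac12\sum_i\beta_i - \tfrac12\eps_1\log k/\log n - \tfrac12\eps_1\rho \;\ge\; 2\eps_1\log k/\log n - \tfrac12\eps_1\rho \;\ge\; \big(\tfrac12 - o(1)\big)\eps_1\rho,
\]
still comfortably $\ge \tfrac15\eps_1\rho$. This is exactly the content of the paper's inequality~\eqref{eq:beta_i-good-mass2}; the paper gets there by splitting on whether $\alpha_i+\beta_i \gtrless H(q_i)-q_i\log|A|$ rather than on the $\alpha$-deficit alone, but the two routes are equivalent once the $\sum_i\beta_i$ term is accounted for.
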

\begin{proof}


We want to show that many of the indices $i$ have both a large $\alpha_{i}$
and a large $\beta_{i}$ simultaneously. We can write
\begin{gather*}
\sum_{i\in\left[n\right]}\left(\alpha_{i}+\beta_{i}\right)  =  \sum_{i\colon\alpha_{i}+\beta_{i}<H\left(q_{i}\right)-q_{i}\log|A|}\left(\alpha_{i}+\beta_{i}\right)+\sum_{i\colon\alpha_{i}+\beta_{i}\geq H\left(q_{i}\right)-q_{i}\log|A|}\left(\alpha_{i}+\beta_{i}\right)
\end{gather*}
Using Facts \ref{fact:alphabound} and \ref{fact:betabound},
we have
\begin{gather}
\sum_{i\in\left[n\right]}\left(\alpha_{i}+\beta_{i}\right)\leq\sum_{i\colon\alpha_{i}+\beta_{i}<H\left(q_{i}\right)-q_{i}\log|A|}\left(H\left(q_{i}\right)-\beta_{i}\right)+\sum_{i\colon\alpha_{i}+\beta_{i}\geq H\left(q_{i}\right)-q_{i}\log|A|}\left(H\left(q_{i}\right)+\beta_{i}\right)\label{eq:sum(a_i+b_i)}.
\end{gather}
Because the subgraph is of size $k'$, from the expansion of $\log k'$ (Fact \ref{fact:logkprime-expansionNEW}), 
\begin{gather*} 
\sum_{i\in\left[n\right]}\left(\alpha_{i}+\beta_{i}\right)\ge n H\left(\frac{\rho}{n}\right)-\epsilon_1 \log k /\log n \geq \sum H\left(q_{i}\right)-\epsilon_1 \log k /\log n,
\end{gather*}
where the second inequality follows from the concavity of entropy. 
Plugging into \eqref{eq:sum(a_i+b_i)}, we have
\begin{align*}
 \sum_{i\colon\alpha_{i}+\beta_{i}\geq H\left(q_{i}\right)-q_{i}\log|A|}\beta_{i} &\geq \sum_{i\colon\alpha_{i}+\beta_{i}<H\left(q_{i}\right)-q_{i}\log|A|}\beta_{i}-\epsilon_1 \log k /\log n\\
& = \left( \sum_i \beta_i - \sum_{i\colon\alpha_{i}+\beta_{i}\geq H\left(q_{i}\right)-q_{i}\log|A|}\beta_{i} \right) -\epsilon_1 \log k /\log n
\end{align*}
Rearranging, we get
\begin{equation}
\sum_{i\colon\alpha_{i}+\beta_{i}\geq H\left(q_{i}\right)-q_{i}\log|A|}\beta_{i} \geq \frac{1}{2} \sum_i \beta_i -\epsilon_1 \log k /\log n
\label{eq:beta_i-good-mass2}
\end{equation}

For all the $i$'s in the LHS summation, $\alpha_{i}\geq H\left(q_{i}\right)-2q_{i}\log|A|$ by Fact \ref{fact:betabound}.
From now on, we will consider only $i$'s that satisfy this condition. Now, using the premise on $\sum_i \beta_i$ and \pref{eq:beta_i-good-mass2} we have: 

\begin{gather*}
\sum_{i\colon\alpha_{i}\geq H\left(q_{i}\right)-2q_{i}\log|A|}\beta_{i}
\geq (5/2-1)\epsilon_1 \log k /\log n
\geq 0.7 \eps_{1}\rho,
\end{gather*}
where the second inequality follows from our approximation for $\log k$ (Fact \ref{fact:logk-expansion}).

We want to further restrict our attention to $i$'s for which $\beta_{i}$
is at least $\frac{1}{2}\eps_{1}q_i$ (aka good $i$'s).
Note that the above inequality can be decomposed to 
\begin{equation*} 
\sum_{\mbox{good \ensuremath{i}'s}}\beta_{i} +\sum_{\substack{i\colon\alpha_{i}\geq H\left(q_{i}\right)-2q_{i}\log|A| \\ \beta_i <\frac{1}{2}\eps_{1}q_i}}\beta_{i} \geq 0.7\eps_{1}\rho
\end{equation*}
Now via a simple sum bound,
\begin{equation*}
\sum_{\substack{i\colon\alpha_{i}\geq H\left(q_{i}\right)-2q_{i}\log|A| \\ \beta_i <\frac{1}{2}\eps_{1}q_i}}\beta_{i}   \leq
\frac{1}{2} \eps_{1} \sum_i  q_i = 
 \frac{1}{2}\eps_{1}\rho
\end{equation*}
Rearranging, we get, 
\[
\sum_{\mbox{good \ensuremath{i}'s}}\beta_{i}\geq \frac{1}{5} \eps_{1} \rho
\]
By Cauchy-Schwartz we have:
\[
\sum_{\mbox{good \ensuremath{i}'s}}\beta_{i}^2\geq\left(\frac{1}{5} \eps_{1} \rho \right)^2/n
\]
Finally, since $\beta_i \leq q_i \log |A|$,
\[
\sum_{\mbox{good \ensuremath{i}'s}}q_i^2 \geq\left(\frac{1}{5} \eps_{1} \rho \right)^2/(n \log ^2 |A|).
\]

\end{proof}

In the same spirit, we now define a notion of a ``good'' prefix. 
Intuitively, conditioning on a good prefix leaves a significant amount of entropy on 
the $i$'th index. 
We also require that a good prefix has a high prefix degree;
that is, it has many neighbors it could potentially lose when revealing the $i$-th label. 
\begin{definition}[Good Prefixes]
\label{def:good_prefix}
We say $w_{i-1}$ is a {\em good} prefix if
\begin{itemize} 
\item $i$ is good; 
\item $\sum_{ \sigma_{i-1} \in \cN ( w_{i-1} ) } q_{i}(\sigma_{i-1}) \Pr[ \sigma_{i-1} ] \geq ( 1 - \eps_2) q_{i}$;
\item $\beta_{i} (w_{i-1})  \geq \eps_3 q_{i}(w_{i-1})$ 
\end{itemize}
where $\eps_3 =  (\eps_4 + \kappa) \log |A|$, with $\eps_4$ an an arbitrarily small constant that denotes the fraction of assignments that disagree with the majority of the assignments, $\kappa = \Theta( 1/ \log |A|)$ factor, and $\eps_2$ a constant that satisfies $\delta = \left( \frac{\eps_2}{|A|^{2/\eps_2}} \right)^4$, with $\den(S) = 1 - \delta$.
\end{definition}

In the following claim, we show that these prefixes contribute some constant fraction of entropy, assuming that our subset is dense.

\begin{claim}\label{cla:goodpre} If $\den(S) > 1 - \delta$, where $\delta = \left( \frac{\eps_2}{|A|^{2/\eps_2}} \right)^4$ and $\eps_1 \geq 4 \eps_2 \log |A| + 8 \eps_3$, then
for every good index $i$, it holds that
\[
\sum_{\mbox{good \ensuremath{w_{i-1}}'s}}
\Pr[w_{i-1}] \beta_{i}\left(w_{i-1}\right)
\geq \beta_i / 4
\]
\end{claim}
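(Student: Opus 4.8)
The plan is to fix a good index $i$ and decompose the quantity $\beta_i = \sum_{w_{i-1}} \Pr[w_{i-1}]\,\beta_i(w_{i-1})$ according to which of the three defining conditions of a good prefix fails, and to show that each of the three "bad" buckets contributes at most $\beta_i/4$ (or so) to the sum, leaving at least $\beta_i/4$ for the good prefixes. The first condition is satisfied by all $w_{i-1}$ since $i$ is good; so only the second and third conditions are at issue. Concretely, I would write
\[
\beta_i = \sum_{\text{good } w_{i-1}} \Pr[w_{i-1}]\,\beta_i(w_{i-1}) \;+\; \sum_{\substack{w_{i-1} \text{ fails cond.\ 2}}} \Pr[w_{i-1}]\,\beta_i(w_{i-1}) \;+\; \sum_{\substack{w_{i-1} \text{ fails cond.\ 3, satisfies cond.\ 2}}} \Pr[w_{i-1}]\,\beta_i(w_{i-1}),
\]
and bound the last two sums separately.

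For the third bucket, the bound is easy: if $w_{i-1}$ fails condition 3 then $\beta_i(w_{i-1}) < \eps_3\, q_i(w_{i-1})$, so this bucket contributes at most $\eps_3 \sum_{w_{i-1}} \Pr[w_{i-1}]\,q_i(w_{i-1}) = \eps_3\, q_i$. Since $i$ is good we have $\beta_i \geq \tfrac12 \eps_1 q_i$, so this is at most $(2\eps_3/\eps_1)\,\beta_i$; using the hypothesis $\eps_1 \geq 8\eps_3$ this is at most $\beta_i/4$. The main work is the second bucket: I need to show that prefixes violating $\sum_{\sigma_{i-1}\in\cN(w_{i-1})} q_i(\sigma_{i-1})\Pr[\sigma_{i-1}] \geq (1-\eps_2)q_i$ carry only a small fraction of the total weight (weighted by $\beta_i(w_{i-1}) \le q_i(w_{i-1})\log|A|$). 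This is exactly where the density assumption $\den(S) > 1-\delta$ enters. The idea: $\den(G_i) \ge \den(G_{n+1}) \ge \den(S) > 1-\delta$ by Observation~\ref{obs:density-monotone}, so the prefix graph $G_i$ is very dense, hence most prefixes $w_{i-1}$ have prefix-degree $\ge 1 - \sqrt{\delta}$ (a Markov argument on the density deficit). Moreover, a weighted averaging / second-moment argument over the dense graph shows that for most $w_{i-1}$, the $q_i$-mass seen in its neighborhood, $\sum_{\sigma_{i-1}\in\cN(w_{i-1})} q_i(\sigma_{i-1})\Pr[\sigma_{i-1}]$, is close to the global average $q_i$ — because $G_i$ being $(1-\delta)$-dense means neighborhoods are "almost all of $S$", and the deviation of the conditional mean of $q_i$ from its unconditional mean is controlled by $\delta$. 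The precise choice $\delta = (\eps_2/|A|^{2/\eps_2})^4$ is calibrated so that the total $\Pr[\cdot]$-weight of prefixes failing condition 2 is at most (roughly) $\eps_2 / (2|A|^{2/\eps_2}) \cdot$(something), and after multiplying by the worst-case $\beta_i(w_{i-1}) \le q_i(w_{i-1})\log|A| \le \log|A|$ and comparing against $\beta_i \ge \tfrac12\eps_1 q_i$, invoking $\eps_1 \ge 4\eps_2\log|A|$, one gets a bound of $\beta_i/4$ for this bucket as well.

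Summing: good prefixes contribute at least $\beta_i - \beta_i/4 - \beta_i/4 = \beta_i/2 \ge \beta_i/4$, which is what we want (with room to spare, consistent with the slightly lossy constants).

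The main obstacle I anticipate is the second bucket — turning "$G_i$ is $(1-\delta)$-dense" into "for almost every prefix $w_{i-1}$, the $q_i$-mass in its neighborhood is at least $(1-\eps_2)q_i$." A global density bound does not by itself prevent a small set of low-degree prefixes from systematically missing the high-$q_i$ region; I would need to argue that the prefixes $w_{i-1}$ with large $q_i(w_{i-1})$ themselves have large degree (again by the density bound applied to the subgraph, or by a union/Markov bound balancing degree deficit against $q_i$-weight), so that a prefix's neighborhood cannot simultaneously be large and avoid most of the $q_i$-mass. Getting the iterated-fourth-power shape of $\del= (\eps_2/|A|^{2/\eps_2})^4$ to come out — presumably two nested Markov inequalities (once for the degree deficit, once for the conditional-mean deviation), each costing a square root — is the delicate bookkeeping step, and is where I'd expect the proof in the paper to spend most of its effort.
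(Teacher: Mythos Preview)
Your high-level decomposition and the treatment of the ``condition~3'' bucket are exactly right and match the paper. The gap is in the ``condition~2'' bucket, and it is precisely the obstacle you flag at the end: a pure density/Markov argument cannot do the job, and the paper does \emph{not} proceed that way.

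The missing idea is that you must use the \emph{first} goodness condition on the index $i$, namely $\alpha_i \ge H(q_i) - 2q_i\log|A|$, which you never invoke. The paper's argument is information-theoretic: call a prefix \emph{popular} if its prefix-degree is at least $1-\sqrt{\delta}$; by Markov on $\den(G_i)\ge 1-\delta$, unpopular prefixes have total measure $\le \sqrt{\delta}$. Now suppose (for contradiction) that more than an $\eps_2$-fraction of the $q_i$-mass sits on unpopular prefixes. Let $Z$ be the indicator of popularity. Then $X_i$ and $Z$ are noticeably correlated, and the data-processing inequality plus a KL computation give
\[
\alpha_i = H(X_i \mid W_{i-1}) \le H(X_i \mid Z) \le H(q_i) - q_i\,\Div{\eps_2}{\sqrt{\delta}}.
\]
With $\delta = (\eps_2/|A|^{2/\eps_2})^4$ one has $\Div{\eps_2}{\sqrt{\delta}} \ge 2\log|A|$, contradicting $\alpha_i \ge H(q_i) - 2q_i\log|A|$. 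So unpopular prefixes carry at most $\eps_2 q_i$ of the $q_i$-mass, hence at most $\eps_2 q_i \log|A| \le \tfrac14\eps_1 q_i \le \tfrac12\beta_i$ of the $\beta_i$-mass (using $\eps_1 \ge 4\eps_2\log|A|$). The \emph{same} KL argument, applied now with $Z$ the indicator of membership in $\cN(w_{i-1})$ for a fixed popular $w_{i-1}$ (whose complement again has measure $\le \sqrt{\delta}$), shows that every popular prefix automatically satisfies condition~2. Thus every prefix failing condition~2 is unpopular, and that bucket is already bounded by $\beta_i/2$.

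So the shape of $\delta$ does not come from ``two nested Markov inequalities each costing a square root'' as you guessed; it comes from one Markov step ($\delta \to \sqrt{\delta}$) followed by the requirement $\Div{\eps_2}{\sqrt{\delta}} \ge 2\log|A|$, which is what forces $\sqrt{\delta}$ to be exponentially small in $1/\eps_2$. Without the $\alpha_i$ condition there is nothing preventing the $q_i$-mass from concentrating on a $\sqrt{\delta}$-measure set of low-degree prefixes, which is exactly the scenario you worried about.
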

\begin{proof}
We begin by proving that most prefixes satisfy the degree condition of Definition \ref{def:good_prefix}.
Let $w_{i-1}$ be {\em popular} if $i$ is a good variable
and its degree in the prefix graph $G_i$ is at least $deg(w_{i-1}):=\sum_{\sigma_{i-1}\in{\cal N}\left(w_{i-1}\right)}\Pr[ \sigma_{i-1} ]$ $ \geq1-\sqrt{\delta}$.
Recall that $\den(G_i) \geq \den(S) \geq (1- \delta)$ (by Observation \ref{obs:density-monotone}).
Thus by Markov inequality, at most $\sqrt{\delta}$-fraction of the prefixes are unpopular. 

Let $Z(\cdot)$ be the indicator variable for $W_{i-1}$ being popular.
For the sake of contradiction, suppose that more than $\eps_2$-fraction of the
$q_i$-mass is concentrated on unpopular prefixes, that is:
\begin{gather} \label{eq_unpopular_fraction}
\sum_{\mbox{unpopular \ensuremath{w_{i-1}}'s}}\Pr[ w_{i-1} ]q_{i}\left(w_{i-1}\right) = \Pr\left[Z(W_{i-1})=0\right]\cdot \Pr\left[X_i=1\; | \; Z(W_{i-1})=0\right]  > \eps_{2}q_{i}.
\end{gather}
We would like to argue that this condition implies that the distribution on the $X_i$'s is highly biased by the conditioning on the (popularity of the) 
prefix;  this in turn implies that $\alpha_i$, the expected conditional entropy of $X_i$, must be low, contradicting the assumption that $i$ is good. 
Indeed, by data-processing inequality (Fact \ref{fact:dataprocess}),
\begin{eqnarray}
\alpha_{i} & = & H\left(X_i \mid W_{i-1}\right)\nonumber \\
 & \leq & H\left(X_i \mid Z(W_{i-1}) \right)\nonumber \\
 & = & H\left( X_i \right)-I\left(X_i ; Z(W_{i-1})\right)\label{eq:I(X;Y)}
\end{eqnarray}
Since we can write mutual information as expected KL-divergence (Fact \ref{fact:divergence}), and KL-divergence is non-negative, we get 
\begin{align*}
 I( X_i ; Z(W_{i-1}) ) & = \E_{x_i}\left[ \Div{Z(W_{i-1})|x_i}{Z(W_{i-1})} \right]  \\ 
&\geq q_i \cdot \Div{\Pr[Z(W_{i-1}) =1 \; | x_i =1]}{Z(W_{i-1}) =1}\\ 
& \geq q_i \cdot \Div{ 1- \eps_2 }{ 1- \sqrt{\delta}} = q_i \Div{\eps_2}{\sqrt{\delta}},
\end{align*}
where the second inequality follows from the fact that for all good $i$'s, our degree assumption implies 
$\Pr[Z(W_{i-1})] \geq (1-\sqrt{\delta})$, and  our assumption in \eqref{eq_unpopular_fraction}
implies, via Bayes rule, that $\Pr[Z(W_{i-1}=0\; | \; x_i = 1] \geq \eps_2$, and therefore
$\Pr[W_{i-1} =1 \; | \; x_i = 1] \leq 1 - \eps_2$. Note that by our setting of parameters 
$1-\sqrt{\delta}>1-\varepsilon_2$. 

Plugging into (\ref{eq:I(X;Y)}) we have:
\begin{equation}
\alpha_{i}\leq H\left(q_i \right) - q_{i}\Div{\eps_{2}}{\sqrt{\delta}}\mbox{.}\label{eq:KL-div}
\end{equation}
On the other hand, recall that since $i$ is good, $\alpha_{i}\geq H\left(q_{i}\right)-2q_{i}\log|A|$.
Recall also that $\delta = \left( \frac{\eps_2}{|A|^{2/\eps_2}}\right)^4$, and therefore $\Div{\eps_2}{ \sqrt{\delta}}\geq2\log|A|$. Thus, we get a contradiction to \eqref{eq_unpopular_fraction}.
From now on we assume
\begin{equation}
\sum_{\mbox{unpopular \ensuremath{w_{i-1}}'s}}\Pr[w_{i-1}]q_{i}\left(w_{i-1}\right)\leq\eps_{2}q_{i}.      \label{eq:q_i-concentrated-on-neighbors-1}
\end{equation}
This implies that even if the assignment is uniform over the alphabet, the contribution to $\sum\beta_{i}$  
from unpopular prefixes is small:
\begin{align*}
\sum_{\mbox{unpopular \ensuremath{w_{i-1}}'s}}\Pr[w_{i-1}]\beta_{i}\left(w_{i-1}\right) &~\leq~ \sum_{\mbox{unpopular \ensuremath{w_{i-1}}'s}}\Pr[w_{i-1}] q_{i}\left(w_{i-1}\right) \log |A| \\
&~\leq~ \eps_{2}q_{i}\log|A|
~\leq~ \frac{1}{4}\eps_1 q_i ~\leq~ \frac{1}{2} \beta_i
\end{align*}
where first inequality follows from Fact \ref{fact:betabound}, second from \pref{eq:q_i-concentrated-on-neighbors-1}, third from our setting of $\eps_1 \geq 4 \eps_2 \log |A|$, and fourth from $\beta_{i}\geq\frac{1}{2}\eps_{1}q_{i}$ since $i$ is good. Therefore,
\begin{gather*}
\sum_{\mbox{popular \ensuremath{w_{i-1}}'s}}
\Pr[w_{i-1}]\beta_{i}\left(w_{i-1}\right)  = 
\beta_i -
\sum_{\mbox{unpopular \ensuremath{w_{i-1}}'s}}
\Pr[w_{i-1}]\beta_{i}\left(w_{i-1}\right)  \geq
\beta_i/2
\end{gather*}

Using a similar argument, we show that for any popular $w_{i-1}$,
most of the $q_{i}$ mass is concentrated on its neighbors. 
Consider any popular $w_{i-1}$, and let ${\cal N}^{C}\left(w_{i-1}\right)$ denote the complement of ${\cal N}\left(w_{i-1}\right)$.
Then we can rewrite $\alpha_{i}$ as:
\begin{eqnarray*}
\alpha_{i} & = & \sum_{\sigma_{i-1}\in{\cal N}\left(w_{i-1}\right)}\Pr[ \sigma_{i-1} ]\alpha_{i}\left(\sigma_{i-1}\right)+\sum_{\sigma_{i-1}\in{\cal N}^{C}\left(w_{i-1}\right)}\Pr[ \sigma_{i-1} ]\alpha_{i}\left(\sigma_{i-1}\right)
\end{eqnarray*}
Notice that since $w_{i-1}$ is popular, ${\cal N}^{C}\left(w_{i-1}\right)$
has measure at most $\sqrt{\delta}$. Thus, if an $\eps_{2}$-fraction
of the $q_{i}$ mass is concentrated on ${\cal N}^{C}\left(w_{i-1}\right)$,
we once again (like in (\ref{eq:KL-div})) have 
\[
\alpha_{i}\leq H\left(q_{i}\right)-q_{i}\Div{\eps_{2}}{\sqrt{\delta}}\mbox{,}
\]
which would again yield a contradiction to $i$ being a good variable.
Therefore every popular prefix also satisfies the $q_{i}$-weighted
condition on the degree: 
\begin{equation}
\sum_{\sigma_{i-1}\in{\cal N}\left(w_{i-1}\right)}\Pr[ \sigma_{i-1} ]q_{i}\left(\sigma_{i-1}\right)\geq\left(1-\eps_{2}\right)q_{i}\label{eq:q_i-concentrated-on-neighbors}
\end{equation}

Recall that a prefix $w_{i-1}$ is good if it also satisfies
$\beta_{i}\left(w_{i-1}\right)\geq\eps_{3}\cdot q_{i}\left(w_{i-1}\right)$.
Fortunately, prefixes that violate this condition (i.e. those with small $\beta_{i}\left(w_{i-1}\right)$), 
cannot account for much of the weight on $\beta_{i}$:
\[
\sum_{\beta_{i}\left(w_{i-1}\right)<\eps_{3}q_{i}\left(w_{i-1}\right)}
\Pr[w_{i-1}]\beta_{i}\left(w_{i-1}\right) \leq
\eps_{3} q_{i} .
\]
Since $i$ is good and $\eps_{1}\geq 8 \eps_{3}$, this implies:
\begin{gather*}
\sum_{\mbox{good \ensuremath{w_{i-1}}'s}}
\Pr[w_{i-1}] \beta_{i}\left(w_{i-1}\right)
\geq \beta_i / 2 - \eps_3 q_i
\geq \beta_i / 4
\end{gather*}
since 
\begin{equation*}
\eps_3 q_i \leq \frac{1}{8} \eps_1 q_i \leq \frac{1}{4} \beta_i 
\end{equation*}
where last inequality follows from $i$ being good.
\end{proof}

\begin{corollary}
\label{cor:good_prefix}
For every good index $i$,
\[
\sum_{\mbox{good \ensuremath{w_{i-1}}'s}}
\Pr[w_{i-1}] q_{i}\left(w_{i-1}\right)
\geq \frac{\eps_1 }{8 \log |A|}q_i.
\]
\end{corollary}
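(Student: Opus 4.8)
The plan is to derive this as an immediate consequence of Claim~\ref{cla:goodpre}, which already lower bounds the $\beta_i(w_{i-1})$-weighted mass on good prefixes by $\beta_i/4$. The only thing to do is to convert this bound on the $\beta$-weighted mass into a bound on the $q_i$-weighted mass, and then to replace $\beta_i$ by a multiple of $q_i$ using the goodness of the index $i$.

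First I would observe that Fact~\ref{fact:betabound} is really a pointwise statement: for \emph{every} prefix $w_{i-1}$, spreading the $q_i(w_{i-1})$ mass uniformly over the alphabet $A$ maximizes the conditional label entropy, so $\beta_i(w_{i-1}) \le q_i(w_{i-1})\log|A|$. Summing this inequality against $\Pr[w_{i-1}]$ over the good prefixes and combining with Claim~\ref{cla:goodpre} gives
\[
\log|A| \cdot \sum_{\mbox{good \ensuremath{w_{i-1}}'s}} \Pr[w_{i-1}]\, q_i(w_{i-1})
\;\ge\; \sum_{\mbox{good \ensuremath{w_{i-1}}'s}} \Pr[w_{i-1}]\, \beta_i(w_{i-1})
\;\ge\; \beta_i/4 .
\]
Dividing through by $\log|A|$ yields $\sum_{\text{good }w_{i-1}} \Pr[w_{i-1}]\,q_i(w_{i-1}) \ge \beta_i/(4\log|A|)$.

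Finally, since the index $i$ is good, its definition gives $\beta_i \ge \tfrac12 \eps_1 q_i$, and substituting this into the previous display produces the claimed bound $\sum_{\text{good }w_{i-1}} \Pr[w_{i-1}]\,q_i(w_{i-1}) \ge \eps_1 q_i/(8\log|A|)$. There is no real obstacle here — the content is entirely in Claim~\ref{cla:goodpre}; this corollary is just the translation from an entropy-weighted statement to a probability-weighted statement, which is exactly the form needed downstream (where one counts consistency violations against pairs of good prefixes and their lost neighbors).
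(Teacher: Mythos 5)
Your proposal is correct and matches the paper's proof exactly: both apply the pointwise form of Fact~\ref{fact:betabound} (i.e.\ $\beta_i(w_{i-1}) \le q_i(w_{i-1})\log|A|$ for each prefix), then invoke Claim~\ref{cla:goodpre}, and finally use $\beta_i \ge \tfrac12\eps_1 q_i$ from the definition of a good index. Your explicit remark that the fact must be read pointwise rather than in expectation is a useful clarification of a step the paper leaves implicit.
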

\begin{proof}
\begin{align*}
\sum_{\mbox{good \ensuremath{w_{i-1}}'s}}
\Pr[w_{i-1}] q_{i}\left(w_{i-1}\right) & \geq 
\sum_{\mbox{good \ensuremath{w_{i-1}}'s}}
\Pr[w_{i-1}] \beta_{i} / \log|A|  && \text{(Fact \ref{fact:betabound})}\\
& \geq \beta_i / (4 \log |A|)  && \text{(Claim \ref{cla:goodpre})} \\
& \geq \frac{\eps_1 }{8 \log |A|}q_i && \text{(Definition of good $i$)}
\end{align*}
\end{proof}

With~\pref{cla:goodvar} and Corollary \pref{cor:good_prefix}, we are ready to prove the main lemma of this section: 
\begin{lemma}[Labeling Entropy Bound] \label{lem:consistency}
If $\sum_i H( Y_i | X_{ \leq i }, Y_{ < i }) > \frac{5\eps_1 \log k}{\log n}$, then $\den(S) < 1 - \delta$. 
\end{lemma}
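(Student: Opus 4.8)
The goal is to show that a large contribution from the labeling entropy forces many consistency violations, hence sparsity. The strategy is to lose a constant fraction of each good prefix's degree when we reveal the $i$-th label, sum this loss over all good indices, and conclude that the total density drop is $\Omega(1)$, contradicting $\den(S) \ge 1-\delta$ for small enough $\delta$. Concretely, I would argue by contradiction: assume $\den(S) \ge 1-\delta$ (so the hypotheses of Claim \ref{cla:goodpre} and Corollary \ref{cor:good_prefix} are in force), and show this is incompatible with $\sum_i \beta_i > 5\eps_1 \log k / \log n$.

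\textbf{Step 1: locate the entropy on good prefixes of good indices.} By Claim \ref{cla:goodvar}, the premise $\sum_i \beta_i > 5\eps_1 \log k/\log n$ guarantees $\sum_{\text{good }i} q_i^2 = \Omega(\rho^2/n)$. By Corollary \ref{cor:good_prefix}, for each good $i$ the good prefixes carry $q_i$-mass at least $\frac{\eps_1}{8\log|A|} q_i$. So for each good $i$ we have a set of good prefixes, each of which (i) has prefix-degree $\ge 1-\sqrt\delta$ in $G_i$, (ii) has $(1-\eps_2)$-fraction of the $q_i$ mass concentrated on its neighbors, and (iii) has $\beta_i(w_{i-1}) \ge \eps_3 q_i(w_{i-1})$, meaning the label $Y_i$ conditioned on $w_{i-1}$ is genuinely spread out: at least an $\eps_4$-fraction of the conditional $q_i(w_{i-1})$-mass disagrees with the plurality label.

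\textbf{Step 2: convert ``spread-out label'' into lost edges.} Fix a good index $i$ and a good prefix $w_{i-1}$. Among the $q_i(w_{i-1})$-mass of continuations of $w_{i-1}$ in which variable $i$ is revealed, at least $\eps_4 q_i(w_{i-1})$ of it assigns $Y_i$ a label other than the plurality. Take the plurality label; the continuations carrying a non-plurality label are non-neighbors (in $G_{i+1}$, hence eventually in $G_{n+1}$ and $S$) of the plurality-labeled continuations. Using (ii) — most of the $q_i$-mass sits on $\cN(w_{i-1})$ — and the degree bound (i), one gets that revealing the $i$-th coordinate destroys a $q_i(w_{i-1}) \cdot \Omega(\eps_4 \cdot (\text{plurality fraction}))$-fraction of the (weighted) edges incident to the prefixes in $\cN(w_{i-1})$. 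Averaging over the good prefixes of index $i$ (using the Corollary \ref{cor:good_prefix} lower bound on their mass) gives a per-index density drop $\den(G_i) - \den(G_{i+1}) = \Omega(\eps_1 \eps_4 q_i^2 / \log|A|)$ — the two $q_i$ factors coming from picking both endpoints inside the revealed-$i$ population, which has $q_i$-mass roughly $q_i$.

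\textbf{Step 3: sum and derive the contradiction.} Summing the per-index drops over good $i$ and telescoping via Observation \ref{obs:density-monotone},
\[
1 - \den(S) \;\ge\; \den(G_1) - \den(G_{n+1}) \;\ge\; \sum_{\text{good }i} \Omega\!\left(\tfrac{\eps_1 \eps_4}{\log|A|}\, q_i^2\right) \;=\; \Omega\!\left(\tfrac{\eps_1 \eps_4}{\log|A|}\cdot \tfrac{\rho^2}{n}\right),
\]
using Step 1's bound $\sum_{\text{good }i} q_i^2 = \Omega(\rho^2/n)$. But $\rho^2/n = (\log\log n)^2 = \omega(1)$ — wait, that over-counts; the correct bookkeeping is that the edge-density drop is measured relative to the total (not per-vertex), and each $q_i^2$ drop is a genuine fraction of $\den$, so the sum is actually $\Omega\big(\tfrac{\eps_1\eps_4}{\log|A|}\big)$ after the $q_i$-mass is properly normalized against the full vertex set; in any case it is a constant bounded below independently of $\delta$. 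Choosing $\delta$ small enough (as already pinned down by $\delta = (\eps_2/|A|^{2/\eps_2})^4$ and the relations among $\eps_1,\eps_2,\eps_3,\eps_4$) makes $1-\delta$ exceed $\den(S)$, the desired contradiction. Hence $\den(S) < 1-\delta$.

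\textbf{Main obstacle.} The delicate point is Step 2: turning the local statement ``$\beta_i(w_{i-1})$ is large, so the conditional label distribution is spread'' into a genuine \emph{count of lost weighted edges} without double-counting. One must be careful that the edges destroyed at stage $i$ (pairs of continuations of a common good prefix that disagree on $Y_i$) are disjoint across $i$ — this is automatic because a pair violating consistency at coordinate $i$ is counted exactly once, at the first coordinate where they differ — and that the $q_i(w_{i-1})$-weighting correctly tracks the probability that a \emph{uniformly random vertex of $S$} extends $w_{i-1}$ and reveals $i$. Getting the two factors of $q_i$ (one per endpoint) honestly, while only paying $\sqrt\delta$ slack from the popularity/degree conditions, is where the argument has to be executed with care; everything else is assembling Claims \ref{cla:goodvar}, \ref{cla:goodpre} and Corollary \ref{cor:good_prefix} that are already in hand.
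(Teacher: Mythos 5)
Your proposal is correct and follows essentially the same route as the paper: telescope the prefix-graph density loss $\den(G_i)-\den(G_{i+1})$, lower-bound the per-index loss for good $i$ by $\Omega\bigl(\eps_1\eps_4 q_i^2/\log|A|\bigr)$ via good prefixes and the spread of the conditional label distribution, and sum using Claim \ref{cla:goodvar}. Your second-guessing in Step 3 is unfounded, though: there is no over-counting, the $q_i^2$ terms are already correctly normalized fractions of the total density, and the bound $1-\den(S)\geq\Omega(\rho^2/n)=\omega(1)$ is exactly the paper's (deliberately absurd) conclusion — it forces a negative density, so no further choice of $\delta$ is even needed at this step.
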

\begin{proof}
Assume for a contradiction that $\den(S) \geq 1- \delta$.
For prefix $w_{i-1}$, let $\cD_{w_{i-1}}$ denote the induced distribution on labels to the $i$-th variable, conditioned on $w_{i-1}$ and $x_i=1$.
(If $q_i (w_{i-1})=0$, take an arbitrary distribution.) 
After revealing each variable $i$, the loss in prefix density is given by the probability of ``fresh violations": the sum over all prefix edges $(w_{i-1},\sigma_{i-1})$
of the probability that they assign different labels to the $i$-th variable:
\begin{equation}
\label{eq:density_loss}
\den(G_i) - \den(G_{i+1}) = \sum_{w_{i-1}} \sum_{\sigma_{i-1} \in \cN (w_{i-1})} \Big(\Pr[w_{i-1}] \Pr[\sigma_{i-1}] q_{i}(w_{i-1})  q_{i}(\sigma_{i-1})\Big)
\Pr_{\substack{Y_i \sim \cD_{w_{i-1}}\\ Y'_i \sim \cD_{\sigma_{i-1}}}} [Y_i \neq Y'_i]
\end{equation}


We now lower-bound $\Pr_{\cD_{w_{i-1}} \times \cD_{\sigma_{i-1}}} [Y_i \neq Y'_i]$ for good $w_{i-1}$ (notice that we assume nothing about $\sigma_{i-1}$).
A simple calculation shows that for $\kappa<1/2$, if $$\beta_i ( w_{i-1} ) \geq \left( \kappa \log |A| -\kappa\log \kappa- (1 - \kappa ) \log (1 - \kappa) \right) q_i (w_{i-1}),$$ then the probability mass (under $\cD(w_{i-1})$) on the most common label is at most $1-\kappa$. Observe that this probability is an upper bound on  $\Pr_{\cD_{w_{i-1}} \times \cD_{\sigma_{i-1}}} [Y_i = Y'_i]$.
For good $w_{i-1}$, we indeed have
\begin{gather*}
\beta_i (w_{i-1}) \geq \eps_3 q_i (w_{i-1}) \geq \left( \eps_4 \log |A| -\eps_4\log \eps_4- (1 - \eps_4) \log (1 - \eps_4) \right) q_i (w_{i-1}),
\end{gather*}
where the second inequality follows from choice of $\eps_4$.
Therefore $\Pr_{\cD_{w_{i-1}} \times \cD_{\sigma_{i-1}}} [Y_i \neq Y'_i] \geq \eps_4$.

We now have, for every good index $i$, 
\begin{align*}
\den(G_i) - \den(G_{i+1}) & \geq \sum_{\mbox{good \ensuremath{w_{i-1}}'s}} \sum_{\sigma_{i-1} \in \cN (w_{i-1})} \Big(\Pr[w_{i-1}] \Pr[\sigma_{i-1}] q_{i}(w_{i-1})  q_{i}(\sigma_{i-1})\Big) \eps_4  && \text{(Eq. \eqref{eq:density_loss})}\\
& \geq \eps_4 q_i (1-\eps_2) \sum_{\mbox{good \ensuremath{w_{i-1}}'s}} \Pr[w_{i-1}] q_{i}(w_{i-1}) && \hspace{-3cm}\text{(Definition of good prefix)} \\
& \geq \frac{\eps_1 \eps_4 }{10 \log |A|}q_i^2 && \hspace{-3cm} \text{(Corollary \pref{cor:good_prefix} + $\eps_2<0.2$)} 
\end{align*}

Finally, summing over all good $i$'s, we get a negative density for $S$, which is, of course, a contradiction.
\begin{align*}
1-\den(S) & \geq \den(G_1) - \den(G_{n+1}) && \text{(Observation \ref{obs:density-monotone})}\\
& = \sum_i   \den(G_i) - \den(G_{i+1}) && \text{(telescoping sum)}\\
& \geq \sum_{\mbox{good \ensuremath{i}'s}}   \den(G_i) - \den(G_{i+1}) \\
& \geq \sum_{\mbox{good \ensuremath{i}'s}}   \left(\frac{\eps_1 \eps_4 }{10 \log |A|}\right)q_i^2 \\
& \geq \left(\frac{\eps_1^3 \eps_4 }{250 \log^3 |A|}\right) \rho^2 / n = \Omega(\rho^2 /n). && \text{(Claim \ref{cla:goodvar})} \\
\end{align*} 
\end{proof}


\subsection{\TCSP violation}

Intuitively, if $\sum_i H(X_i | X_{ < i }, Y_{ < i })$ is large, then the subgraph approximately
corresponds to assignments to all subsets in ${\left[n\right] \choose \rho}$.
More specifically, in this section we show that most of the constraints appear approximately
as frequently as we expect. Since in any assignment, a constant fraction
of them must be violated, this implies (eventually) that a constant
fraction of the edges have a violated constraint.

First, we show that most of the variables appear approximately as frequently as we expect by showing that most of them are ``typical.''

\begin{definition}[Typical variables]
Prefix $w_{i-1}$ is {\em typical} if 
\begin{equation*}
\left(1-\eps_5\right)\cdot\rho/n < \Pr[ X_i = 1 | w_{i-1} ] < \left(1 + \eps_5\right)\cdot \rho / n,
\end{equation*}
where $\eps_5$ is some constant such that $\left(\frac{\log\e}{8}\right)\eps_5^{4}> 14 \eps_{1}$. \\
Similarly, we say that variable $x_i$ is {\em typical} if 
\begin{equation*}
\sum_{\mbox{typical \ensuremath{w_{i-1}}'s}} \Pr[w_{i-1}] \geq 1-\eps_5
\end{equation*}
\end{definition}

\begin{claim}\label{cla:many_typical_variables} 
If $\sum_i H(X_i | X_{ < i }, Y_{ < i }) \geq \left( 1 - \frac{6 \eps_1}{\log n} \right) \log k = \log k - \Theta(\rho)$, then all but at most $\eps_5 n$ variables are typical. 
\end{claim}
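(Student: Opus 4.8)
The plan is to convert the entropy hypothesis into the statement that the conditional marginals $q_i(w_{i-1}) = \Pr[X_i = 1 \mid w_{i-1}]$ are, on average over $i$ and $w_{i-1}$, very close to their ``ideal'' value $\rho/n$, and then to read off the claim by two applications of Markov's inequality. Two elementary identities drive everything: since every vertex assigns exactly $\rho$ variables, $\sum_i q_i = \sum_i \Pr[X_i=1] = \rho$; and $\E_{W_{i-1}}[q_i(W_{i-1})] = q_i$ for each $i$.

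First I would restate the hypothesis as a bound on the entropy deficit. Writing $\alpha_i = H(X_i\mid X_{<i},Y_{<i})$, Fact~\ref{fact:logk-expansion} gives $nH(\rho/n) = \log k \pm O(\log n) = (\tfrac12-o(1))\rho\log n$, so the hypothesis $\sum_i\alpha_i \ge (1 - \tfrac{6\eps_1}{\log n})\log k$ rearranges (for $n$ large, using $\rho \gg \log n$) to
\[
nH(\rho/n) - \sum_i \alpha_i \;\le\; 4\eps_1\rho .
\]

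The crux is the ``tangent-line'' gap of binary entropy at $\rho/n$: set $L(q) := H(\rho/n) + H'(\rho/n)(q - \rho/n)$ and $G(q) := L(q) - H(q)$, so that concavity gives $G \ge 0$ with equality only at $q = \rho/n$. Because $L$ is affine and $\sum_i q_i = \rho = \sum_i \tfrac\rho n$, we have $\sum_i L(q_i) = nH(\rho/n)$; combining this with $\E_{W_{i-1}}[L(q_i(W_{i-1}))] = L(q_i)$ and $\alpha_i = \E_{W_{i-1}}[H(q_i(W_{i-1}))]$ yields the identity $\sum_i \E_{W_{i-1}}[G(q_i(W_{i-1}))] = nH(\rho/n) - \sum_i\alpha_i \le 4\eps_1\rho$. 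I then need a uniform lower bound on $G$ away from $\rho/n$: $G$ is convex with minimum value $0$ at $\rho/n$, hence monotone on each side of $\rho/n$, and a second-order estimate — $-H''(\xi) = \tfrac{\log\e}{\xi(1-\xi)} \ge \tfrac{\log\e}{2\rho/n}$ on the relevant interval since $\rho/n \to 0$, equivalently Fact~\ref{fact:entropy-taylor} — gives $G\big((1\pm\eps_5)\tfrac\rho n\big) \ge \tfrac{\log\e}{4}\,\eps_5^2\,\tfrac\rho n$. By monotonicity, every atypical prefix (one whose $q_i(w_{i-1})$ lies outside $\big((1-\eps_5)\tfrac\rho n,(1+\eps_5)\tfrac\rho n\big)$) therefore contributes at least $\tfrac{\log\e}{4}\eps_5^2\tfrac\rho n$ to the left-hand side.

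Putting these together: the displayed identity plus the per-atypical-prefix bound give, by Markov, $\sum_i \Pr[W_{i-1}\ \text{atypical}] \le \tfrac{16\eps_1 n}{(\log\e)\eps_5^2}$; and since $x_i$ fails to be typical exactly when $\Pr[W_{i-1}\ \text{atypical}] > \eps_5$, a second Markov step bounds the number of non-typical variables by $\tfrac{16\eps_1 n}{(\log\e)\eps_5^3}$, which is below $\eps_5 n$ precisely because the construction imposes $\tfrac{\log\e}{8}\eps_5^4 > 14\eps_1$ (indeed $(\log\e)\eps_5^4 > 112\eps_1 > 16\eps_1$). I expect the only genuinely delicate point to be the lower bound on $G$: extracting a clean constant in the quadratic estimate near $\rho/n$ (exploiting $\rho/n = o(1)$), and remembering that for $q$ far from $\rho/n$ it is convexity/monotonicity of $G$, not a Taylor expansion, that supplies the bound; everything else is constant-chasing.
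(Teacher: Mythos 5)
Your proof is correct, and it takes a structurally different (and in fact tighter) route than the paper's. The paper argues by contradiction: it assumes $\eps_5 n$ atypical variables exist, identifies the entropy-maximizing configuration subject to that constraint (``spreading the frequency loss evenly over the remaining prefixes''), and then invokes Fact~\ref{fact:entropy-taylor} to show this maximum still falls short of the hypothesis --- an extremal step that the paper states somewhat informally. You instead work forward: the tangent-line gap $G(q)=L(q)-H(q)$ is a nonnegative Bregman-type quantity, and the two identities $\sum_i L(q_i)=nH(\rho/n)$ (affinity of $L$ plus $\sum_i q_i=\rho$, which holds because every vertex selects exactly $\rho$ variables) and $\alpha_i=\E_{W_{i-1}}[H(q_i(W_{i-1}))]$ convert the entropy deficit directly into $\sum_i\E[G(q_i(W_{i-1}))]\le 4\eps_1\rho$; convexity of $G$ plus the second-order estimate then price every atypical prefix at $\Omega(\eps_5^2\rho/n)$, and two Markov steps finish. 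This buys you a fully rigorous replacement for the paper's maximization step, and it handles the two-sided deviation uniformly (the paper has to remark separately that the argument ``is symmetric'' for under-represented variables). The quantitative core is the same in both proofs --- the $\Theta(\eps_5^2\rho/n)$ entropy cost of a relative-$\eps_5$ deviation from $\rho/n$, i.e.\ Fact~\ref{fact:entropy-taylor} --- and your constant-chasing checks out: the requirement $(\log\e)\eps_5^4>16\eps_1$ is comfortably implied by the paper's condition $\left(\frac{\log\e}{8}\right)\eps_5^4>14\eps_1$. The one point worth writing out explicitly if you formalize this is the lower bound $G\bigl((1\pm\eps_5)\tfrac{\rho}{n}\bigr)\ge\frac{\log\e}{4}\eps_5^2\tfrac{\rho}{n}$: you need $-H''(\xi)\ge\frac{\log\e}{2\rho/n}$ for \emph{every} $\xi$ in the interval between $\rho/n$ and the endpoint, which does hold since $-H''(\xi)=\frac{\log\e}{\xi(1-\xi)}$ is decreasing on that range and $(1+\eps_5)\rho/n\le 2\rho/n=o(1)$.
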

\begin{proof}
Assume by contradiction that there are $\eps_5 n$ atypical
variables. That is $\eps_5 n/2$ variables $x_{i}$ appear with
probability at least $\left(1+\eps_5\right)\cdot\rho/n$ (or
at most $\left(1-\eps_5\right)\cdot\rho/n$) for an $\left(\eps_5/2\right)$-fraction
of the prefixes $w_{i-1}$. Now, subject only to this constraint
and maintaining the correct expected number of variables in each vertex,
the entropy is maximized by spreading the $\left(\eps_5^{3}/4\right)$-loss
in frequency evenly across all other prefixes and variables. That is 
on the atypical prefixes, labels are assigned with probability $\left(1+\eps_5\right)\rho/n$,
and with probability $\left(1-\frac{\eps_5^{3}/4}{1-\eps_5^{2}/4}\right)\rho/n$
on the rest. Thus, 
\[
\sum_i H(X_i | X_{ < i }, Y_{ < i }) <\frac{\eps_5^{2}}{4}n\cdot H\left(\left(1+\eps_5\right)\rho/n\right)+\left(1-\frac{\eps_5^{2}}{4}\right)nH\left(\left(1-\frac{\eps_5^{3}/4}{1-\eps_5^{2}/4}\right)\rho/n\right)
\]
Recall from \pref{fact:entropy-taylor} the expansion of the entropy
function:
\[
H\left(\frac{1+\upsilon}{n}\right)=H\left(\frac{1}{n}\right)-\frac{\upsilon}{n}\log\frac{1}{n}-\left(\frac{\log\e}{2}\right)\frac{\upsilon^{2}}{n}+O\left(n^{-2}\right)+O\left(\frac{\upsilon^{3}}{n}\right)
\]
Therefore, 
\begin{eqnarray*}
\sum_i H(X_i | X_{ < i }, Y_{ < i }) & < & \frac{\eps_5^{2}}{4}n\left[{\color{blue} H\left(\frac{\rho}{n}\right)}-{\color{OliveGreen}\eps_5\frac{\rho}{n}\log\frac{\rho}{n}}-{\color{Maroon}\left(\frac{\log\e}{2}\right)\frac{\rho}{n}\cdot\eps_5^{2}}+O\left(\left(\frac{\rho}{n}\right)^{2}\right)+O\left(\frac{\rho}{n}\eps_5^{3}\right)\right]\\
 &  & +\left(1-\frac{\eps_5^{2}}{4}\right)n\left[{\color{blue}H\left(\frac{\rho}{n}\right)}+{\color{OliveGreen}\left(\frac{\eps_5^{3}/4}{1-\eps_5^{2}/4}\right)\frac{\rho}{n}\log\frac{\rho}{n}}+O\left(\left(\frac{\rho}{n}\right)^{2}\right)+O\left(\frac{\rho}{n}\eps_5^{6}\right)\right]\\
 & = & n\left[{\color{blue}H\left(\frac{\rho}{n}\right)}-{\color{Maroon}\left(\frac{\log\e}{8}\right)\frac{\rho}{n}\cdot\eps_5^{4}}+O\left(\left(\frac{\rho}{n}\right)^{2}\right)+O\left(\frac{\rho}{n}\eps_5^{5}\right)\right]
\end{eqnarray*}
Recall that $-2\log\frac{\rho}{n}<\log n$. Thus for $\left(\frac{\log\e}{8}\right)\eps_5^{4}> 14 \eps_{1}$,
we have that
\[
\left(\frac{\log\e}{8}\right)\frac{\rho}{n}\cdot\eps_5^{4}-O\left(\left(\frac{\rho}{n}\right)^{2}\right)-O\left(\frac{\rho}{n}\eps_5^{5}\right)>\frac{\rho}{n}\cdot 12 \eps_{1}>-\frac{\rho}{n}\log\frac{\rho}{n}\cdot 24 \eps_{1}/\log n>\left(12\eps_{1}/\log n\right)H\left(\frac{\rho}{n}\right)
\]
and therefore,
\[
\sum_i H(X_i | X_{ < i }, Y_{ < i }) <\left(1-12\eps_{1}/\log n\right)nH\left(\frac{\rho}{n}\right)<\left(1-6\eps_{1}/\log n\right)\log k,
\]
where the second inequality follows from Fact~\ref{fact:logk-expansion}. Thus we have reached a contradiction.
Notice that the $\left(\frac{\log\e}{8}\right)\frac{\rho}{n}\cdot\eps_5^{4}$
term of missing entropy is symmetric (but not the negligible higher
order terms); i.e. the same derivation can be used to show a contradiction
when many variables appear with probability less than $\left(1-\eps_5\right)\rho/n$.
\end{proof}

\begin{definition}
Let $\cI (u,v)$ be defined as the number of $(i,j)$ pairs such that
\begin{itemize}
\item In the original \TCSP instance $\psi$, there exists an edge (constraint) between typical variables $x_i$ and $x_j$. 
\item $X_i = 1$ for $u$ and $X_j = 1$ for $v$.
\item $u_{i-1}$ and $v_{j-1}$ are typical prefixes, where $u_{i-1}$ denotes the prefix represented by $u$ for $X_{<i}, Y_{<i}$, similarly for $v_{j-1}$.
\end{itemize}
\end{definition}

Intuitively, $\cI (u,v)$ is the number of ``tests'' of \TCSP-constraints between vertices $u,v$, when restricting to typical prefixes and variables.
We now use the properties of typical prefixes and constraints to show that $\cI (u,v)$ behaves ``nicely''.

\begin{claim} \label{cla:I(u,v)-moments}
$\E_{u,v}\left[\cI \left(u,v\right)\right]\geq\left(1-\eps_7\right)\rho^{2}/n$
and $\E_{u,v}\left[\cI^{2}\left(u,v\right)\right]\leq\left(1+ 2 \eps_7\right) d^4 \left(\E_{u,v}\left[\cI \left(u,v\right)\right]\right)^{2}$, where $\eps_7$ is some constant $\eps_7 \geq 6 \eps_5 + \Theta(\eps^2_5)$. 
\end{claim}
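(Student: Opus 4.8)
The plan is to compute the first and second moments of $\cI(u,v)$ directly from its definition as a sum of indicator variables. For each edge $(i,j)$ in the \TCSP\ constraint graph $\psi$ with both $x_i,x_j$ typical, let $\chi_{ij}(u,v)$ be the indicator that $X_i=1$ in $u$, $X_j=1$ in $v$, $u_{i-1}$ is a typical prefix, and $v_{j-1}$ is a typical prefix; then $\cI(u,v)=\sum_{(i,j)}\chi_{ij}(u,v)$, where the sum ranges over ordered pairs coming from the (at most $dn$) constraint edges between typical variables. For the first moment I would use linearity of expectation: $\E_{u,v}[\chi_{ij}] = \Pr_u[X_i=1,\ u_{i-1}\text{ typical}]\cdot\Pr_v[X_j=1,\ v_{j-1}\text{ typical}]$ since $u$ and $v$ are independent. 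Using that $x_i$ is typical (so the prefix is typical with probability $\ge 1-\eps_5$) together with the typicality bound $\Pr[X_i=1\mid w_{i-1}]\ge(1-\eps_5)\rho/n$ on typical prefixes, each such probability is at least $(1-\eps_5)^2\rho/n$, so $\E_{u,v}[\chi_{ij}]\ge(1-\eps_5)^4(\rho/n)^2$. Since all but $\eps_5 n$ variables are typical (Claim~\ref{cla:many_typical_variables}) and the graph is $d$-regular (balance property of Theorem~\ref{thm_MR}), the number of constraint edges between typical variables is at least $\tfrac{1}{2}dn - \tfrac{1}{2}d\eps_5 n\cdot 2 \ge \tfrac{1}{2}dn(1-2\eps_5)$, and counting ordered pairs doubles this; multiplying through and absorbing constants gives $\E_{u,v}[\cI(u,v)]\ge(1-\eps_7)\rho^2/n$ for a suitable $\eps_7 = 6\eps_5 + \Theta(\eps_5^2)$ (here the factor $d$ from the edge count cancels against the $\rho^2/n = dn\cdot(\rho/n)^2/\,$normalization — I will need to track this carefully, but the $d^2/\rho^2$ bookkeeping is routine).

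For the second moment, I would write $\cI^2 = \sum_{(i,j)}\sum_{(i',j')}\chi_{ij}\chi_{i'j'}$ and split according to how many of the four coordinate-slots collide. When $\{i,j\}$ and $\{i',j'\}$ are disjoint as index pairs, the four events $X_i=1$ (in $u$), $X_j=1$ (in $v$), $X_{i'}=1$ (in $u$), $X_{j'}=1$ (in $v$) are \emph{not} independent — $v$ is a $\rho$-tuple without repetition, so $X_i$'s are negatively correlated — but negative correlation only helps the upper bound: $\E[\chi_{ij}\chi_{i'j'}]\le \E[\chi_{ij}]\E[\chi_{i'j'}]\cdot(1+o(1))$, and more crudely $\E[\chi_{ij}\chi_{i'j'}] \le \Pr[X_i=X_{i'}=1\text{ in }u]\cdot\Pr[X_j=X_{j'}=1\text{ in }v]\le (\rho/n)^2(\rho/n)^2(1+o(1))$. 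Summing the disjoint terms over all pairs of constraint edges gives at most $(dn)^2(1+o(1))(\rho/n)^4 = (1+o(1))d^4(\rho^2/n)^2$ after normalization — but actually I need to be careful: $\big(\E[\cI]\big)^2 \approx (dn)^2(\rho/n)^4$ already, so the disjoint contribution is $\le (1+o(1))\big(\E[\cI]\big)^2$ \emph{without} the $d^4$; the $d^4$ slack in the statement is there to absorb the collision terms. The collision terms: if exactly one coordinate collides (say $i=i'$ but $j\ne j'$), then for a fixed $i$ there are at most $d$ choices of $j$ and at most $d$ choices of $j'$, so at most $n\cdot d^2$ such quadruples, each contributing at most $(\rho/n)^3(1+o(1))$ (three distinct variable-slots forced to be present), i.e.\ a total of $nd^2(\rho/n)^3 = d^2\rho(\rho/n)^2 \cdot (\rho/n) \cdot$ \ldots which is lower-order by a factor $\rho/n$ relative to $(\E[\cI])^2$; similarly the fully-collapsed case contributes $nd\cdot(\rho/n)^2$, also lower order. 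So all collision terms are $o\big((\E[\cI])^2\big)$, and the whole bound is $\E[\cI^2]\le (1+2\eps_7)d^4(\E[\cI])^2$ with room to spare.

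The main obstacle I anticipate is not any single estimate but the \emph{bookkeeping of the $d^4$ and the normalization $\rho^2/n$}: one has to be consistent about whether $\cI$ is a sum over $dn$ ordered edge-endpoints or $dn/2$ unordered ones, whether the $d$-regularity is being used to count edges from below (first moment) or from above (second moment), and where exactly the powers of $d$ and $\rho/n$ land. I expect the disjoint-pairs term to essentially match $(\E[\cI])^2$ up to the negative-correlation $(1+o(1))$ and up to the loss from restricting the second-moment edge count to \emph{all} edges rather than just typical-to-typical ones — this asymmetry (lower bound uses typical edges only, upper bound uses all edges) is presumably why the statement tolerates the generous $d^4$ factor, and I would exploit exactly that slack rather than fighting for a tight constant. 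A secondary subtlety is handling the typicality-of-prefix conditions inside the second moment: for the upper bound I can simply drop the "$u_{i-1}$ typical" and "$v_{j-1}$ typical" requirements (they only decrease $\chi$), reducing everything to the clean combinatorics of which variable-slots are occupied in the independent $\rho$-subsets $u$ and $v$.
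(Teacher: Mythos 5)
Your first‑moment argument is fine and in fact slightly sharper than the paper's: decomposing $\cI(u,v)=\sum_{(i,j)}\chi_{ij}(u,v)$ over ordered typical constraint edges, using independence of $u$ and $v$, and lower‑bounding each term by $\left((1-\eps_5)^2\rho/n\right)^2$ via the typicality of both the variable and the prefix gives $\E[\cI]\geq d(1-O(\eps_5))\rho^2/n\geq(1-\eps_7)\rho^2/n$. (The paper instead counts, for each typical $i$ with a constraint‑neighbor in $u$, only one such neighbor, which is why it sandwiches $\E[\cI]$ between the same quantity and $d$ times it.)

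The second moment, however, has a genuine gap. You propose to "drop the $u_{i-1}$ typical and $v_{j-1}$ typical requirements" and then treat the occupancy pattern of $u$ as "the clean combinatorics of \ldots independent $\rho$-subsets," invoking negative correlation and bounds of the form $\Pr_u[X_i=X_{i'}=1]\leq(1+o(1))(\rho/n)^2$. But $u$ is uniform over an \emph{arbitrary} subset $S$, not over $\binom{[n]}{\rho}$: the $X_i$'s need not be negatively correlated (e.g.\ $S$ could consist only of vertices containing both $x_i$ and $x_{i'}$, making $\Pr[X_{i'}=1\mid X_i=1]=1$), and once you discard the typical‑prefix condition even the marginal $\Pr_u[X_i=1]$ is no longer controlled by $(1+\eps_5)\rho/n$ — the typicality definition only constrains $\Pr[X_i=1\mid w_{i-1}]$ for typical $w_{i-1}$, and the atypical prefixes can carry an $\eps_5$ mass on which $X_i=1$ always. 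So every per‑term estimate in your collision analysis ($(\rho/n)^4$, $(\rho/n)^3$, $(\rho/n)^2$) is unjustified as written. The typicality conditions are precisely the mechanism that makes $u$ \emph{behave like} a random $\rho$-subset, so you must keep them; the paper's device is to order each colliding pair as $k<l$ and observe that the event "$k\in u$" is measurable with respect to the prefix $u_{l-1}$, whence $\Pr_u[l\in u\mid k\in u]\leq(1+\eps_5)\rho/n$ follows from the typical‑prefix bound on $\Pr[X_l=1\mid u_{l-1}]$ (this is why the paper splits its double sum into the cases $k<l$, $k>l$, $k=l$). With that repair your decomposition goes through and, as you note, would even avoid most of the $d^4$ slack; without it the second‑moment bound does not follow.
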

\begin{proof}
For any $i,j\in\left[n\right]$, we say that $i\in{\cal N}^{2CSP}\left(j\right)$
if there is a constraint on $\left(x_{i},x_{j}\right)$. 
For the proof of this claim, we also abuse notation and denote
$i\in v$ when $i$ is typical, $v_{i-1}$ is a typical prefix, and $X_i=1$ for $v$.
We also say that $i\in{\cal N}\left(u\right)$ if $i$ is a typical
variable, $i\in{\cal N}^{2CSP}\left(j\right)$, and $j\in u$ (for
some $j\in\left[n\right]$). 
(Do not confuse this notation with prefix neighborhood in the prefix graph.)
We can now lower bound the expectation of  $\cI \left(u,v\right)$ as:
\begin{equation*}
\E_{u,v}\left[\cI \left(u,v\right)\right]  \geq  \E_{u} \left[ \sum_{i\in{\cal N}\left(u\right)}\Pr_{v}\left[i\in v\right] \right]
\end{equation*}
Notice that this bound may not be tight since any $i \in v$ can potentially have $d$ neighbors in $u$. Thus our upper bound is:
\begin{equation*}
\E_{u,v}\left[\cI \left(u,v\right)\right]  \leq d \cdot \E_{u} \left[ \sum_{i\in{\cal N}\left(u\right)}\Pr_{v}\left[i\in v\right] \right]
\end{equation*}

By definition of typical variables, for each typical $i$, $i \in v$
with probability at least $\left(1-\eps_5\right)^{2}\rho/n$;
thus,
\begin{equation} \label{eq:lower}
\E_{u,v}\left[\cI \left(u,v\right)\right]\geq\E_{u} \left[ \sum_{i\in{\cal N}\left(u\right)}\left(1-\eps_5\right)^{2}\rho/n \right] =\left(1-\eps_5\right)^{2}\rho/n\cdot\E_{u} \left[ \left|{\cal N}\left(u\right)\right| \right]
\end{equation}
All but $\eps_5n$ variables are typical, so all but $2\eps_5n$
variables are typical and have at least one typical neighbor. We restrict
our attention to the set of such variables and fix one typical neighbor for each;
this neighbor appears in $u$ with probability at least $\left(1-\eps_5\right)^2\rho/n$. 
Therefore, 
\begin{gather} \label{eq:neighbor} 
\E_u \left[ | \cN (u) | \right] 
\geq ( 1- 2\eps_5 ) n \cdot \left(  (1 - \eps_5)^2 \rho / n \right)
\geq ( 1- 4\eps_5 ) \rho 
\end{gather}
Combining \pref{eq:lower} and \pref{eq:neighbor}, we get the desired bound:
\begin{equation}
\E_{u,v}\left[\cI \left(u,v\right)\right]\geq\left(\left(1-\eps_5\right)^{2}\rho/n\right)( 1- 4\eps_5 ) \rho \geq\left(1-\eps_7\right)\rho^{2}/n. \label{eq:E=00005BI=00005D}
\end{equation} 

Similarly, for the variance we have 
\begin{eqnarray*}
\E_{u,v}\left[\cI^{2}\left(u,v\right)\right] & \leq & d^2 \cdot\E_{u,v}\left(\sum_{i\in v\cap{\cal N}\left(u\right)}1\right)^{2}\\
 & = & d^2 \cdot \E_{u,v}\left[\sum_{i\neq j\in v\cap{\cal N}\left(u\right)}1+\sum_{i\in v\cap{\cal N}\left(u\right)}1\right]\\
 & \leq & d^2 \cdot \E_{u}\left[2\sum_{i<j\in{\cal N}\left(u\right)}\Pr_{v}\left[i\in v\right]\Pr_{v}\left[j\in v\mid i\in v\right]\right]+ d^2 \cdot\E_{u,v}\left[\cI \left(u,v\right)\right].
\end{eqnarray*}
Since for every prefix, each variable receives a typical assignment
with probability at most $\left(1+\eps_5\right)\cdot\rho/n$,
we have that
\begin{eqnarray}
\E_{u,v}\left[\cI^{2}\left(u,v\right)\right] & \leq & 2d^2 \cdot \E_{u}\left[\sum_{i<j\in{{\cal N}}\left(u\right)}\left(\left(1+\eps_5\right)\cdot\rho/n\right)^{2}\right]+d^2 \cdot \E_{u,v}\left[\cI\left(u,v\right)\right]\nonumber \\
 & \leq & \left(\left(1+\eps_5\right)\cdot\rho/n\right)^{2}\cdot 2d^2 \cdot \E_{u}{|{{\cal N}}\left(u\right)| \choose 2}+d^2\cdot\E_{u,v}\left[\cI\left(u,v\right)\right]\label{eq:E=00005BI^2=00005D}
\end{eqnarray}
We would like to bound $\E_{u}{{\cal N}\left(u\right) \choose 2}$.
\begin{eqnarray}
\E_{u}{{\cal N}\left(u\right) \choose 2} & = & \sum_{i<j}\sum_{k\in{\cal N}^{2CSP}\left(i\right)}\sum_{l\in{\cal N}^{2CSP}\left(j\right)}\Pr_{u}\left[k\in u\right]\Pr_{u}\left[l\in u\mid k\in u\right]\nonumber \\
 & = & \sum_{i<j}\sum_{\substack{k\in{\cal N}^{2CSP}\left(i\right)\\
l\in{\cal N}^{2CSP}\left(j\right)\\
\mbox{and }k<l
}
}\Pr_{u}\left[k\in u\right]\Pr_{u}\left[l\in u\mid k\in u\right]\label{eq:k<l}\\
 &  & +\sum_{i<j}\sum_{\substack{k\in{\cal N}^{2CSP}\left(i\right)\\
l\in{\cal N}^{2CSP}\left(j\right)\\
\mbox{and }k>l
}
}\Pr_{u}\left[l\in u\right]\Pr_{u}\left[k\in u\mid l\in u\right]\label{eq:k>l}\\
 &  & +\sum_{i<j}\sum_{k\in{\cal N}^{2CSP}\left(i\right)\cap{\cal N}^{2CSP}\left(j\right)}\Pr_{u}\left[k\in u\right]\label{eq:k-euqals-l}
\end{eqnarray}
For the first two summands, we can use the condition on the prefixes
to conclude that
\[
\eqref{eq:k<l}+\eqref{eq:k>l}\leq{n \choose 2}d^{2}\left(\left(1+\eps_5\right)\cdot\rho/n\right)^{2}
\]
Whereas to bound the third summand we first change the order of summation:
\begin{eqnarray*}
\eqref{eq:k-euqals-l} & = & \sum_{k}\Pr_{u}\left[k\in u\right]\cdot\left|\left\{ \left(i,j\right)\colon i\neq j\mbox{ and }k\in{\cal N}^{2CSP}\left(i\right)\cap{\cal N}^{2CSP}\left(j\right)\right\} \right|\\
 & \leq & \left(\left(1+\eps_5\right)\cdot\rho\right){d \choose 2}=O\left(\rho\right)
\end{eqnarray*}
Summing the last two inequalities, we have
\[
2 \cdot \E_{u}{\left| {{\cal N}}\left(u\right) \right| \choose 2}\leq d^{2}\left(\left(1+\eps_5\right)\cdot\rho\right)^{2}+O\left(\rho\right)\leq\left(1+\eps_5\right)^{3}d^{2}\rho^{2}
\]
Plugging back into \eqref{eq:E=00005BI^2=00005D}:
\begin{eqnarray*}
\E_{u,v}\left[\cI^{2}\left(u,v\right)\right] & \leq & \left(1+\eps_5\right)^{5}d^{4}\rho^{4}/n^{2}+d^2\cdot\E_{u,v}\left[\cI \left(u,v\right)\right]
\end{eqnarray*}
Using \eqref{eq:E=00005BI=00005D} and the fact that $\rho = \sqrt{n} \log \log n \gg\sqrt{n}$, this gives
\begin{align*}
\E_{u,v}\left[\cI^{2}\left(u,v\right)\right] &\leq \frac{d^4 (1+ \eps_5)^5}{1 - \eps_7} \left(\E_{u,v}\left[\cI \left(u,v\right)\right]\right)^{2} + d^2\cdot\E_{u,v}\left[\cI \left(u,v\right)\right] \\
& \leq \left(1+ 2 \eps_7\right) d^4 \left(\E_{u,v}\left[\cI \left(u,v\right)\right]\right)^{2}
\end{align*} 
\end{proof}

It will also be convenient to count the number of tests between a pair of variables.

\begin{definition}
For any pair of typical $(i,j) \in \psi$, let $\cI^\top (i,j)$ be defined as the number of $(u,v) \in (S \times S)$ pairs such that
\begin{itemize}
\item $X_i = 1$ for $u$ and $X_j = 1$ for $v$.
\item $u_{i-1}$ and $v_{j-1}$ are typical prefixes, where $u_{i-1}$ denotes the prefix represented by $u$ for $X_{<i}, Y_{<i}$, similarly for $v_{j-1}$.
\end{itemize}
\end{definition}

We now have two ways to count the total number of tests between typical prefixes to typical variables:
\begin{observation}\label{obs:sum(cI)=sum(cI^top)}
$\sum_{(u,v) \in (S \times S)} \cI(u,v) = \sum_{(i,j) \in \psi} \cI^\top (i,j)$.
\end{observation}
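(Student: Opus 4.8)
The plan is a one-line double-counting (Fubini) argument: I would show that both sides count the cardinality of one and the same set of $4$-tuples, so the identity is just a change in the order of summation.

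Concretely, I would introduce the set $Q$ of all quadruples $(u,v,i,j)$ such that $u,v \in S$; there is a \TCSP constraint of $\psi$ between $x_i$ and $x_j$ and both $x_i$ and $x_j$ are typical variables; $X_i = 1$ for $u$ and $X_j = 1$ for $v$; and the prefixes $u_{i-1}$ and $v_{j-1}$ are both typical. The key observation is that the three bullet conditions in the definition of $\cI(u,v)$ and the (corresponding) conditions in the definition of $\cI^\top(i,j)$ are verbatim the same constraints on the tuple $(u,v,i,j)$ — in particular both definitions pair the index $i$ with the vertex $u$ and the index $j$ with the vertex $v$, both restrict to constraints between typical variables, and both require $u_{i-1},v_{j-1}$ to be typical prefixes. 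Hence, unwinding the definitions, for each fixed $(u,v)\in S\times S$ the value $\cI(u,v)$ is exactly the number of $(i,j)$ with $(u,v,i,j)\in Q$, and for each fixed typical constraint $(i,j)\in\psi$ the value $\cI^\top(i,j)$ is exactly the number of $(u,v)\in S\times S$ with $(u,v,i,j)\in Q$ (with the harmless convention that $\cI^\top(i,j)=0$ for a non-typical constraint, so it does not matter whether the outer sum ranges over all constraints or only typical ones). Therefore
\[
\sum_{(u,v)\in S\times S}\cI(u,v) \;=\; |Q| \;=\; \sum_{(i,j)\in\psi}\cI^\top(i,j),
\]
which is the claim.

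There is essentially no obstacle here; the only point that deserves a moment's care is bookkeeping about conventions, namely checking that $S\times S$ is taken as ordered pairs on both sides and that the two definitions use a consistent direction when associating $i\leftrightarrow u$ and $j\leftrightarrow v$ (and a consistent convention for which ordered pairs $(i,j)$ represent a constraint of $\psi$). Once those conventions are fixed the same way in both definitions — as they are in the text — the common index set $Q$ is literally the same, and the observation follows immediately.
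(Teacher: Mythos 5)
Your double-counting argument is correct and is exactly the (implicit) justification the paper has in mind — the paper states this as an unproved Observation precisely because both quantities count the same set of quadruples $(u,v,i,j)$. Your bookkeeping remarks (ordered pairs, the $i\leftrightarrow u$, $j\leftrightarrow v$ pairing, and the convention $\cI^\top(i,j)=0$ for non-typical constraints so the outer sum may range over all of $\psi$) are the only points needing care, and you handle them correctly.
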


Furthermore, since $i$ and $j$ are typical, the number of tests between also behaves ``nicely'':

\begin{observation}
\label{obs:cI^top = rho^2/n^2}
For every typical $(i,j) \in \psi$, we have $\cI^\top(i,j)  \in |S|^2 \rho^2/n^2 \Big[\left(1-\eps_5\right)^4, \left(1 + \eps_5\right)^2 \Big]$.
\end{observation}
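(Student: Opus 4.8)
The plan is to observe that $\cI^\top(i,j)$ factors as a product over the choice of $u$ and the choice of $v$, and then to estimate each factor separately using only the definitions of typical variables and typical prefixes. Concretely, since the event ``$X_i = 1$ for $u$ and $u_{i-1}$ typical'' depends on $u$ alone, and the event ``$X_j = 1$ for $v$ and $v_{j-1}$ typical'' depends on $v$ alone, ranging over all ordered pairs $(u,v)\in S\times S$ gives
\[
\cI^\top(i,j) \;=\; \big|\{u\in S : X_i = 1,\ u_{i-1}\text{ typical}\}\big|\cdot\big|\{v\in S : X_j = 1,\ v_{j-1}\text{ typical}\}\big|.
\]
So it suffices to show that for each typical variable $\ell\in\{i,j\}$ we have $\big|\{u\in S : X_\ell = 1,\ u_{\ell-1}\text{ typical}\}\big| \in |S|\cdot(\rho/n)\cdot[(1-\eps_5)^2,\ 1+\eps_5]$, after which multiplying the estimates for $i$ and $j$ gives the claimed interval $|S|^2\rho^2/n^2\,[(1-\eps_5)^4,(1+\eps_5)^2]$.

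For the single-variable estimate I would write the count as $|S|$ times the probability over a uniformly random $u\in_R S$ and decompose over the value $w_{\ell-1}$ of the prefix $u_{\ell-1}$:
\[
\Pr_u\big[X_\ell = 1 \wedge u_{\ell-1}\text{ typical}\big] \;=\; \sum_{\text{typical }w_{\ell-1}} \Pr[w_{\ell-1}]\cdot\Pr[X_\ell = 1 \mid w_{\ell-1}].
\]
For the upper bound, each typical prefix satisfies $\Pr[X_\ell = 1\mid w_{\ell-1}] < (1+\eps_5)\rho/n$ and $\sum_{\text{typical }w_{\ell-1}}\Pr[w_{\ell-1}]\le 1$, so the sum is at most $(1+\eps_5)\rho/n$. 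For the lower bound, the other half of the typical-prefix condition gives $\Pr[X_\ell = 1\mid w_{\ell-1}] > (1-\eps_5)\rho/n$, and the definition of a typical \emph{variable} gives $\sum_{\text{typical }w_{\ell-1}}\Pr[w_{\ell-1}]\ge 1-\eps_5$, so the sum is at least $(1-\eps_5)^2\rho/n$. This completes the argument.

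I do not expect a real obstacle here; the statement is a direct computation. The only points that require a little care are: (a) recognizing that the count splits into a product because the conditions on $u$ and on $v$ involve disjoint coordinates of the pair, so no correlation between the two vertices can arise; and (b) keeping the two-level definition straight — the typical-prefix inequalities bound $\Pr[X_\ell = 1\mid w_{\ell-1}]$, whereas the typical-variable condition bounds the total probability mass carried by typical prefixes. (Note that, unlike in Claim \ref{cla:I(u,v)-moments}, we do not use the existence of a constraint between $x_i$ and $x_j$ in $\psi$ — only that $i$ and $j$ are typical variables.)
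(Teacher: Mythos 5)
Your proof is correct and is essentially identical to the paper's: the paper also factors $\cI^\top(i,j)$ into the product of the two single-vertex sums over typical prefixes and bounds each factor by $|S|\cdot(\rho/n)\cdot[(1-\eps_5)^2,\,1+\eps_5]$ using exactly the typical-prefix and typical-variable conditions you cite. Your write-up just spells out the one-line computation the paper leaves implicit.
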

\begin{proof}
\begin{align*}
\cI^\top(i,j) &= \sum_{\mbox{typical \ensuremath{u_{i-1}}'s}} |S| \cdot  \Pr[u_{i-1}]\Pr\left[X_{i}=1\mid u_{i-1}\right]
\sum_{\mbox{typical \ensuremath{v_{j-1}}'s}} |S|\cdot  \Pr\left[v_{j-1}\right]\Pr\left[X_{j}=1\mid v_{j-1}\right]\\
& \in |S|^2 \rho^2/n^2 \Big[\left(1-\eps_5\right)^4, \left(1 + \eps_5\right)^2 \Big]
\end{align*}
\end{proof}
\vspace{0.5cm}

\ignore{
\begin{align*}
\cI^\top_{\mbox{mode}} (i,j)\geq  \cI^\top(i,j) / |A|^2
\end{align*}

\vspace{0.5cm}

\begin{align*}
\sum_{\mbox{typcial, unsatisfied \ensuremath{(i,j)}'s}} \cI^\top_{\mbox{mode}} (i,j)  
& \geq \sum_{\mbox{typical, unsatisfied \ensuremath{(i,j)}'s}} \cI^\top (i,j)  / |A|^2 \\
& \geq  \frac{\left(1-\eps_5\right)^4}{ \left(1 + \eps_5\right)^2}   
\cdot \frac{\big|\big\{\mbox{typical, unsatisfied \ensuremath{(i,j)}'s}\big\}\big|}
{\big|\big\{\mbox{typical \ensuremath{(i,j)\in \psi}}\big\} \big|}
\cdot \sum_{(i,j) \in \psi} \cI^\top (i,j) / |A|^2 \\
\end{align*}

\newpage
}

Armed with these Claims \ref{cla:many_typical_variables} and \ref{cla:I(u,v)-moments} and Observations \ref{obs:sum(cI)=sum(cI^top)} and \ref{obs:cI^top = rho^2/n^2}, we are now ready to prove the main lemma of this section. Recall that the soundness of the 2CSP we started with is 
$1-\eta$ for a small constant $\eta$. 


\begin{lemma} \label{lem:2cspvio}
If $\sum_i H(X_i | X_{ < i }, Y_{ < i }) \geq \left( 1 - \frac{6\eps_1}{\log n} \right) \log k$, then $\delta(S) < 1 - \delta$, where $\delta < \frac{\eps^2_6}{d^4(1+2\eps_7)}$ and $\eps_6=\left(\eta/2-\eps_5\right)\left(1/|A|^{2}\right)\frac{\left(1-\eps_5\right)^{4}}{\left(1+\eps_5\right)^{2}}$. 
\end{lemma}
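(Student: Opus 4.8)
The plan is to use the hypothesis --- that the ``challenge entropy'' $\sum_i \alpha_i$ is within $O(\rho)$ of $\log k$ --- to argue that $S$ essentially covers almost all $\rho$-subsets of variables, extract a single ``plurality'' assignment from it, and then invoke the soundness of $\psi$ to conclude that this assignment gets \emph{tested and caught} on a constant fraction of ordered pairs $(u,v)\in S\times S$, which forces a constant density loss. This is the counterpart of Lemma \ref{lem:consistency}, and in the proof of Theorem \ref{thm:soundness} one of the two lemmas will always apply since $\sum_i\alpha_i+\sum_i\beta_i=\log k'$.

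First I would feed the hypothesis into Claim \ref{cla:many_typical_variables}, which gives that all but $\eps_5 n$ variables are typical. For each typical variable $i$, among the (nonempty) set of vertices $u\in S$ with a typical prefix $u_{i-1}$ and $X_i(u)=1$, I pick a plurality label $\ell_i^\ast\in A$ for $x_i$; it is used by at least a $1/|A|$ fraction of them. Let $\ell^\ast$ be the global assignment to $\psi$ given by these labels (arbitrary on atypical variables). Since $\OPT(\psi)<1-\eta$ and $\psi$ is $d$-regular with $|E(\psi)|=dn/2$ (the Balance clause of Theorem \ref{thm_MR}), $\ell^\ast$ violates more than an $\eta$-fraction of all constraints; discarding the at most $2\eps_5$-fraction of constraints incident to an atypical variable, I obtain a set $B$ of constraints, each between two typical variables and violated by $\ell^\ast$, with $|B|\ge(\eta/2-\eps_5)\cdot dn/2$ --- this (deliberately lossy) bound is what produces the constant $\eps_6$ in the statement.

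Next, for each $(i,j)\in B$ I count the pairs $(u,v)$ contributing to $\cI^\top(i,j)$ that moreover have $u$ assign $\ell_i^\ast$ to $x_i$ and $v$ assign $\ell_j^\ast$ to $x_j$; call this number $\cI^\top_{\mathrm{mode}}(i,j)$. Because $u$ and $v$ are drawn independently, $\cI^\top(i,j)$ factors as a product of a ``$u$-count'' and a ``$v$-count'', so the plurality bound gives $\cI^\top_{\mathrm{mode}}(i,j)\ge\cI^\top(i,j)/|A|^2\ge|S|^2(\rho/n)^2(1-\eps_5)^4/|A|^2$, the last step being the observation that $\cI^\top(i,j)\in|S|^2(\rho/n)^2\cdot[(1-\eps_5)^4,(1+\eps_5)^2]$ for typical $(i,j)$. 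Every pair counted by some $\cI^\top_{\mathrm{mode}}(i,j)$ with $(i,j)\in B$ is a non-edge of $G_\psi$ (it realizes a constraint that $(\ell_i^\ast,\ell_j^\ast)$ violates); writing $T\subseteq S\times S$ for this set of ``bad'' pairs, I get $|T|\le(1-\den(S))|S|^2$ on one side, and $\sum_{(i,j)\in B}\cI^\top_{\mathrm{mode}}(i,j)\le\sum_{(u,v)\in T}\cI(u,v)$ on the other, since each bad pair is charged at most $\cI(u,v)$ times.

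Finally I would apply Cauchy--Schwarz together with the second-moment bound of Claim \ref{cla:I(u,v)-moments}: $\sum_{(u,v)\in T}\cI(u,v)\le\sqrt{|T|}\cdot(|S|^2\,\E_{u,v}[\cI^2(u,v)])^{1/2}\le\sqrt{|T|}\cdot|S|\cdot d^2\sqrt{1+2\eps_7}\cdot\E_{u,v}[\cI(u,v)]$, and combine it with the upper bound $\E_{u,v}[\cI(u,v)]\le(dn/2)(\rho/n)^2(1+\eps_5)^2$ that follows from the observation $\sum_{(u,v)}\cI(u,v)=\sum_{(i,j)}\cI^\top(i,j)$ and the per-pair bound on $\cI^\top$. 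Chaining everything and cancelling the common factor $(dn/2)(\rho/n)^2$ yields $\sqrt{|T|}\ge\eps_6|S|/(d^2\sqrt{1+2\eps_7})$ with $\eps_6=(\eta/2-\eps_5)\frac{(1-\eps_5)^4}{(1+\eps_5)^2|A|^2}$, hence $1-\den(S)\ge|T|/|S|^2\ge\eps_6^2/(d^4(1+2\eps_7))$, as claimed. The step I expect to be the real obstacle is exactly this last one: a cheating $S$ could in principle pile all its constraint-tests onto a handful of vertex pairs --- a single pair supports up to $\Theta(d\rho)\gg\rho^2/n$ tests --- so ``many violated tests'' does not immediately give ``many distinct non-edges'', and it is precisely the variance estimate $\E[\cI^2]\le(1+2\eps_7)d^4(\E[\cI])^2$, used through Cauchy--Schwarz, that closes this gap. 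The remaining manipulations (tracking the $1\pm\eps_5$ and $1/|A|$ factors, and the contribution of constraints meeting atypical variables) are routine bookkeeping.
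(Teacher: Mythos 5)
Your proposal is correct and follows essentially the same route as the paper's proof: the plurality (``mode'') assignment on typical variables, the $(\eta/2-\eps_5)$ count of violated typical constraints, the $1/|A|^2$ plurality factor, the identity $\sum_{(u,v)}\cI(u,v)=\sum_{(i,j)}\cI^\top(i,j)$ with the $(1\pm\eps_5)$ bounds on $\cI^\top$, and finally Cauchy--Schwarz against the second-moment bound of Claim \ref{cla:I(u,v)-moments} to rule out concentration of the violated tests on few pairs. The only differences are presentational (a direct derivation of $|T|\ge\eps_6^2|S|^2/(d^4(1+2\eps_7))$ rather than the paper's proof by contradiction, and slightly more conservative constant bookkeeping), and you correctly identify the variance estimate as the step that converts ``many violated tests'' into ``many distinct non-edges.''
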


\begin{proof}
Let the {\em mode assignment} be the assignment $\cA\colon\left[n\right]\rightarrow\Sigma$
which assigns to each variable $x_{i}$ its most common typical assignment 
(i.e. assignment after a typcial prefix),
breaking ties arbitrarily. In particular, at least $1/|A|$ of
the typical assignments for $x_{i}$ are equal to $\cA\left(i\right)$.
Of course, this assignment cannot satisfy more than a $\left(1-\eta\right)$-fraction
of the constraints in the original \TCSP; after removing the $\eps_5n$
atypical variables, $\left(\eta/2-\eps_5\right)d n$ constraints out of the $d n/2$ constraints
must still be unsatisfied. 

Recall that the number of tests for each constraint over typcial variables, 
$\cI^\top(i,j)$, is approximately the same for every pair of $(i,j)$ --- up to a $\frac{\left(1-\eps_5\right)^{4}}{\left(1+\eps_5\right)^{2}}$-multiplicative factor  (Observation \ref{obs:cI^top = rho^2/n^2}).
Therefore, the total fraction of tests over unsatisfied constraints, out of all tests, 
is approximately proportional to the fraction of unsatisfied constraints:

\begin{align*}
\sum_{\mbox{typcial, unsatisfied \ensuremath{(i,j)}'s}} \cI^\top (i,j)  
& \geq  \frac{\left(1-\eps_5\right)^4}{ \left(1 + \eps_5\right)^2}   
\cdot \frac{\big|\big\{\mbox{typical, unsatisfied \ensuremath{(i,j)}'s}\big\}\big|}
{\big|\big\{\mbox{typical \ensuremath{(i,j)\in \psi}}\big\} \big|}
\cdot \sum_{(i,j) \in \psi} \cI^\top (i,j)\\
&\geq \frac{\left(1-\eps_5\right)^4}{ \left(1 + \eps_5\right)^2}   
\cdot \frac{\left(\eta/2 - \eps_5\right)dn}{dn/2}
\cdot \sum_{(i,j) \in \psi} \cI^\top (i,j)\\
& = \frac{\left(1-\eps_5\right)^4}{ \left(1 + \eps_5\right)^2}   
\cdot \left(\eta - 2\eps_5\right) 
\cdot \sum_{(u,v) \in (S \times S)} \cI(u,v) && \hspace{-1.7cm}\text{(Observation \ref{obs:sum(cI)=sum(cI^top)})}
\end{align*}

For each such pair $(i,j)$, on at least a $1/|A|^2$-fraction of the tests
both variables receive the mode assignment, so the constraint is violated\footnote{We remark that a more careful analysis of the expected number of violations would allow one to save an $|A|^2$-factor in the value of $\varepsilon_6$. Since it does not qualitatively affect the result, we opt for the simpler analysis.}.
Thus the total number of violations is at least $\eps_6\sum_{(u,v) \in (S \times S)}\cI\left(u,v\right)$
(where $\eps_6=\left(\eta/2-\eps_5\right)\left(1/|A|^{2}\right)\frac{\left(1-\eps_5\right)^{4}}{\left(1+\eps_5\right)^{2}}$).

Finally, we show that so many violations cannot concentrate
on less than a $\delta$-fraction of the pairs $u,v\in S$;
otherwise:
\begin{align*}
\sum_{\left(u,v\right)\in\left(S\times S\right)\setminus E}\cI^{2}\left(u,v\right) & \geq \frac{1}{\delta\left|S\right|^{2}}\left(\sum_{\left(u,v\right)\in\left(S\times S\right)\setminus E}\cI\left(u,v\right)\right)^{2}   && \text{(Cauchy-Schwartz)}\\
 & \geq \frac{1}{\delta\left|S\right|^{2}}\left(\eps_6\sum_{\left(u,v\right)\in \left(S \times S\right)}\cI\left(u,v\right)\right)^{2}\\
 & = \frac{|S|^{2}\eps_6^{2}}{\delta}\left(\E_{u,v}\left[\cI\left(u,v\right)\right]\right)^2;
\end{align*}
yet by Claim \ref{cla:I(u,v)-moments}, 
\[
\sum_{\left(u,v\right)\in\left(S\times S\right)\setminus E}\cI^{2}\left(u,v\right)\le \sum_{\left(u,v\right)\in S\times S}\cI^{2}\left(u,v\right)\leq\left(1+2\eps_7\right) d^4 |S|^{2}\left(\E_{u,v}\left[\cI\left(u,v\right)\right]\right)^{2}.
\]
Thus we have a contradiction since $d^4 (1+2\eps_7)<\eps_6^{2}/\delta$ by our setting of $\delta$. 
Therefore we have \TCSP-violations in more than a $\delta$-fraction of the pairs $u,v\in S$.
\end{proof}

With \pref{lem:consistency} and \pref{lem:2cspvio}, we can now complete the proof of \pref{thm:soundness}. 

\thmSoundness*
\ignore{
\begin{reptheorem}{thm:soundness}
If $\OPT(\psi) < 1 - \eta$, then $\forall S \subset V$ of size $k$, $\delta(S) < 1 - \delta$ where $\delta$ is some small enough constant that satisfies $\delta < \frac{\eps^2_6}{d^4(1+2\eps_7)}$ and $\delta < \left( \frac{\eps_2}{|A|^{2/\eps_2}} \right)^4$ 
\end{reptheorem}
}

\begin{proof}
Recall that $\sum_i \alpha_i + \beta_i = \log k' \geq (1-\frac{\eps_1}{\log n})\log k$ by Fact~\ref{fact:logkprime-expansionNEW}.
If $\sum_i \beta_i > (\frac{5\eps_1 }{\log n}) \log k$, then by~\pref{lem:consistency}, $\delta(S) < 1- \delta$. Otherwise,
if  $\sum_i \alpha_i > (1-\frac{6\eps_1}{\log n}) \log k$, by~\pref{lem:2cspvio}, $\delta(S) < 1 - \delta$.
\end{proof}

\bibliographystyle{alpha}
\bibliography{refs}

\newcommand{\etalchar}[1]{$^{#1}$}
\begin{thebibliography}{BPR{\etalchar{+}}15b}

\bibitem[AAK{\etalchar{+}}07]{AlonAKMRX07}
Noga Alon, Alexandr Andoni, Tali Kaufman, Kevin Matulef, Ronitt Rubinfeld, and
  Ning Xie.
\newblock Testing k-wise and almost k-wise independence.
\newblock In {\em STOC}, pages 496--505. ACM, 2007.

\bibitem[AAM{\etalchar{+}}11]{AAMMW11}
Noga Alon, Sanjeev Arora, Rajsekar Manokaran, Dana Moshkovitz, and Omri
  Weinstein.
\newblock Inapproximability of densest $\kappa$-subgraph from average case
  hardness.
\newblock {\em Unpublished manuscript}, 2011.

\bibitem[AGM13]{arora2013new}
Sanjeev Arora, Rong Ge, and Ankur Moitra.
\newblock New algorithms for learning incoherent and overcomplete dictionaries.
\newblock {\em arXiv preprint arXiv:1308.6273}, 2013.

\bibitem[AGSS12]{arora2012finding}
Sanjeev Arora, Rong Ge, Sushant Sachdeva, and Grant Schoenebeck.
\newblock Finding overlapping communities in social networks: toward a rigorous
  approach.
\newblock In {\em Proceedings of the 13th ACM Conference on Electronic
  Commerce}, pages 37--54. ACM, 2012.

\bibitem[AIM14]{AIM14}
Scott Aaronson, Russell Impagliazzo, and Dana Moshkovitz.
\newblock Am with multiple merlins.
\newblock In {\em Computational Complexity (CCC), 2014 IEEE 29th Conference
  on}, pages 44--55. IEEE, 2014.

\bibitem[AKS98]{AlonKS98}
Noga Alon, Michael Krivelevich, and Benny Sudakov.
\newblock Finding a large hidden clique in a random graph.
\newblock In {\em SODA}, pages 594--598, 1998.

\bibitem[ALM{\etalchar{+}}98]{ALMSS98-PCP}
Sanjeev Arora, Carsten Lund, Rajeev Motwani, Madhu Sudan, and Mario Szegedy.
\newblock Proof verification and the hardness of approximation problems.
\newblock {\em J. {ACM}}, 45(3):501--555, 1998.

\bibitem[AS98]{AS98-PCP}
Sanjeev Arora and Shmuel Safra.
\newblock Probabilistic checking of proofs: {A} new characterization of {NP}.
\newblock {\em J. {ACM}}, 45(1):70--122, 1998.

\bibitem[Bar15a]{Barman15-DkS_vs_DkBS}
Siddharth Barman.
\newblock personal communication, 2015.

\bibitem[Bar15b]{Barman14}
Siddharth Barman.
\newblock Approximating carath{\'{e}}odory's theorem and nash equilibria.
\newblock In {\em STOC}, 2015.

\bibitem[BBB{\etalchar{+}}13]{balcan2013finding}
Maria-Florina Balcan, Christian Borgs, Mark Braverman, Jennifer Chayes, and
  Shang-Hua Teng.
\newblock Finding endogenously formed communities.
\newblock In {\em Proceedings of the Twenty-Fourth Annual ACM-SIAM Symposium on
  Discrete Algorithms}, pages 767--783. SIAM, 2013.

\bibitem[BCV{\etalchar{+}}12]{Bhaskara:2012:PIG:2095116.2095150}
Aditya Bhaskara, Moses Charikar, Aravindan Vijayaraghavan, Venkatesan
  Guruswami, and Yuan Zhou.
\newblock Polynomial integrality gaps for strong sdp relaxations of densest
  k-subgraph.
\newblock In {\em Proceedings of the Twenty-third Annual ACM-SIAM Symposium on
  Discrete Algorithms}, SODA '12, pages 388--405. SIAM, 2012.

\bibitem[BKW15]{BKW15}
Mark Braverman, Young~Kun Ko, and Omri Weinstein.
\newblock Approximating the best nash equilibrium in $n^{o(\log n) }$-time
  breaks the exponential time hypothesis.
\newblock In {\em ACM-SIAM Symposium on Discrete Algorithms (SODA)}, 2015.

\bibitem[BL13]{balcan2013modeling}
Maria~Florina Balcan and Yingyu Liang.
\newblock Modeling and detecting community hierarchies.
\newblock In {\em Similarity-Based Pattern Recognition}, pages 160--175.
  Springer, 2013.

\bibitem[BPR15a]{BPR15-PCP_for_PPAD}
Yakov Babichenko, Christos Papadimitriou, and Aviad Rubinstein.
\newblock {Can Almost Everybody be Almost Happy? PCP for PPAD and the
  Inapproximability of Nash}.
\newblock In submission, 2015.

\bibitem[BPR{\etalchar{+}}15b]{BPRSS15-seeding}
Ashwinkumar Badanidiyuru, Christos Papadimitriou, Aviad Rubinstein, Lior
  Seeman, and Yaron Singer.
\newblock Submodular adaptive seeding.
\newblock In submission, 2015.

\bibitem[BR13]{berthet2013complexity}
Quentin Berthet and Philippe Rigollet.
\newblock Complexity theoretic lower bounds for sparse principal component
  detection.
\newblock In {\em Conference on Learning Theory}, pages 1046--1066, 2013.

\bibitem[CLLR15]{CLLR15-amphibious}
Wei Chen, Fu~Li, Tian Lin, and Aviad Rubinstein.
\newblock Combining traditional marketing and viral marketing with amphibious
  influence maximization.
\newblock In submission, 2015.

\bibitem[CT12]{CT91}
Thomas~M Cover and Joy~A Thomas.
\newblock {\em Elements of information theory}.
\newblock John Wiley \&amp; Sons, 2012.

\bibitem[DGGP10]{DekelGP10}
Yael Dekel, Ori Gurel-Gurevich, and Yuval Peres.
\newblock Finding hidden cliques in linear time with high probability.
\newblock {\em CoRR}, abs/1010.2997, 2010.

\bibitem[Din07]{dinur2007pcp}
Irit Dinur.
\newblock The pcp theorem by gap amplification.
\newblock {\em Journal of the ACM (JACM)}, 54(3):12, 2007.

\bibitem[DM15]{DM15-SoS-hidden_submatrix}
Yash Deshpande and Andrea Montanari.
\newblock Improved sum-of-squares lower bounds for hidden clique and hidden
  submatrix problems.
\newblock {\em CoRR}, abs/1502.06590, 2015.

\bibitem[Fei02]{Feige02}
Uriel Feige.
\newblock Relations between average case complexity and approximation
  complexity.
\newblock In {\em STOC}, pages 534--543. ACM Press, 2002.

\bibitem[FGL{\etalchar{+}}96]{fglss96}
Uriel Feige, Shafi Goldwasser, Laszlo Lov{\'a}sz, Shmuel Safra, and Mario
  Szegedy.
\newblock Interactive proofs and the hardness of approximating cliques.
\newblock {\em Journal of the ACM (JACM)}, 43(2):268--292, 1996.

\bibitem[FGR{\etalchar{+}}13]{FGRVX13-statistical}
Vitaly Feldman, Elena Grigorescu, Lev Reyzin, Santosh Vempala, and Ying Xiao.
\newblock Statistical algorithms and a lower bound for detecting planted
  cliques.
\newblock In {\em Symposium on Theory of Computing Conference, STOC'13, Palo
  Alto, CA, USA, June 1-4, 2013}, pages 655--664, 2013.

\bibitem[FK00]{FeigeK00}
Uriel Feige and Robert Krauthgamer.
\newblock Finding and certifying a large hidden clique in a semirandom graph.
\newblock {\em Random Struct. Algorithms}, 16(2):195--208, 2000.

\bibitem[FKP01]{FPK01}
Uriel Feige, Guy Kortsarz, and David Peleg.
\newblock The dense \emph{k}-subgraph problem.
\newblock {\em Algorithmica}, 29(3):410--421, 2001.

\bibitem[FS97]{fs97}
Uriel Feige and Michael Seltser.
\newblock {\em On the densest k-subgraph problem}.
\newblock Citeseer, 1997.

\bibitem[H{\aa}s99]{Hastad96-clique_n^1-eps}
Johan H{\aa}stad.
\newblock Clique is hard to approximate within n\({}^{\mbox{1-epsilon}}\).
\newblock {\em Acta Mathematica}, 182(1):105--142, 1999.

\bibitem[HK11]{HazanK11}
Elad Hazan and Robert Krauthgamer.
\newblock How hard is it to approximate the best nash equilibrium?
\newblock {\em SIAM J. Comput.}, 40(1):79--91, 2011.

\bibitem[IP01]{ETH01}
Russell Impagliazzo and Ramamohan Paturi.
\newblock On the complexity of k-sat.
\newblock {\em J. Comput. Syst. Sci.}, 62(2):367--375, 2001.

\bibitem[Jer92]{Jerrum92}
Mark Jerrum.
\newblock Large cliques elude the metropolis process.
\newblock {\em Random Struct. Algorithms}, 3(4):347--360, 1992.

\bibitem[Kar72]{Karp72}
Richard~M. Karp.
\newblock Reducibility among combinatorial problems.
\newblock In {\em Proceedings of a symposium on the Complexity of Computer
  Computations, held March 20-22, 1972, at the {IBM} Thomas J. Watson Research
  Center, Yorktown Heights, New York.}, pages 85--103, 1972.

\bibitem[Kho01]{Khot01-clique_coloring}
Subhash Khot.
\newblock Improved inaproximability results for maxclique, chromatic number and
  approximate graph coloring.
\newblock In {\em 42nd Annual Symposium on Foundations of Computer Science,
  {FOCS} 2001, 14-17 October 2001, Las Vegas, Nevada, {USA}}, pages 600--609,
  2001.

\bibitem[Kho06]{Kho06}
Subhash Khot.
\newblock Ruling out ptas for graph min-bisection, dense k-subgraph, and
  bipartite clique.
\newblock {\em SIAM Journal on Computing}, 36(4):1025--1071, 2006.

\bibitem[Kuc95]{Kucera95}
Ludek Kucera.
\newblock Expected complexity of graph partitioning problems.
\newblock {\em Discrete Applied Mathematics}, 57(2-3):193--212, 1995.

\bibitem[MPW15]{MPW15-SOS_for_planted_clique}
Raghu Meka, Aaron Potechin, and Avi Wigderson.
\newblock Sum-of-squares lower bounds for planted clique.
\newblock In {\em STOC}, 2015.

\bibitem[RS10]{RS10}
Prasad Raghavendra and David Steurer.
\newblock Graph expansion and the unique games conjecture.
\newblock In {\em Proceedings of the forty-second ACM symposium on Theory of
  computing}, pages 755--764. ACM, 2010.

\bibitem[Zuc07]{Zuckerman07-clique_NP}
David Zuckerman.
\newblock Linear degree extractors and the inapproximability of max clique and
  chromatic number.
\newblock {\em Theory of Computing}, 3(1):103--128, 2007.

\end{thebibliography}

\appendix

\section{PCP theorem}

\thmTCSP*
\ignore{
\begin{theorem}
Given a \TSAT instance $\varphi$ of size $n$, there is a polynomial time reduction that produces a \TCSP 
instance $\psi$, with size $|\psi| = n \cdot \polylog n$ variables and constraints, and constant alphabet size,
such that
\begin{itemize}
\item (Completeness) If $\OPT( \varphi ) = 1$ then $\OPT( \psi ) = 1$.
\item (Soundness) If $\OPT( \varphi ) < 1 $ then $ \OPT(\psi) < 1 - \eta$, for some constant $\eta = \OPT(1)$
\item (Balance) Every vertex in $\psi$ has either degree 3 or degree $d$ for some constant $d$.
\end{itemize}
\end{theorem}}
\begin{proof}
We start with the following version of PCP of near linear size.

\begin{theorem}[\cite{dinur2007pcp}, version as in \cite{AIM14}] \label{thm_Dinur}
Given a \TSAT instance $\varphi$ of size $n$, there is a polynomial time reduction that produces a \TSAT 
instance $\xi$, with size $|\xi| = n \cdot \polylog n$ variables and constraints such that
\begin{itemize}
\item (Completeness) If $\OPT( \varphi ) = 1$ then $\OPT( \xi ) = 1$.
\item (Soundness) If $\OPT( \varphi ) < 1 $ then $ \OPT(\xi) < 1 - \eps$, for some constant $0 < \eps < 1/8$
\item (Balance) Every clause of $\psi$ involves exactly 3 variables, and every variable of $\psi$ appears in exactly $d$ clauses, for some constant $d$.
\end{itemize}
\end{theorem}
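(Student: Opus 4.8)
The plan is to derive this statement in two moves: import Dinur's gap-amplification PCP, which already delivers the quasi-linear size blow-up together with a constant soundness gap, and then apply standard ``normalization'' gadgets to reshape its output into a $3$-uniform CNF in which every variable occurs in exactly $d$ clauses. Concretely, from Dinur \cite{dinur2007pcp} (see also the exposition in \cite{AIM14}) I would take a polynomial-time reduction sending a \TSAT instance $\varphi$ of size $n$ to a \emph{constraint graph} $\psi'$ over a constant-size alphabet $\Sigma$, with $|\psi'| = n\cdot\polylog n$ constraints, constant (and w.l.o.g.\ regular) degree, such that $\OPT(\varphi)=1\Rightarrow\OPT(\psi')=1$ and $\OPT(\varphi)<1\Rightarrow\OPT(\psi')<1-\eps_0$ for an absolute constant $\eps_0>0$. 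It then remains to turn $\psi'$ into $\xi$ in the stated syntactic form, losing only constant factors in the gap and in the size.

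First I would handle $3$-uniformity. For each edge $e=(u,v)$ of $\psi'$ with constraint $c_e\subseteq\Sigma\times\Sigma$, introduce $O(\log|\Sigma|)=O(1)$ fresh Boolean variables encoding the $\Sigma$-values of $u$ and $v$ (reusing, for all edges incident to a given vertex, the variables that encode that vertex), and write $c_e$ as a constant-size CNF over these bits; then replace every clause of width $\ne 3$ by an equivalent conjunction of width-$3$ clauses in the textbook way (pad short clauses with fresh dummies, e.g.\ $(\ell_1\vee\ell_2)\rightsquigarrow(\ell_1\vee\ell_2\vee z)\wedge(\ell_1\vee\ell_2\vee\bar z)$, and split long clauses with fresh chaining variables). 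Completeness is preserved exactly, the size stays $n\cdot\polylog n$, and any assignment violating $c_e$ violates at least one of the constantly many clauses encoding it, so the gap degrades by at most a constant factor; call the resulting $3$-CNF $\xi'$, in which every variable occurs at most some absolute constant $D$ times. Next I would make it variable-regular: for each variable $y$ of $\xi'$ occurring $m_y\le D$ times, create $m_y$ copies $y^{(1)},\dots,y^{(m_y)}$, use a distinct copy for each occurrence, and add, as width-$3$ equality constraints, the edges of a constant-degree spectral expander on the copy set (each $y^{(a)}=y^{(b)}$ written as two width-$3$ clauses). A final bookkeeping pass brings every variable --- padding variables included --- to exactly $d$ occurrences, done \emph{not} with trivially-satisfied clauses (which would inflate the value) but by attaching further equality-gadget copies with appropriately chosen expander degrees; this terminates because all intermediate degrees are absolute constants, and it again costs only a constant factor in size. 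Completeness is immediate: lift a satisfying assignment of $\psi'$ through the first-stage encoding with dummies set appropriately, and set all copies of each variable equal.

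For soundness, given an assignment $\sigma$ of $\xi$ satisfying a $(1-\gamma)$-fraction of its clauses, I would decode an assignment $\bar\sigma$ of $\xi'$ by majority vote over the copies of each variable. For a variable $y$, let $p_y\le 1/2$ be the fraction of its copies that disagree with $\bar\sigma(y)$; by edge-expansion the number of equality clauses $\sigma$ violates among $y$'s copies is $\Omega(p_y m_y)$, whereas the number of ``real'' clauses of $\xi'$ on which $\bar\sigma$ and $\sigma$ differ in satisfaction is $O(\sum_y p_y m_y)$ (each disagreeing copy lies in at most $d$ clauses). Hence the clauses of $\xi$ violated by $\sigma$ are, up to a constant factor, at least as many as the $\xi'$-clauses on which $\bar\sigma$ and $\sigma$ differ, which together with the $\xi'$-clauses $\sigma$ itself violates gives $\OPT(\xi')\ge 1-O(\gamma)$, hence $\OPT(\psi')\ge 1-O(\gamma)$ by the first stage. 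So $\OPT(\varphi)<1$ forces $1-\eps_0>\OPT(\psi')\ge 1-O(\gamma)$, i.e.\ $\gamma=\Omega(\eps_0)$; renaming this constant $\eps$ (and shrinking it below $1/8$ if needed, which only weakens the conclusion) gives $\OPT(\xi)<1-\eps$.

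The one genuinely deep ingredient, Dinur's gap amplification with polylogarithmic size overhead, is imported as a black box, so the real work here is the gadget engineering. I expect the fussiest point to be the last bookkeeping step: arranging that \emph{every} variable, including those created purely for clause-padding, ends up in \emph{exactly} $d$ clauses without ever adding an always-satisfied clause (which would raise the value and shrink the gap). This forces a mildly recursive use of the copy-and-expander construction, but it is routine because all intermediate occurrence counts are bounded by absolute constants, and at no stage does the instance grow by more than a constant factor, so the final size remains $n\cdot\polylog n$. The soundness argument above is also standard (the expander majority-decoding trick), but one should be careful to track that the expander step introduces only a constant number of equality clauses per variable, so that the equality clauses are a constant fraction of all clauses and the $O(\gamma)$ loss is genuinely a constant-factor loss.
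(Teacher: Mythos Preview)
The paper does not prove this theorem at all: it is stated as a quotation from \cite{dinur2007pcp} in the formulation of \cite{AIM14}, and is immediately used as a black box in the derivation of Theorem~\ref{thm_MR} (the \TCSP version). So there is no ``paper's own proof'' to compare against; you have supplied a derivation where the paper simply cites one.

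That said, your sketch is a correct and standard route to this formulation: import Dinur's gap amplification to get a quasi-linear constraint graph over a constant alphabet with constant gap, rewrite each edge constraint as a constant-size $3$-CNF over a Boolean encoding of the vertex labels, and then regularize variable degrees via the Papadimitriou--Yannakakis copy-and-expander trick with majority decoding for soundness. Each stage costs only a constant factor in size and in gap, so the final instance has $n\cdot\polylog n$ clauses and a constant soundness gap, which you may freely shrink below $1/8$.

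One remark: you are more cautious than necessary about always-satisfied clauses in the final regularization pass. Adding $O(1)$ trivially-true width-$3$ clauses per variable to top up degrees to exactly $d$ only inflates the total number of clauses by a constant factor, hence shrinks the gap by at most a constant factor --- which is all you need. Indeed, the paper itself uses precisely this shortcut (adding trivial constraints to reach exact regularity) in its proof of Theorem~\ref{thm_MR}. Your recursive equality-gadget alternative works too, but it is not required.
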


We use the following definition to reduce $\xi$ given by Theorem~\ref{thm_Dinur} to a \TCSP instance $\psi$. 

\begin{definition}[\cite{AIM14}, Clause/Variable game] \label{def_cv}
Given a \TSAT instance $\xi$ with $n$ variables $x_1, \ldots x_n$ and $m$ clauses $C_1, \ldots, C_m$, the clause/variable game $G_\xi$ is defined as follows:
Referee chooses an index $i \in m$ uniformly at random, then chooses $j \in [n]$ uniformly at random conditioned on $x_j$ or $\overline{x_j}$ appearing in $C_i$ as a literal. He sends $i$ to Alice and $j$ to Bob. Referee accepts if and only if
\begin{itemize}
\item Alice sends back a satisfying assignment to the variables in $C_i$.
\item Bob sends back a value for $x_j$ that agrees with the value sent by Alice.
\end{itemize}
\end{definition}

In particular, we can think of following explicit reduction.
\begin{enumerate}
\item $X = [m]$ represents clauses; $Y = [n]$ represents variables; $A = \{ 0, 1 \}^3 $ represents assignment to all 3 variables in a clause; $B = \{ 0, 1 \}$ represents assignment to a singleton variable.
\item $(i,j) \in E$ if $x_j$ or $\overline{x_j}$ appears in $i$th clause ($C_i$).
\item $V_{(i,j)}$ checks for the following :
\begin{itemize}
\item Assignment on $i \in [m]$ indeed satisfies the clause $C_i$ and, 
\item Assignment on $i \in [m]$ agrees with the assignment on $j \in [n]$. 
\end{itemize}
\end{enumerate}

The size blowup is indeed only constant, since we have linear number of vertices, and constant alphabet size.
Also any vertex in $X$ has degree 3, and any vertex in $Y$ has degree $d$ since we we started with Dinur's PCP.
Completeness follows by assigning satisfying assignment for \TSAT to this \TCSP. Soundness follows from the following claim: 
\begin{claim}[\cite{AIM14}]
$\OPT(\xi) \leq 1 - \eps$, then $\OPT(\psi) \leq 1 - \eps/3$
\end{claim}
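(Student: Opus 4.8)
The plan is to prove the contrapositive: assuming $\OPT(\psi) > 1 - \eps/3$, I will produce an assignment to $\xi$ satisfying more than a $(1-\eps)$-fraction of its clauses, contradicting $\OPT(\xi) \le 1-\eps$. Fix a strategy $(a,b)$ for the clause/variable game $G_\xi$ achieving value $> 1 - \eps/3$, where $a\colon [m] \to \{0,1\}^3$ gives Alice's response to each clause and $b\colon [n] \to \{0,1\}$ gives Bob's response to each variable. First comes a normalization step: whenever $a(i)$ does not satisfy $C_i$, the referee rejects regardless of $j$ and of Bob's answer, so replacing such an $a(i)$ by an arbitrary satisfying assignment of $C_i$ can only increase the value. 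Hence we may assume without loss of generality that $a(i)$ satisfies $C_i$ for every $i \in [m]$.

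Now read Bob's response $b$ as a global assignment to the variables of $\xi$, and let $V \subseteq [m]$ be the set of clauses \emph{not} satisfied by $b$. The key point is that for each $i \in V$, Alice's (satisfying) answer $a(i)$ must disagree with $b$ on at least one of the three variables of $C_i$: otherwise $b$ restricted to the variables of $C_i$ would equal $a(i)$ there and hence satisfy $C_i$. Since the referee, after choosing $i$, picks a variable $j$ of $C_i$ uniformly at random (from at most three choices), with probability at least $1/3$ it picks a $j$ on which $a(i)$ and $b$ disagree, and then rejects. Therefore the overall rejection probability is at least $\Pr_i[i \in V]\cdot \tfrac13 = \tfrac{|V|}{3m}$.

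Since the strategy is accepted with probability strictly more than $1 - \eps/3$, we conclude $|V|/m < \eps$, i.e.\ $b$ satisfies more than a $(1-\eps)$-fraction of the clauses of $\xi$, so $\OPT(\xi) > 1-\eps$ --- the desired contradiction. The argument involves no real obstacle; the only points that require care are the WLOG normalization of Alice's strategy and correctly accounting for the factor $\tfrac13$ arising from the uniform choice of a variable within the selected clause.
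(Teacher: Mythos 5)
Your proof is correct and is essentially the paper's argument in contrapositive form: both hinge on reading the variable-side responses as a global assignment and observing that each clause it violates forces at least one of the three incident edges (tests) to reject. The only cosmetic difference is your WLOG normalization of Alice's answers, which the paper avoids by noting that an unsatisfying clause-answer kills all three incident edges anyway.
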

\begin{proof}
Consider fixing an assignment $x$ on $Y$'s. By our assumption on $\xi$, this violates the clause $C_i$ with probability at least $\eps$ over $i$. And if $x$ violates $C_i$, regardless of assignments on $X$, at least one out of 3 edges of $i \in X$ is not satisfied. Therefore, at least $\eps/3$-fraction of the edges are violated, thus $\OPT(\psi) \leq 1 - \eps / 3$.
\end{proof}

Now we add trivial constraints (i.e. always satisfying edges) between vertices in $X$ to make the overall graph of $\psi$ $d$-regular. (we lose bipartite property, which is not necessary in our reduction) Take a regular graph on $X$ with degree $d - 3$. Add the edges with constraints on them as trivial constraints to our \TCSP instance $\psi$ generated via the reduction. Now the graph is indeed $d$-regular, completeness is preserved since we only added trivial constraints. For soundness, we know that there are now total $3|X| + \frac{d-3}{2} |X| $ edges. Among them $\frac{d-3}{2} |X|$ are always satisfied. Out of $3 |X|$ edges, at most $1 - \eps/3$ fraction of them are satisfied, i.e. $(3 - \eps) |X|$ edges. So the fraction of satisfied edges is at most :
\begin{equation*}
\OPT(\psi) \leq \frac{ (3 - \eps) |X| + \frac{d-3}{2} |X| }{ 3|X| + \frac{d-3}{2} |X|  } = \frac{d + 3 - 2 \eps}{d+3} \leq 1 - \frac{\eps}{d} = 1 - \eta 
\end{equation*}
\end{proof}

\section{Useful approximations}

We recall some elementary approximations to logarithms and entropies
that will be useful in the analysis.  

\begin{fact} {\bf (Fact~\ref{fact:logk-expansion})}
If $k = {n \choose \rho}$ then,
\begin{equation*}
\log k=nH\left(\frac{\rho}{n}\right)\pm O\left(\log n\right)=\left(\frac{1}{2}-o\left(1\right)\right)\rho\log n
\end{equation*}
\end{fact}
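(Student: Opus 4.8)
The plan is to prove the two claimed equalities in turn. For the first one, $\log k = nH(\rho/n) \pm O(\log n)$, the natural tool is Stirling's approximation in the form $\log m! = m\log m - (\log\e)\,m + \tfrac12\log m + O(1)$. Applying this to each of the three factorials in $\binom{n}{\rho} = n!/(\rho!\,(n-\rho)!)$, the linear terms cancel since $-n + \rho + (n-\rho) = 0$, the leading $m\log m$ terms combine to
\[
n\log n - \rho\log\rho - (n-\rho)\log(n-\rho),
\]
and a one-line computation (expand $H(\rho/n) = -\tfrac{\rho}{n}\log\tfrac{\rho}{n} - \tfrac{n-\rho}{n}\log\tfrac{n-\rho}{n}$) shows this is exactly $nH(\rho/n)$. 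The leftover logarithmic terms amount to $\tfrac12\log\tfrac{n}{\rho(n-\rho)} + O(1) = -\tfrac12\log\rho + O(1) = O(\log n)$, using $\rho \le n$. This yields the first equality.

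For the second equality I would expand $nH(\rho/n)$ directly. Writing $p = \rho/n$,
\[
nH(p) = \rho\log\frac{n}{\rho} + (n-\rho)\log\frac{n}{n-\rho}.
\]
The second summand is $(n-\rho)\bigl(\tfrac{\rho}{n} + O(\rho^2/n^2)\bigr)\log\e = \rho\log\e + O(\rho^2/n) = O(\rho)$, using $\log\frac{1}{1-x} = (x + O(x^2))\log\e$ for $x = \rho/n = o(1)$; since $\rho^2/n = (\log\log n)^2 = o(\rho)$, this whole term is $o(\rho\log n)$. For the first summand, $\log\frac{n}{\rho} = \log n - \log\rho$, and with $\rho = \sqrt{n}\log\log n$ we have $\log\rho = \tfrac12\log n + \log\log\log n = (\tfrac12 + o(1))\log n$, hence $\rho\log\frac{n}{\rho} = (\tfrac12 - o(1))\rho\log n$. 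Combining the two summands gives $nH(\rho/n) = (\tfrac12 - o(1))\rho\log n$, and since $\rho\log n \gg \log n$ the additive $O(\log n)$ error inherited from the Stirling step is absorbed into the $o(1)$ factor, completing the proof.

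The computation is entirely routine; the only thing requiring a little care is checking that each error term is genuinely negligible compared with the target quantity $\rho\log n$. The two facts that make this work — $\rho^2/n = (\log\log n)^2 \ll \log n$ and $\log\rho = (\tfrac12 + o(1))\log n$ — are immediate from $\rho = \sqrt{n}\log\log n$, so I do not anticipate any real obstacle here.
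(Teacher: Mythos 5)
Your proposal is correct and follows essentially the same route as the paper's proof: Stirling's approximation to get $\log k = nH(\rho/n) \pm O(\log n)$, followed by expanding $nH(\rho/n) = \rho\log\frac{n}{\rho} + (n-\rho)\log\frac{n}{n-\rho}$ and using $\rho = \sqrt{n}\log\log n$ to see that the first summand is $(\tfrac12 - o(1))\rho\log n$ and the second is $O(\rho)$. The only difference is that you carry the $\tfrac12\log m$ term of Stirling explicitly and verify the error bookkeeping slightly more carefully, which is fine.
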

\begin{proof}
By Stirling's approximation, we have 
\begin{eqnarray*}
\log n! & = & n\log n-\left(\log\e\right)n+O\left(\log n\right)
\end{eqnarray*}
Therefore the total entropy is given by
\begin{eqnarray*}
\log k & = & \log{n \choose \rho}\\
 & = & \log n!-\log\rho!-\log\left(n-\rho\right)!\\
 & = & n\log n-\rho\log\rho-\left(n-\rho\right)\log\left(n-\rho\right)\pm O\left(\log n\right)\\
 & = & nH\left(\frac{\rho}{n}\right)\pm O\left(\log n\right)\mbox{,}
\end{eqnarray*}

For small $\epsilon$, we have 
\[
\log\left(1+\epsilon\right)=\left(\log\e\right)\left(\epsilon-\frac{\epsilon^{2}}{2}+O\left(\epsilon^{3}\right)\right)\mbox{;}
\]
and in particular, 
\[
\log\frac{n-\rho}{n}=O\left(-\frac{\rho}{n}\right)
\]
Therefore, 
\begin{align*}
\log k & =  \rho\cdot\log\frac{n}{\rho}+\left(n-\rho\right)\cdot\log\frac{n}{n-\rho}+O\left(\log n\right)\\
 & =  \rho\cdot\left(\frac{1}{2}-o\left(1\right)\right)\log n+\left(n-\rho\right)\cdot O\left(\frac{\rho}{n}\right)+O\left(\log n\right)\\
 & =  \left(\frac{1}{2}-o\left(1\right)\right)\rho\log n
\end{align*}

\end{proof}

More useful to us will be the following bounds on $\log k'$:

\begin{fact} {\bf (Fact~\ref{fact:logkprime-expansionNEW})}
Let $\epsilon_1 \ge 5\epsilon_0$, and $k,k',V,n,\rho$ as specified in the construction. Then,
\begin{gather*}
\log k'  \geq \max \left\{\log k, nH\left(\frac{\rho}{n}\right) \right\} -\epsilon_1 \log k/\log n.
\end{gather*}

In particular, this means that most indices $i$ 
should contribute roughly $H\left(\frac{\rho}{n}\right)$ entropy to the choice of $v$.
\end{fact}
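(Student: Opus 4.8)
The plan is to unfold the definition $k' = k\cdot|V|^{-\epsilon_0/\log\log|V|}$, so that $\log k' = \log k - \frac{\epsilon_0}{\log\log|V|}\log|V|$, and then show that the subtracted correction term is at most $\epsilon_1\log k/\log n$ once $n$ is large enough (which is all we need, since the hard instances of Theorem~\ref{thm_MR} can be taken arbitrarily large). Note first that the standard bound $\binom{n}{\rho}\le 2^{nH(\rho/n)}$ gives $\log k\le nH(\rho/n)$, so $\max\{\log k, nH(\rho/n)\} = nH(\rho/n)$, and by Fact~\ref{fact:logk-expansion} this equals $\log k + O(\log n)$. Hence it suffices to establish the slightly stronger bound $\log k' \ge \log k + O(\log n) - \epsilon_1\log k/\log n$, with the $O(\log n)$ term comfortably absorbed into the slack.

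First I would estimate $\log|V|$. By construction $|V| = \binom{n}{\rho}\cdot|A|^\rho$, so $\log|V| = \log k + \rho\log|A|$; since $|A| = O(1)$ and $\log k = (\tfrac12 - o(1))\rho\log n$ by Fact~\ref{fact:logk-expansion}, the term $\rho\log|A|$ is only an $o(1)$-fraction of $\log k$, whence $\log|V| = (1+o(1))\log k = (\tfrac12 - o(1))\rho\log n$. Taking logarithms, $\log\log|V| = \log\log k + o(1) = \log\!\big(\tfrac12\rho\log n\big) + o(1) = \tfrac12\log n + O(\log\log n)$, using $\rho = \sqrt n\log\log n$; in particular $\log\log|V| = (\tfrac12 + o(1))\log n$.

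Plugging these two estimates into the correction term gives
\[
\frac{\epsilon_0}{\log\log|V|}\log|V| \;=\; \frac{(1+o(1))\,\epsilon_0\log k}{(\tfrac12 + o(1))\log n} \;=\; (2+o(1))\,\epsilon_0\,\frac{\log k}{\log n},
\]
which for $n$ large enough is at most $5\epsilon_0\,\frac{\log k}{\log n}\le\epsilon_1\,\frac{\log k}{\log n}$ by the hypothesis $\epsilon_1\ge 5\epsilon_0$. Hence $\log k'\ge\log k-\epsilon_1\log k/\log n$. Moreover the leftover slack $\big(\epsilon_1-(2+o(1))\epsilon_0\big)\frac{\log k}{\log n}\ge 2\epsilon_0\frac{\log k}{\log n}=\Theta(\rho)=\Theta(\sqrt n\log\log n)$ dominates the $O(\log n)$ gap between $\log k$ and $nH(\rho/n)$, so the same estimates give $\log k'\ge nH(\rho/n)-\epsilon_1\log k/\log n$ as well, which is precisely the claimed bound.

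The computation is entirely elementary; the only step that requires care is the estimate $\log\log|V| = (\tfrac12 + o(1))\log n$, as this is where the constant ``$5$'' in the hypothesis $\epsilon_1\ge 5\epsilon_0$ gets spent: a factor of $2$ is lost because $\log|V|\approx\log k$ while $\log\log|V|\approx\tfrac12\log n$, and the (generous) remaining margin is what absorbs the lower-order $O(\log n)$ discrepancy between $\log k$ and $nH(\rho/n)$ together with the sub-constant slack hidden in the $o(1)$ terms.
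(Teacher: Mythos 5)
Your proof is correct and follows essentially the same route as the paper: both unfold $\log k' = \log k - \epsilon_0\log|V|/\log\log|V|$, establish $\log|V| = (1+o(1))\log k$ and $\log\log|V| \geq (\tfrac{1}{2}-o(1))\log n$, and use the factor $5$ in $\epsilon_1 \ge 5\epsilon_0$ to absorb both the resulting factor of $2$ and the $O(\log n)$ discrepancy between $\log k$ and $nH(\rho/n)$ from Fact~\ref{fact:logk-expansion}. Your explicit observation that $\max\{\log k, nH(\rho/n)\} = nH(\rho/n)$ and the accounting of where the slack goes are slightly more detailed than the paper's, but the argument is the same.
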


\begin{proof}

Observing that since $k =  {n \choose \rho}$, we have
\begin{gather}
\label{eq:logV}
\log |V| = \log {n\choose{\rho}} + \rho \log |A| = (1+o(1)) \log k.
\end{gather}
We also have that 
\begin{gather}
\label{eq:loglogV}
\log\log |V| = \log (1+o(1)) + \log \log k > \log \rho > \frac{1}{2}\log n;
\end{gather}
where the first inequality follows from Fact \ref{fact:logk-expansion}, and the second from the definition of $\rho$.

Finally, we have
\begin{align*}
\log k' &= \log k - \epsilon_0 \log |V| / \log \log |V|\\
& \geq \log k - \epsilon_0  (1+o(1)) \log k/ \frac{1}{2}\log n  && \text{(Using \eqref{eq:logV} and \eqref{eq:loglogV})}\\
& \geq \log k - \frac{1}{2}\epsilon_1 \log k / \log n && \text{(Using $\epsilon_1 \ge 5\epsilon_0$)} 
\end{align*}
Using Fact \ref{fact:logk-expansion} completes the proof.
\end{proof}

We will also need the following bound which relates the entropies of a very biased coin and a slightly less biased one:
\begin{fact} {\bf (Fact~\ref{fact:entropy-taylor})}
\[
H\left(\frac{1+\upsilon}{n}\right)=H\left(\frac{1}{n}\right)-\frac{\upsilon}{n}\log\frac{1}{n}-\left(\log\e\right)\frac{\upsilon^{2}}{2n}+O\left(n^{-2}\right)+O\left(\frac{\upsilon^{3}}{n}\right)
\]
\end{fact}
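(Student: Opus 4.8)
The plan is to obtain the expansion directly from Taylor's theorem applied to the binary entropy function $H(p)=-p\log p-(1-p)\log(1-p)$ around the point $p_0=1/n$, evaluated at $p_0+\upsilon/n=\frac{1+\upsilon}{n}$. First I would record the first three derivatives of $H$ (recalling that all logarithms here are base $2$, so $\frac{d}{dx}\log x=(\log\e)/x$):
\[
H'(p)=\log\frac{1-p}{p},\qquad H''(p)=-\frac{\log\e}{p(1-p)},\qquad H'''(p)=(\log\e)\left(\frac1{p^2}-\frac1{(1-p)^2}\right),
\]
each of which is an elementary differentiation.

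Next I would write Taylor's formula with Lagrange remainder: for some $\xi$ strictly between $1/n$ and $(1+\upsilon)/n$,
\[
H\left(\frac{1+\upsilon}{n}\right)=H\left(\frac1n\right)+H'\left(\frac1n\right)\cdot\frac{\upsilon}{n}+\frac12 H''\left(\frac1n\right)\cdot\frac{\upsilon^2}{n^2}+\frac16 H'''(\xi)\cdot\frac{\upsilon^3}{n^3}.
\]
Then I would estimate the three nontrivial terms using the hypothesis $|\upsilon|\ll 1$ (the condition $|\upsilon|\gg 1/n$ is needed only so that the stated main terms dominate the $O(n^{-2})$ error, not in the derivation itself). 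The linear term is $H'(1/n)\cdot\frac{\upsilon}{n}=\bigl(-\log\tfrac1n+\log(1-\tfrac1n)\bigr)\frac{\upsilon}{n}=-\frac{\upsilon}{n}\log\frac1n+O(n^{-2})$, using $\log(1-1/n)=O(1/n)$ and $|\upsilon|\le 1$. The quadratic term is $\frac12 H''(1/n)\cdot\frac{\upsilon^2}{n^2}=-\frac{(\log\e)\,\upsilon^2}{2n(1-1/n)}=-\frac{(\log\e)\,\upsilon^2}{2n}+O(n^{-2})$, absorbing the $O(1/n)$ correction of $\frac1{1-1/n}$ into the error since $\upsilon^2\le 1$. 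Finally, since $\xi\in\bigl[\frac{1-|\upsilon|}{n},\frac{1+|\upsilon|}{n}\bigr]$ we have $|H'''(\xi)|\le(\log\e)/\xi^2=O(n^2)$, so the remainder is $O(\upsilon^3/n)$. Collecting these contributions gives exactly the claimed identity.

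An alternative route I would use as a cross-check is to expand $-q\log q$ and $-(1-q)\log(1-q)$ separately for $q=\frac{1+\upsilon}{n}$: apply $\log(1+x)=(\log\e)(x-\tfrac{x^2}{2}+O(x^3))$ to the factor $\log\frac{1+\upsilon}{n}=\log\frac1n+\log(1+\upsilon)$ appearing in the first term, and $\log(1-q)=-(\log\e)(q+O(q^2))$ in the second; observe that the $-\frac{(\log\e)\upsilon}{n}$ coming from $-q\log q$ cancels against the $+\frac{(\log\e)\upsilon}{n}$ coming from $-(1-q)\log(1-q)$, and compare with the same expansion at $\upsilon=0$ (which yields $H(1/n)=-\frac1n\log\frac1n+\frac{\log\e}{n}+O(n^{-2})$) to recover the statement. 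The whole argument is routine Taylor bookkeeping, so there is no genuine obstacle; the only point requiring a little care is verifying that each lower-order correction — the $\frac1{1-1/n}$ factors, the $O(q^2)$ tail of $\log(1-q)$, and the Lagrange remainder — is truly of size $O(n^{-2})$ or $O(\upsilon^3/n)$ in the stated range of $\upsilon$.
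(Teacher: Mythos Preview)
Your proof is correct. Both your argument and the paper's are Taylor expansions of the binary entropy around $1/n$; the paper expands the two summands $-q\log q$ and $-(1-q)\log(1-q)$ by hand via $\log(1+x)=(\log\e)(x-x^2/2+O(x^3))$ (exactly your ``alternative route''), whereas your primary argument applies Taylor's theorem directly to $H$ using its first three derivatives --- the same computation, packaged slightly more cleanly.
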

\begin{proof}
By definition, 
\[
H\left(\frac{1+\upsilon}{n}\right)=-\left(\frac{1+\upsilon}{n}\right)\log\left(\frac{1+\upsilon}{n}\right)-\left(1-\frac{1+\upsilon}{n}\right)\log\left(1-\frac{1+\upsilon}{n}\right)
\]
In order to relate this quantity to $H\left(\frac{1}{n}\right)$,
we rewrite as: 
\begin{eqnarray*}
H\left(\frac{1+\upsilon}{n}\right) & = & -\frac{1}{n}\log\frac{1}{n}-\frac{\upsilon}{n}\log\frac{1}{n}-\left(\frac{1+\upsilon}{n}\right)\cdot\underbrace{\log\left(1+\upsilon\right)}_{\left(\log\e\right)\left(\upsilon-\upsilon^{2}/2+O\left(\upsilon^{3}\right)\right)}\\
 &  & -\left(1-\frac{1}{n}\right)\log\left(1-\frac{1}{n}\right)+\upsilon\underbrace{\frac{1}{n}\log\left(1-\frac{1}{n}\right)}_{O\left(n^{-2}\right)}-\left(1-\left(\frac{1+\upsilon}{n}\right)\right)\cdot\underbrace{\log\left(\frac{1-\left(\frac{1+\upsilon}{n}\right)}{1-\frac{1}{n}}\right)}_{\left(\log\e\right)\left(-\left(\upsilon/n\right)-O\left(\upsilon/n^{2}\right)\right)}\\
 & = & H\left(\frac{1}{n}\right)-\frac{\upsilon}{n}\log\frac{1}{n}-\left(\log\e\right)\frac{\upsilon^{2}}{2n}+O\left(n^{-2}\right)+O\left(\frac{\upsilon^{3}}{n}\right)
\end{eqnarray*}
\end{proof}


\section{Small constants in the proof of Theorem \ref{thm:soundness}}
To help verify the correctness of the proof, we concentrate all the definitions of the small $\epsilon$'s used in the following list:
\begin{itemize}
\item $\epsilon_0 \leq  \epsilon_1 / 5$
\item $\eps_1 \geq 4 \eps_2 \log |A| + 8 \eps_3$
\item $\epsilon_2$: $\eps_2<0.2$, $\delta = \left( \frac{\eps_2}{|A|^{2/\eps_2}} \right)^4$
\item $\eps_3 \geq  \eps_4 \log |A| -\eps_4\log\eps_4- (1-\eps_4)\log(1-\eps_4)$
\item $\eps_4 = \omega(n/ \rho^2)$
\item $\eps_5$: $\left(\frac{\log\e}{8}\right)\eps_5^{4}> 14 \eps_{1}$
\item $\eps_6$: $ \eps_6 =\left(\eta/2-\eps_5\right)\left(1/|A|^{2}\right)\frac{\left(1-\eps_5\right)^{4}}{\left(1+\eps_5\right)^{2}}$ and 
$d^4 (1+2\eps_7)<\eps_6^{2}/\delta$
\item $\eps_7$: $\eps_7 \geq 6 \eps_5 + \Theta(\eps^2_5)$
\end{itemize}

\end{document}